\documentclass{article}
\usepackage{graphicx} 
\usepackage{pgf-umlsd} 
\usepackage{amssymb}
\usepackage{amsmath} 
\usepackage{enumitem}
\usepackage{hyperref}
\usepackage{url}
\setlist[itemize]{noitemsep} 
\setlist[enumerate]{noitemsep}

\newtheorem{definition}{Definition}[section]
\newtheorem{theorem}{Theorem}[section]
\newtheorem{lemma}{Lemma}[section]

\newenvironment{proof}{\textsf{Proof}.}{\hfill$\Box$}

\newcommand{\bits}[1]{{\{0,1\}^{#1}}}

\newcommand{\keygen}[0]{\mathsf{G}}
\newcommand{\sig}[0]{\mathsf{S}}
\newcommand{\sigver}[0]{\mathsf{V}}

\newcommand{\pubkey}[0]{\mathsf{pk}}
\newcommand{\prikey}[0]{\mathsf{sk}}
\newcommand{\mpubkey}{\widetilde{\pubkey}}

\newcommand{\hfgen}[0]{\mathsf{G}} 
\newcommand{\hfunc}[0]{\mathsf{H}} 

\newcommand{\setup}[0]{\mathsf{Set}}
\newcommand{\commit}[0]{\mathsf{Com}}
\newcommand{\open}[0]{\mathsf{Open}}
\newcommand{\commitc}[0]{{\mathsf{Com}^{c}}}

\newcommand{\param}[0]{\mathsf{par}}

\newcommand{\unisrv}[0]{\mathsf{US}}

\newcommand{\sthash}[0]{{h_\mathsf{st}}}
\newcommand{\txhash}[0]{{h_\mathsf{tx}}}

\newcommand{\auxd}[0]{\mathsf{aux}}

\newcommand{\univer}[0]{\mathcal{V}}

\newcommand{\certver}[0]{{\mathcal{V}_\mathsf{cert}}}

\newcommand{\pinc}[0]{{\pi_{\mathsf{inc}}}}

\newcommand{\prob}[0]{\mathsf{Pr}} 

\title{ The Unicity Execution Layer }
\author{
Ahto Buldas$^{1}$ \and
Dirk Draheim$^{2}$ \and
Mike Gault$^{3}$ \and
Risto Laanoja$^{3}$ \and
Vladimir Rogojin$^{3}$ \and
Ahto Truu$^{3}$
\\[1em]
$^{1}$ Tallinn University of Technology, Estonia, ahto.buldas@taltech.ee\\
$^{2}$ Tallinn University of Technology, Estonia, dirk.draheim@taltech.ee\\
$^{3}$ Unicity Labs OÜ, Estonia, ahto.truu@unicity-labs.com
}
\date{\today}

\begin{document}

\maketitle

\begin{abstract}
This paper introduces the Unicity Execution Layer, a modular component of the Unicity framework enabling secure off-chain transactions while maintaining trustless double-spending prevention. We present a formal security model where token ownership is represented by public keys and transfers require digital signatures. We prove three fundamental security properties: (1) no double-spending--each token state can be spent at most once, (2) no blocking--only the legitimate owner can prevent a token from being spent, and (3) service-side privacy--the Unicity Service cannot link transactions with the same token. The user-side privacy is addressed by introducing generalized multi-public-key signature schemes that allow one secret to generate multiple unlinkable public keys, and interactive and non-interactive concrete instantiations, enabling private transactions with stable public identity with minimal key management overhead.
\end{abstract}

\section{Introduction}

Blockchain technology has revolutionized digital asset management by enabling trustless peer-to-peer transactions without relying on centralized authorities. However, traditional blockchain architectures face fundamental scalability limitations that hinder their adoption for high-throughput applications. The core bottleneck stems from the fact that the ``security'' depends on the number of participating validators, which all have to participate in consensus on ordering, re-execute transactions, and store every produced block.

This paper introduces \emph{Unicity}, a novel blockchain infrastructure designed to enable secure off-chain transactions while maintaining the trustless guarantees of traditional blockchains. The key insight underlying Unicity is that the vast majority of blockchain operations—transaction execution, smart contract processing, and state transitions—can be moved off-chain, leaving only the essential double-spending prevention mechanism on-chain. This also simplifies on-chain operations, making efficient and self-authenticating implementations possible.

By minimizing the data that must be processed by the consensus layer, Unicity achieves linear scalability while preserving the security properties that make blockchains trustworthy. The system consists of three hierarchical layers: the Consensus Layer provides decentralized agreement and cryptoeconomical incentives, the Aggregation Layer maintains a distributed append-only dictionary of spent token states, and the Execution Layer handles peer-to-peer transaction processing and business logic.

Our approach differs fundamentally from existing scaling solutions. Rather than optimizing transaction throughput within the constraints of traditional blockchain architectures, Unicity reconceptualizes the shared server-side functionality as a minimal, trustless service which prevents double-spending. This architectural shift enables transactions to occur off-chain and, with hardware-based unicity-proving functionality, completely offline, while maintaining cryptographic guarantees against fraud.

The contributions of this paper include: (1) a formal security model for off-chain transactions with on-chain double-spending prevention, modeled as trusted service in this paper's scope, (2) cryptographic protocols ensuring transaction privacy and preventing attacks of blocking token spending, and (3) formal proofs of these security properties.

\paragraph{Paper Structure}
After introduction, Sections~\ref{sec:infrastructure}--\ref{sec:privacy} present the core Unicity infrastructure with signature-based token ownership ($\sigver(\pubkey, m, \sigma) = 1$), proving three core security properties: no double-spending, no blocking, and service-side privacy (transaction unlinkability).

Section \ref{sec:wallet-privacy} addresses user-side privacy through multi-public-key (MPK) signature schemes: a theoretical framework where one secret generates multiple unlinkable public keys, followed by concrete instantiation for ECDSA, and a protocol enabling efficient private transactions with persistent public identity.

\section{System Overview}\label{sec:overview}

\subsection{Motivation}

\begin{figure}[!htbp]
 \begin{minipage}[h]{0.55\linewidth}
        \centering
                \includegraphics[width=\columnwidth]{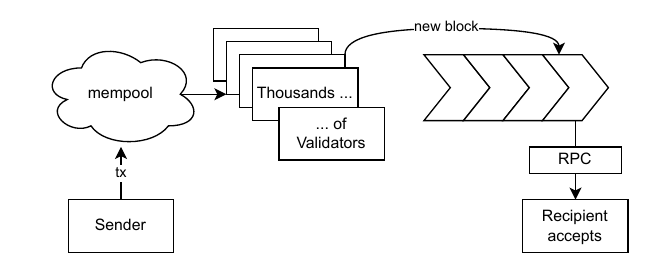}
                \caption{Data flow of a typical blockchain.} \label{fi:traditional}
    \end{minipage}
    \hfill
    \begin{minipage}[h]{0.44\linewidth}
        \centering
                \includegraphics[width=\columnwidth]{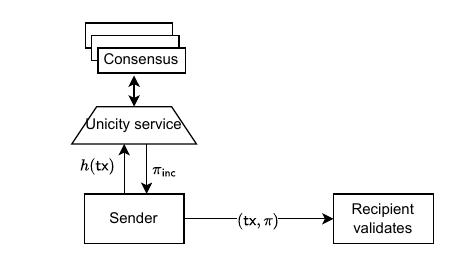}
                \caption{Data flow of Unicity transactions.}\label{fi:unicity}
    \end{minipage}
\end{figure}

Traditional blockchain architectures, illustrated in Figure~\ref{fi:traditional}, require every validator node to process all transactions sequentially. This design creates several fundamental bottlenecks: (1) \emph{computational overhead} from validating every transaction, (2) \emph{storage requirements} that grow linearly with transaction history, and (3) \emph{bandwidth limitations} from broadcasting all transaction data to every node. These constraints result in throughput limitations measured in tens of transactions per second for major blockchain networks, and transaction processing latency (time to finality) which is not suitable for interactive use cases.

Existing scaling approaches attempt to optimize within these architectural constraints. Layer-2 solutions batch transactions but still require periodic settlement on the main chain. Sharding distributes computation, but introduces complex cross-shard communication protocols. Both approaches face fundamental trade-offs between decentralization, security, and scalability.

Unicity takes a fundamentally different approach by recognizing that most blockchain operations can be moved off-chain and performed by the party who is naturally interested in the validity of the transaction, the recipient (relying party). The key insight is that central coordination is required only to prevent double-spending---the creation of multiple valid transactions spending the same digital asset. Other functions, including transaction execution, smart contract processing, state updates, and data availability, can be provided by interested parties without global agreement.

Figure~\ref{fi:unicity} illustrates the Unicity transaction flow. Rather than broadcasting full transaction data to all network participants, Unicity maintains only a cryptographic commitment to spent asset states.

Unicity as a transacting framework provides three essential guarantees: (1) \emph{unique spending}—a digital asset can be spent no more than once, (2) \emph{non-blocking}—only the legitimate owner of an asset can mark this asset as spent, and (3) \emph{privacy}—transaction details remain confidential between participants, hidden from the Unicity Service.

By decoupling transaction execution from consensus, Unicity enables new use cases previously impractical on traditional blockchains. Transactions can occur entirely off-chain, requiring no network connectivity at the time of execution. Multiple parties can transact directly using any communication channel, from internet protocols to physical media exchange. The resulting system scales linearly with the number of participants rather than facing the quadratic complexity growth of traditional blockchain networks.

\subsection{Architecture}

Unicity employs a hierarchical architecture that provides top-to-bottom decentralization and scalability, as illustrated in Figure~\ref{fig:layers}.

\begin{figure}[!htbp]
    \centering
        \begin{tikzpicture}
            \draw (0,0) rectangle (4,1) node[midway] {Consensus Layer};
            \draw (2,0) -- (2,-0.5);
            \draw (0,-1.5) rectangle (4,-0.5) node[midway] {Aggregation Layer};
            \draw[dashed] (1,-2) -- (3,-2);
            \draw (2,-1.5) -- (2,-2.5);
            \draw (0,-3.5) rectangle (4,-2.5) node[midway] {Execution Layer};
        \end{tikzpicture}
    \caption{Layered, hierarchical architecture of the Unicity Network.}\label{fig:layers}
\end{figure}

The three layers serve distinct functions:

\begin{itemize}
\item \textbf{Consensus Layer} provides decentralized agreement and finality through a combination of Proof-of-Work mining, providing robust decentralization, and BFT consensus with fast and deterministic finality. This layer verifies the integrity of the Aggregation Layer's state transitions and serves as the root of trust for the entire system.

\item \textbf{Aggregation Layer} implements the Unicity Service, maintaining a global append-only registry of spent token states. It provides inclusion and non-inclusion proofs, processes state certification requests, and with these services allows Execution Layer to avoid the risk of double-spending. The layer is sharded for scalability, clustered for high availability, and uses cryptographic consistency proofs to maintain trustless operation.

\item \textbf{Execution Layer} handles transaction processing, smart contract execution (implemented through orchestrated execution of programmable stateful spending conditions, called \emph{predicates}, discussed in a follow-up paper\cite{predicates}; and business logic. This layer operates off-chain and is managed by users and agents who are interested parties in transaction validation and ordering.
\end{itemize}

\subsection{Unicity Service Protocol}

The Unicity execution framework relies on the \emph{Unicity Service} that maintains a global, append-only registry of spent token states. Each digital token has an associated state hash that uniquely identifies its current ownership and transaction history. When a token owner wishes to transfer ownership, they create a signed transaction that references the current state and specifies the new owner.

\paragraph{Protocol Participants}
Within this architecture, the protocol involves three entities:

\begin{itemize}
\item \textbf{Token Owners} possess digital assets represented as tokens with unique state hashes. Owners sign transactions to transfer ownership and request certification from the Aggregation Layer.

\item \textbf{Unicity Service} (provided by the Aggregation Layer) maintains a data store, modeled in this paper as key-value store $R$ where keys are derived from public keys and state hashes, and the values are transaction hashes. The service accepts certification requests and provides inclusion proofs for registered transactions.

\item \textbf{Recipients} are the relying parties who receive token transfers and must verify the authenticity of transactions cryptographically before accepting ownership.
\end{itemize}

\paragraph{Transaction Structure}
Each transaction $T = (\sthash, D)$ consists of:
\begin{itemize}
\item $\sthash$: the current state hash of the token being transferred
\item $D = (\pubkey', x, \auxd')$: transaction data containing the recipient's public key $\pubkey'$, a random nonce $x$, and auxiliary data of the next state $\auxd'$
\end{itemize}

\noindent To prevent information leakage, the transaction data $D$ is committed using a perfectly hiding commitment scheme, producing a transaction data hash $\txhash = \commitc(H(D))$. The sender signs $H(\sthash, \txhash)$ and submits a certification request $Q = (\pubkey, \sthash, \txhash, \sigma)$ to the Unicity Service.

\paragraph{Double-Spending Prevention}
The Unicity Service processes certification requests by checking that (1) the digital signature is valid and (2) the key $H(\pubkey, \sthash)$ has not been previously registered. If both conditions hold, the service records the mapping $R[H(\pubkey, \sthash)] \gets \txhash$ and returns an inclusion proof $\pinc$. This mechanism ensures that each token state can be spent at most once.

The transaction flow is illustrated by the sequence diagram (Fig.~\ref{fi:unicity-transaction}).

\begin{figure}[!htb]
    \begin{center}
  \begin{sequencediagram}
      \newthread{S}{Sender}
      \newinst[2]{R}{Recipient}
      \newinst[2]{US}{Unicity Service}

      \begin{call}{S}{Obtain token}{S}{}
      \end{call}

      \prelevel
      \prelevel
      \begin{call}{R}{Generate keypair}{R}{}
      \end{call}

      \begin{messcall}{R}{Public key $\pubkey'$}{S}
      \end{messcall}

      \postlevel
      \begin{call}{S}{\shortstack[l]{
                Create transaction $T$\\
                Sign with own $\prikey$}}{S}{}
      \end{call}

      \begin{messcall}{S}{Certification request $Q$}{US}
          \postlevel
          \begin{call}{US}
            {\shortstack[l]{
              Check signature\\
              Check not spent\\
              Record spent state}
            }{US}{}
          \end{call}
      \end{messcall}

      \prelevel
      \begin{messcall}{US}{Inclusion proof $\pinc$}{S}
      \end{messcall}

      \begin{messcall}{S}{Certified transaction}{R}
          \postlevel
          \begin{call}{R}
            {\shortstack[l]{
              Verify signature\\
              Verify proof\\
              Accept token}
            }{R}{}
          \end{call}
      \end{messcall}
  \end{sequencediagram}
  \caption{Simplified Unicity transaction flow.}\label{fi:unicity-transaction}
    \end{center}
\end{figure}

The formal analysis that follows demonstrates that this construction provides strong security guarantees against both double-spending and blocking attacks while preserving transaction unlinkability.

\subsection{State of the Art}

\noindent\textbf{Layer 2 rollups} are secondary protocols that are intended to solve the scalability and fee issues in base (Layer 1) blockchains. 
These protocols are run by any party in parallel with the base blockchain without compromising the overall security. Layer 2 networks can process large volumes of transactions off-chain in batches and then  communicate a summary digest of the batch to the base layer, which is relatively easy to verify for the base blockchain. This saves the computing power and reduces the fees to be paid in the base blockchain. There are two types of Layer 2 rollups:
\begin{itemize}
\item \emph{Optimistic rollups} \cite{CCAM24,BBBB22,Opti26,Base26} in which Layer 1 assumes by default that all committed transactions are valid and offer an arbitration protocol for detecting and proving fraudulent transactions later. 
\item \emph{Zero-knowledge rollups} \cite{Starknet,zkSync,Linea} in which cryptographic computational integrity proof for a batch of transactions is presented to Layer 1. The proof is easy to verify in Layer 1, but its generation may be a resource consuming computational process. 
\end{itemize}
One of the drawbacks of Layer 2 networks is the settlement time (to process and confirm a transaction). Although, rollups mostly offer  soft confirmations of transactions in just a few seconds, the true Layer 1 finality can take a few minutes up to a week depending on the network. In the zk-rollups, the proof presented to Layer 1 has to involve the verification of all rules of  transaction processing and therefore, the proof generation is resource consuming. 
\medskip

\noindent\textbf{RGB smart contracts} \cite{RGB,Ihan24,LearnRGB,IntroRGB,AllRGB} are a private Layer 2/Layer 3 system for Bitcoin and the Lightning Network. Instead of storing data on the base blockchain, data is processed off-chain, so that the state history of contracts and data are kept off-chain and validated only by users interested in a particular contract (so called \emph{client-side validation}). Smart contract states are locked to specific Bitcoin UTXOs (Unspent Transaction Outputs) which have to be spent during the next transaction with the asset related to the smart contract. This technique prevents double-spending and is referred to as \emph{single-use seals}.  
The key benefits are privacy (as data is held off-chain, it is not possible to access by third parties) and Bitcoin compatibility 
(users can create their own tokens and rely on Bitcoin’s security and the speed of the Lightening Network without the need to create a new blockchain).  

A main drawback of RGB smart contracts is that the sender and the receiver must both be online and interact directly while performing a transfer — the receiver generates and shares a UTXO invoice before the asset can be sent securely. 
Hence, the transfers are not fully asynchronous which may create friction compared to traditional blockchains and hence, implementing decentralized applications  remains difficult. 
Compared to the Unicity framework, RGB lacks a consensus-anchored global non-inclusion oracle and formal exact-security proofs.
\medskip

\noindent\textbf{CoinJoin and Stealth Addresses}\cite{CoinJoin} are techniques for ensuring privacy of blockchain transactions. CoinJoin obscures the link between senders and receivers by combining inputs and outputs into a single transaction, while stealth addresses (invented during the Dark Wallet project around 2013) solve the problem of address reuse. 
Both methods aim to conceal 
the chain of ownership history of the same asset in the blockchain. 
In the Unicity framework, this goal is achieved by design as we will prove in Sec.\ref{sec:privacy}. 
\medskip

\noindent\textbf{Cryptographic Accumulators} \cite{BeMa93,CaLy02,Lipm12,BCCK25,BoBF19} aggregate a large set of data into a single, succinct digest and allow verifiers to check if an element belongs to the set using a short proof (or witness). The proof can be verified without revealing other aggregated elements.

In the Unicity framework, the Unicity Service uses a cryptographic accumulator without trusted setup currently implemented as a sparse Merkle tree. The main reason is that sparse Merkle trees is the most natural way of aggregating a function (a key-value store) rather than a set. We will show in Sec.\ref{rasaset} that while aggregating just a set prevents double-spending, but is insufficient for other security properties. 
Other types of accumulators may be considered as a future work.

\section{Preliminaries and Notation}

\subsection{Probabilities}

In the paper, we only use \emph{finite probability spaces} that are defined as pairs $(\Omega, \prob)$ so that $\Omega$ is a finite set and $\prob$ is a function from the powerset (the set of all subsets) of $\Omega$ to the interval $[0,1]$ of real numbers so that:
\begin{enumerate}
\item $\prob(\Omega)=1$
\item $\prob(A\cup B)=\prob(A) + \prob(B)$ for every $A,B\subseteq \Omega$ with $A\cap B=\emptyset$
\end{enumerate}

\noindent The set $\Omega$ is called \emph{sample set} and $\prob$ is called \emph{probability function}. The subsets of $\Omega$ are called \emph{events}. For the probability $\prob[\{\omega\}]$ of a singleton subset, we use the shorthand notation $\prob[\omega]$.
By \emph{random variable} we mean any function $X\colon \Omega \rightarrow R$ where $R$ is called the \emph{range} of the random variable. If $x\in R$ we use the notation $\prob[X=x] = \prob[X^{-1}(x)]$, where $X^{-1}(x)=\{\omega\in\Omega\colon X(\omega)=x\}$ is the $X$-preimage of $x$.

As $\Omega$ is finite, we can express the probability $\prob(A)$ of any event $A$ as the sum $\prob[A]=\sum_\omega \prob[w]\cdot [w\in A]$, where
$[w\in A]$ is the \emph{Iverson symbol}, i.e. $[w\in A]\in\{0,1\}$ and $[w\in A]=1$ iff $w\in A$. We also use Iverson symbol in a more general case for any mathematical statements $\mathcal{A}$ so that $[\mathcal{A}]=1$ iff $\mathcal{A}$ holds. For example, $\prob[X=x]=\sum_\omega \prob[w]\cdot [X(w)=x]$. Note that $[\mathcal{A}\wedge \mathcal{B}]=[\mathcal{A}]\cdot[\mathcal{B}]$ for any two statements
$\mathcal{A}$ and $\mathcal{B}$.

By \emph{probability distribution} of a random variable $X\colon \Omega \rightarrow R$ we mean the function $\mathcal{D}_X\colon R\rightarrow [0,1]$ such that $\mathcal{D}_X(x)=\prob[X=x]$ for every $x\in R$.
If $\mathcal{D}_X$ is a constant, i.e.
$\mathcal{D}_X(x)=\frac{1}{|R|}$ for every $x\in R$, then we say that the distribution is \emph{uniform}.
We use the notation $X\gets R$ to denote that $X$ is a uniformly distributed random variable with range $R$ and also to say that $X$ is \emph{uniformly sampled} from $R$.
A random variable $X\colon \Omega \rightarrow R$ is \emph{$t$-time sampleable} if there is a $t$-time probabilistic Turing machine $\mathsf{M}$
with all outputs values in $R$ and every output value $x\gets \mathsf{M}$ occurs with probability $\mathcal{D}_X(x)$, i.e. the output distribution of
$\mathsf{M}$ is $\mathcal{D}_X$.

If $X\colon \Omega\rightarrow R_X$ and $Y\colon \Omega\rightarrow R_Y$ are
random variables, $x\in R_X$ and $y\in R_Y$, then we use the notation
$\prob[X=x, Y=y] = \prob[X^{-1}(x)\cap Y^{-1}(y)]=\sum_\omega \prob[\omega]\cdot[X(\omega)=x\,\wedge\, Y(\omega)=y]$.
The probability distribution $\mathcal{D}_{X,Y}\colon R_X \times R_Y\rightarrow [0,1]$ defined by $\mathcal{D}_{X,Y}(x,y)= \prob[X=x, Y=y]$ is called the \emph{joint distribution} of $X$ and $Y$.
We say that $X$ and $Y$ are \emph{independent} if
\[
\prob[X=x,Y=y]=\prob[X=x]\cdot\prob[Y=y]
\]
for every $x\in R_X$ and $y\in R_Y$.
If $(\Omega_1,\prob_1)$ and $(\Omega_2,\prob_2)$ are probability spaces, then their \emph{direct product} is the probability space $(\Omega, \prob)$, such that $\Omega = \Omega_1 \times \Omega_2$ and $\prob[\omega_1,\omega_2]=\prob_1[\omega_1]\cdot \prob_2[\omega_2]$ for every $\omega_1\in\Omega_1$ and $\omega_2\in\Omega_2$. We will omit the indices of the probability functions when it will not cause confusion.

\subsection{Security and Security Proofs}

A \emph{cryptographic primitive} is described as a list of (parametrized) algorithms (finite sequences of atomic commands), correctness conditions (invariants), and attack scenarios. Adversaries are algorithms that participate in the security scenarios (interacting with environment) and break (are successful in the attack scenario of) the primitive with certain \emph{success} (\emph{advantage}) $\epsilon\in[0,1]$, which often is the probability of a certain logical condition about the attack scenario.
If the parameters of a cryptographic primitive are fixed, we get an \emph{instance} of the primitive.

The \emph{running time} of an adversary is the number of atomic commands the adversary executes during the attack scenario. We assume that the running time includes the code upload time, i.e. the running time is always greater than the size of the algorithm. This assumption is necessary when the primitives are fixed algorithms rather than parametrized families of algorithms like in asymptotic security models, e.g. the \emph{polynomial model}. In this paper we use the \emph{exact security model} that more precisely captures the practical use of cryptography where the primitives and adversaries are fixed algorithms.

Every instance $f$ of a primitive has \emph{security profile}
which is a function $S_f\colon [0,1]\rightarrow \mathbb{N}$ that for every $\epsilon\in [0,1]$ returns a lower bound $S_f(\epsilon)$ of the running time of an adversary that is able to break the primitive with success at least $\epsilon$.
Security profiles are non-decreasing, i.e.
$S_f(\epsilon)\le S_f(\epsilon')$ whenever $\epsilon\le \epsilon'$.
Therefore, every adversary that breaks a primitive $f$ with success $\epsilon$ has running time $t\ge S_f(\epsilon)$.

Sometimes an instance of a cryptographic primitive $g$ is constructed from instances $f_1, \ldots, f_m$ of other cryptographic primitives (using programming techniques).
A \emph{security reduction} (or \emph{security proof}) is a mathematical proof that the constructed primitive $g$ hash a security profile $S_g$
based on the security profiles $S_{f_1}, \ldots, S_{f_m}$ of $f_1, \ldots, f_m$, respectively.

Usually, in such a proof, it is assumed that there is an adversary $A$ with running time $t$ that breaks $g$ with success $\epsilon$ and then the adversaries $A_1,\ldots, A_m$ are constructed based on $A$ that break $f_1,\ldots,f_m$ with (some unknown) successes $\epsilon_1,\ldots,\epsilon_m$, respectively, so that inequality $\epsilon\le \epsilon_1 + \ldots + \epsilon_m$ holds.

Mostly, $A_1, \ldots, A_m$ use $A$ as black-box, i.e. either call or simulate $A$ and add some computational instructions. In this paper, we only have reductions where $A$ is called only once by every $A_i$ i.e. the running times of $A_1,\ldots,A_m$ are upper-bounded by $\tau_1(t), \ldots, \tau_m(t)$, respectively, where $\tau_i$ is the computational time overhead function for constructing $A_i$ from $A$.
Therefore, we have inequalities:
\[
\tau_1(t) \ge S_{f_1}(\epsilon_1), \quad \tau_2(t) \ge S_{f_2}(\epsilon_2), \quad \ldots\quad
\tau_m(t)\ge S_{f_m}(\epsilon_m)
\]
that imply $t\ge \min_{\epsilon_1 + \ldots + \epsilon_m = \epsilon}\max\{\tau^{-1}_1(S_{f_1}(\epsilon_1)), \ldots, \tau^{-1}_m(S_{f_m}(\epsilon_m))\}$, i.e. such reductions will prove the following security profile $S_g$ of $g$:
\begin{equation}\label{eq:lower}
S_g(\epsilon) = \min_{\epsilon_1 + \ldots + \epsilon_m = \epsilon}\max\{\tau^{-1}_1(S_{f_1}(\epsilon_1)), \ldots, \tau^{-1}_m(S_{f_m}(\epsilon_m))\}
\end{equation}
The minimum is necessary because we have to consider the worst distribution of $\epsilon_1,\ldots, \epsilon_m$ because the only fact we know about $\epsilon_i$ is that they are non-negative and their sum is $\epsilon$.
Equation (\ref{eq:lower}) implies a simpler but weaker profile $S'_g$:
\[
S'_g(\epsilon) = S_\mathsf{min}(\epsilon /m) =
 \min\{\tau^{-1}_1(S_{f_1}(\epsilon/m)), \ldots, \tau^{-1}_m(S_{f_m}(\epsilon/m))\}\enspace.
\]
Moreover, if $\tau(t)=\max\{\tau_1(t), \ldots, \tau_m(t)\}$ and $S_\mathsf{min}(\epsilon)= \min\{S_{f_1}(\epsilon), \ldots, S_{f_m}(\epsilon)\}$ then we have an even simpler security profile $S''_g$ for $g$ defined by:
\begin{equation}\label{eq:losebound}
S''_g(\epsilon) = \tau^{-1}(S_\mathsf{min}(\epsilon/m))\enspace.
\end{equation}
In the security reductions of this paper, the time overhead function $\tau$ is linear, i.e. $\tau(t) = \alpha t + \beta$, where $\alpha$ and $\beta$ are reduction-specific constants.

\subsection{Signature Schemes}

A \emph{signature scheme} is a triple $(\keygen, \sig, \sigver)$ of algorithms such that:
\begin{itemize}
\item $(\pubkey, \prikey) \gets \keygen$ generates a public key $\pubkey$ and a private key $\prikey$
\item $\sigma \gets \sig(\prikey, m)$ generates a signature on a message $m$
\item $b \gets \sigver (\pubkey, m, \sigma)$ verifies a signature on a message (accepts if $b = 1$)
\end{itemize}
so that for every message $m$ the following \emph{verification identity} holds:
\[
\mathsf{Pr}[(\pubkey, \prikey) \gets \keygen \colon\; \sigver(\pubkey, m, \sig(\prikey, m)) = 1] = 1 \enspace.
\]

\begin{definition}[EF-CMA security]
A signature scheme $(\keygen, \sig, \sigver)$ is $S$-secure against existential forgeries under adaptive chosen message attacks ($S$-secure EF-CMA) if it has $S$ as a security profile in the following attack scenario:
\begin{enumerate}
\item $(\pubkey, \prikey) \gets \keygen$;
\item $(m,\sigma)\gets A^{\mathsf{S}(\prikey;)}(\pubkey)$;
\item The attack is successful iff $\mathsf{V}(\pubkey, m, \sigma)=1$
and $A$ never queries $\mathsf{S}(\prikey;m)$.
The success $\epsilon$ of $A$ is the probability that the attack is successful.
\end{enumerate}
\end{definition}

\subsection{One-Way Functions}

Let $f\colon X \rightarrow Y$ be any function from the range $X$ to a domain $Y$.

\begin{definition}[one-wayness]
A function $f$ is $S$-secure one-way if it has $S$ as a security profile in the following attack scenario:
\begin{enumerate}
\item $x\gets X$, i.e. $x$ is chosen uniformly at random from the domain $X$;
\item $x'\gets A(f(x))$;
\item The attack is successful if $f(x')=f(x)$,
and the success $\epsilon$ of $A$ is the probability that the attack is successful.
\end{enumerate}
\end{definition}

\subsection{Hash Functions}

\noindent A \emph{hash function family} is a pair $(\hfgen,\hfunc)$ where:
\begin{itemize}
\item $\hfgen$ is a probabilistic algorithm that chooses a parameter $\param$
\item $\hfunc$ is a deterministic algorithm such that for every value of $\param$, the function $H=\hfunc(\param; \cdot)$ is of type $\bits{\ast} \to \bits{k}$.
\end{itemize}

\begin{definition}[collision-resistance]
A hash function family $(\hfgen,\hfunc)$ is $S$-secure collision-resistant if it has $S$ as a security profile in the following attack scenario:
\begin{enumerate}
\item $\param \gets \hfgen$;
\item $(m,m')\gets A(\param)$;
\item The attack is successful iff $m\neq m'$ and $\hfunc(\param;m)=\hfunc(\param;m')$,
and the success $\epsilon$ of $A$ is the probability that the attack is successful
\end{enumerate}
\end{definition}

\noindent In the following, we assume that the sampling $\param \gets \hfgen$ has been done before any attack scenario, and we often say that the function $H=\hfunc(\param; \cdot)$ itself is collision-resistant regardless of the fact that no fixed function can formally be collision-resistant.

\begin{definition}[$(k,\ell)$-one-wayness]
A function $H\colon \{0,1\}^\ast\rightarrow\{0,1\}^k$
is $S$-secure $(k,\ell)-$\emph{one-way} if it has $S$ as a security profile in the following attack scenario:
\begin{enumerate}
\item $(h,a)\gets A_1$;
\item $x \gets \{0,1\}^\ell$;
\item $x'\gets A_2(a;H(h,x))$;
\item The attack is successful iff $h\in\{0,1\}^k$, $x'\in\{0,1\}^\ell$ and $H(h,x)=H(h,x')$.
The success $\epsilon$ of $A$ is the probability that the attack is successful.
\end{enumerate}
\end{definition}

\noindent Equivalently, $H$ is $S$-secure $(k,\ell)$-one-way iff the function $f_h$ defined by $f_h(x)=H(h,x)$ is $S$-secure one-way for every $h\in\{0,1\}^k$.

\subsection{Commitment Schemes}

A \emph{commitment scheme} is a triple $(\setup, \commit, \open)$ of probabilistic algorithms such that:
\begin{itemize}
\item $\param\gets\setup$ is the setup algorithm that fixes the parameters of the scheme
\item $(c,d)\gets \commit(\param; m)$ computes commitment $c$ and
decommitment string $d$ of a message $m$
\item $m\gets \open(\param; c, d)$ opens the commitment
\end{itemize}
so that for every $m$, the following correctness identity holds:
\[
m=\open(\param; \commit(\param; m))\enspace.
\]
We denote by $\commitc(\param,m)$ the function that computes $(c,d)\gets \commit(\param; m)$ and returns $c$.
We will often omit the parameter $\param$ and use the shorthand notations $\commit(m)$, $\commitc(m)$ and $\open(c,d)$ instead of $\commit(\param; m)$,
$\commitc(\param; m)$ and $\open(\param; c, d)$, respectively.
\medskip

\begin{definition}[trivial commitment scheme]
In the trivial commitment scheme $(\setup,\commit,\open)$ the functions are defined as follows:
\begin{itemize}
\item $\setup$ always returns $\bot$.
\item $\commit(m)=(m,\bot)$ is the identity function.
\item $\open (c,d) = c$ just returns the first argument.
\end{itemize}
\end{definition}

\noindent In terms of security, the commitment schemes are required to be \emph{binding} and \emph{hiding}. The Binding property means that once the commitment $c$ is fixed, it is not possible (or very hard) to open it in two different ways. The Hiding property means that the commitment $c$ must not contain efficiently extractable information about the committed message.

\subsubsection{Binding}

\begin{definition}[binding]
A commitment scheme $(\setup, \commit, \open)$ is $S$-secure computationally binding if it has $S$ as a security profile in the following attack scenario:
\begin{enumerate}
\item $\param\gets\setup$;
\item $c,d,d'\gets A(\param)$;
\item The attack is successful if $\open(\param; c, d)\neq\open(\param; c, d')$,
and the success $\epsilon$ of $A$ is the probability that the attack is successful.
\end{enumerate}
\end{definition}

\noindent This property is called \emph{computational binding} because it protects against adversaries with limited computational power. There exist commitment schemes that are \emph{perfectly binding}, which means that opening a commitment in two different ways is impossible by definition. For example, the trivial commitment is perfectly binding, however it is ``perfectly non-hiding'' because the commitment of $m$ is $m$ itself.

\subsubsection{Hiding}

\begin{definition}[perfect hiding]
A commitment scheme $(\setup, \commit, \open)$ is said to be perfectly hiding if for every $\param$ and for every two messages $m,m'$ the commitments $c\gets \commitc(\param; m)$ and $c'\gets \commitc(\param; m')$ have equal probability distributions as random variables (assuming that the two calls of $\commit$ use independent internal random strings).
\end{definition}

\begin{lemma}[output independence]\label{le:outputindependence}
If $(\setup, \commit, \open)$ is a perfectly hiding commitment scheme, $m$ is chosen according to any probability distribution and $g$ is any deterministic function, then $m$ and $\commitc(g(m))$ are independent random variables.
\end{lemma}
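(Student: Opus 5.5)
We model the experiment on a direct product probability space: the first factor $(\Omega_1,\prob_1)$ carries the randomness that selects the message $m$ (with $\param$ regarded as fixed, since perfect hiding is required for every $\param$). The second factor $(\Omega_2,\prob_2)$ carries the internal random string $r$ used by $\commit$. Write $C(m)=\commitc(\param; g(m))$ for the resulting commitment, viewed as a random variable on $\Omega_1\times\Omega_2$. The goal is to show that for every value $x$ of $m$ and every value $c$ of the commitment,
\[
\prob[m=x,\, C=c]=\prob[m=x]\cdot\prob[C=c].
\]

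\textbf{Step 1 (conditional distribution).} Since $g$ is deterministic and $r$ is independent of $m$, we have
\[
\prob[m=x,\, C=c]=\prob_1[m=x]\cdot \prob_2[\commitc(\param; g(x); r)=c].
\]
This follows by writing the left side as $\sum_{\omega_1,r}\prob_1[\omega_1]\prob_2[r]\cdot[m(\omega_1)=x]\cdot[\commitc(g(m(\omega_1));r)=c]$. Because $m(\omega_1)=x$ inside the first Iverson symbol, the second symbol depends only on $r$, and the sum factors.

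\textbf{Step 2 (invoke perfect hiding).} Define $p(y,c)=\prob_2[\commitc(\param; y; r)=c]$. By the definition of perfect hiding, $p(y,c)=p(y',c)$ for all messages $y,y'$. Hence $p(y,c)=q(c)$ depends only on $c$. In particular $p(g(x),c)=q(c)$ for every $x$.

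\textbf{Step 3 (marginal).} Summing Step 1 over $x$ gives
\[
\prob[C=c]=\sum_x \prob[m=x]\,q(c)=q(c).
\]
Substituting back into Step 1 yields $\prob[m=x,C=c]=\prob[m=x]\cdot\prob[C=c]$, which is exactly the definition of independence.

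The main subtlety, more a modelling point than a real obstacle, is Step 1. We must make explicit that the randomness of $\commit$ is sampled independently of $m$, which is what the direct product space models. We must also use that $g$ is deterministic: a randomized $g$ correlated with $\commit$'s coins would break the factorization.
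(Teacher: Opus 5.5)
Your proof is correct and follows essentially the same route as the paper's: a direct product space for the message randomness and the commitment coins, factoring $\prob[m=x,C=c]$ as $\prob[m=x]\cdot\prob[\commitc(g(x))=c]$, and using perfect hiding to make the second factor independent of $x$. The one small difference is that you identify this constant with $\prob[C=c]$ by summing over $x$, where the paper averages over $\mu'$; your version is slightly cleaner and avoids some of the notational slips in the paper's chain of equalities.
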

\begin{proof}
Let $\omega \gets \Omega$ be the internal randomness sampling of $\commitc$. We denote by $\commitc_\omega$ the deterministic version of $\commitc$ where the internal random string $\omega$ is fixed. Let $\mathcal{M}$ denote the sampling space of $m$. We assume that the sampling $\mu\gets \mathcal{M}$ happens independently of $\omega \gets \Omega$ and hence, the total sampling space is a direct product space with sampling space $\mathcal{M}\times \Omega$ and $\prob[\mu,\omega]=\prob[\mu]\cdot \prob[\omega]$ for any values $\mu, \omega$.

Let $m=M(\mu,\omega)$ with $\mu\gets \mathcal{M}$ and $\omega\gets \Omega$, i.e. $M$ is the random variable corresponding to $m$. Let $C$ be the corresponding random variable of $c$, i.e. $c=C(\mu,\omega)=\commitc_\omega(g(M(\mu,\omega)))$. Note that $M(\mu,\omega)$ does not depend on $\omega$.

We have to show that $\prob[M=m,C=c] = \prob[M=m]\cdot \prob[C=c]$ in this probability space for any possible values $m,c$ of the message and the commitment, respectively.
\begin{eqnarray*}
\prob[M=m,C=c] &=& \sum_{\mu,\omega}\prob[\mu,\omega]\cdot [M(\mu,\omega)=m]\cdot [C(\mu,\omega)=c]\\
&=& \sum_{\mu,\omega}\prob[\mu]\cdot\prob[\omega]\cdot [M(\mu,\omega)=m]\cdot [C(\mu,\omega)=c]\\
&=& \sum_{\mu}\prob[\mu]\cdot[M(\mu,\omega)=m]\left(\sum_{\omega}\cdot\prob[\omega]\cdot [C(\mu,\omega)=c]\right)\\
&=& \sum_{\mu}\prob[\mu]\cdot[M(\mu,\omega)=m]\cdot
\prob[\commitc(g(M(\mu,\omega)))=c]\\
&=& \prob[\commit_\omega(g(M(\mu,\omega)))=c]\cdot \sum_{\mu}\prob[\mu]\cdot[g(M(\mu,\omega))=m]\\
&=& \prob[\commitc_\omega(g(M(\mu,\omega)))=c]\cdot \prob[M=m]
\end{eqnarray*}
because $p = \prob[\commitc_\omega(g(M(\mu,\omega)))=c]$ does depend neither on $\mu$ due to the perfect hiding property, nor on $\omega$. Moreover:
\begin{eqnarray*}
p &=& \prob[\commitc(g(M(\mu,\omega)))=c] \cdot \overbrace{\sum_{\mu'} \prob[\mu']}^{=1}\\
&=& \sum_{\mu'} \prob[\mu']\cdot \prob[\commitc(g(M(\mu,\omega)))=c]\\
&=& \sum_{\mu'} \prob[\mu']\cdot \prob[\commitc(g(M(\mu',\omega)))=c]\\
&=& \sum_{\mu'} \prob[\mu']\cdot \sum_{\omega}\prob[\omega]\cdot [\commitc_\omega(g(M(\mu',\omega)))=c]\\
&=& \sum_{\mu',\omega} \prob[\mu',\omega]\cdot [\commitc_\omega(g(M(\mu',\omega)))=c] = \prob[C=c]
\end{eqnarray*}
that proves the claim.
\end{proof}

\subsection{Perfectly Hiding Commitments and One-Wayness}

Let $f$ be a one way function and $(\setup, \commit, \open)$ be a perfectly hiding commitment scheme. We will show that $f$ remains hard to invert even if, in addition to the image $f(x)$, the adversary also knows the commitment $\commitc(x)$. The following lemma shows that knowing $\commitc(x)$ does not help the adversary (much) in inverting a one-way function.

\begin{lemma}
If $f$ is $S_f$-secure one-way and $(\setup, \commit, \open)$ is a perfectly hiding commitment scheme, then $f$ is $S'_f$-secure in the following attack scenario:
\begin{enumerate}
\item $x\gets X$;
\item $x'\gets A(f(x),\commitc(x))$;
\item The attack is successful if $f(x')=f(x)$;
\end{enumerate}
where $S'_f(\epsilon) = S_f(\epsilon) - t_\mathsf{sm} - t_\mathsf{com}$, where $t_\mathsf{sm}$ and $t_\mathsf{com}$ are the running times of the samplings $\cdot\gets X$ and $\cdot\gets \commitc(\cdot)$, respectively.\footnote{The overhead function is $\tau(t)=t + t_\mathsf{sm} + t_\mathsf{com}$ and its inverse $\tau^{-1}(t) = t - t_\mathsf{sm} - t_\mathsf{com}$.}
\end{lemma}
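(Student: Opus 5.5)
We build a black-box reduction. Assume $A$ runs in time $t$ and succeeds with probability $\epsilon$ in the extended scenario of the lemma. From it we construct an inverter $B$ for the plain one-wayness scenario. The obstacle is that $B$ receives only $y=f(x)$ and does not know $x$, so it cannot compute $\commitc(x)$ itself. The plan is to replace that commitment with a commitment to an independent dummy message, and to use perfect hiding (via Lemma~\ref{le:outputindependence}) to show that $A$ cannot tell the difference.

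\emph{Construction of $B$.} On input $y$, $B$ samples $x''\gets X$ and computes $c\gets\commitc(x'')$ with fresh randomness. It then runs $x'\gets A(y,c)$ and outputs $x'$. The running time of $B$ is at most $t+t_\mathsf{sm}+t_\mathsf{com}$. This is exactly the overhead function $\tau(t)=t+t_\mathsf{sm}+t_\mathsf{com}$ from the footnote.

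\emph{Identical distributions.} Next we show that the pair $(x,c)$ seen in $B$'s experiment has the same joint distribution as the pair $(x,\commitc(x))$ in $A$'s original experiment.
\begin{itemize}
\item In $B$'s experiment, $x$ and $x''$ are sampled independently. Since $c$ depends only on $x''$ and the commitment's internal randomness, $x$ and $c$ are independent.
\item The marginal distribution of $c$ equals that of $\commitc(x)$. By perfect hiding, $\commitc(m)$ has the same distribution for every fixed $m$. Averaging over $m$ distributed uniformly on $X$ in both cases therefore gives the same distribution.
\item In the original experiment, Lemma~\ref{le:outputindependence}, applied with $g$ the identity and $m=x$, shows that $x$ and $\commitc(x)$ are independent.
\end{itemize}
Hence both joint distributions are the product of the uniform distribution on $X$ with one common commitment distribution, so they coincide. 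Consequently $(f(x),c)$ and $(f(x),\commitc(x))$ are identically distributed, and so are the runs of $A$, including $A$'s own coins. The success event $f(x')=f(x)$ is a function of $x$ and $A$'s output, so $B$ succeeds with probability exactly $\epsilon$.

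\emph{Conclusion.} Since $f$ is $S_f$-secure one-way, we get $t+t_\mathsf{sm}+t_\mathsf{com}\ge S_f(\epsilon)$, that is, $t\ge S_f(\epsilon)-t_\mathsf{sm}-t_\mathsf{com}=S'_f(\epsilon)$. This is the claimed profile.

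The main step to argue carefully is the equality of the joint distributions. Perfect hiding by itself only constrains the commitment's marginal distribution for each fixed message. It is Lemma~\ref{le:outputindependence} that supplies the independence of $x$ and $\commitc(x)$ needed to turn this into equality of the joint distributions. Once that is in place, the rest is bookkeeping of running times.
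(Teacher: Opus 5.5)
Your proof is correct and takes the same approach as the paper: replace $\commitc(x)$ by a commitment to an independent dummy $x''\gets X$, and use Lemma~\ref{le:outputindependence} together with perfect hiding to show that $(f(x),\commitc(x))$ and $(f(x),\commitc(x''))$ have the same joint distribution. The bound then follows from the inverter's running time $t+t_\mathsf{sm}+t_\mathsf{com}$. Your explicit decomposition of the joint-distribution equality (independence in both experiments plus equal marginals) spells out a step the paper states in one line.
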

\begin{proof}
Let $f$ be an $S$-secure one-way function and $A$ be a $t$-time adversary that with probability $\epsilon$ succeeds in the attack scenario.
Consider the following modified scenario with the same adversary:
\begin{enumerate}
\item $x\gets X$;
\item $x''\gets X$;
\item $x'\gets A(f(x),\commit(x''))$;
\item the attack is successful if $f(x')=f(x)$.
\end{enumerate}

\noindent From Lemma~\ref{le:outputindependence} it follows that in both input distributions $(f(x),\commit(x))$ and $(f(x),\commit(x''))$ the commitments are independent of $x$ and are equally distributed, and hence
the joint distributions of $(f(x),\commit(x))$ and $(f(x),\commit(x''))$ are equal. It follows that the success probability of $A$ in the second scenario is also equal to $\epsilon$. Let $A'(y)$ be the adversary that, given $y=f(x)$ as input proceeds as follows:
\begin{enumerate}
\item $x''\gets X$;
\item $c\gets \commit(x'')$;
\item return $A(y,c)$.
\end{enumerate}

\noindent The adversary $A'$ inverts $f$ with probability $\epsilon$ and has a running time $t+t_\mathsf{sm} + t_\mathsf{com}$. Therefore, $t+t_\mathsf{sm} + t_\mathsf{com}\ge S_f(\epsilon)$ and hence $t\ge S_f(\epsilon)- t_\mathsf{sm} - t_\mathsf{com}$.
\end{proof}
\medskip


\subsection{Pseudo-random Function Families}

\begin{definition}[PRF] An $S$-secure \emph{pseudo-random function family (PRF)} is a function $F\colon K\times X \rightarrow Y$ that has $S$ as a security profile in the following attack scenario with a distinguisher $D$:
\begin{enumerate}
\item $k\gets K$
\item $\Phi\gets Y^X$, i.e. $\Phi$ is a randomly chosen function of type $X\rightarrow Y$
\item $b_1\gets D^{F(k;\cdot)}$
\item $b_0\gets D^{\Phi(\cdot)}$
\item The success of $D$ is
$\epsilon=|\mathsf{Pr}[b_1=1]-\mathsf{Pr}[b_0=1]|$
\end{enumerate}
\end{definition}
The oracle $\Phi$ can be simulated by using the so-called \emph{lazy sampling} technique. The oracle stores a partial function (dictionary) $\phi$
that is initially nowhere defined (i.e. $\phi[x]=\bot$ for every $x\in X$) and every oracle call $\Phi(x)$ is handled as follows:
\begin{enumerate}
\item If $\phi[x]\neq \bot$ then return $\phi[x]$.
\item If $\phi[x]=\bot$ then:
\begin{enumerate}
\item Pick a random $y\gets Y$
\item Define $\phi[x]\gets y$
\item Return $y$
\end{enumerate}
\end{enumerate}

\section{Unicity Infrastructure}\label{sec:infrastructure}

Unicity infrastructure is about maintaining identifiable digital assets called \emph{tokens}. For example, tokens can represent units of digital currency.

\emph{Parties} can \emph{create (issue)} tokens, \emph{own} tokens, and \emph{transfer} tokens to each other, i.e. the ownership of tokens may change. In order to transfer a token, its owner makes a signed \emph{transaction} that redefines the ownership.

We assume that transferred tokens can be sent using any channels and their storage does not require dedicated hardware devices. At the same time, the infrastructure has to guarantee some properties of the tokens such as \emph{unique ownership}, i.e. the owner of a token should not be able to transfer the token to two different parties (i.e. \emph{double-spend} the token), and once a token has been transferred, neither the previous owner nor any third parties should be able to do anything with the token---transfer it or make it unusable for the next owner (i.e. \emph{block} the token).

As nothing prevents copying of digital information, some additional components are needed in the infrastructure to guarantee the desired properties of tokens. For this, the Unicity infrastructure includes the \emph{Unicity Service}---an online functionality that all parties can communicate with.

In this section, we assume that $(\keygen, \sig, \sigver)$ is a signature scheme and $H$ is a hash function.

\subsection{Unicity Service}\label{sec:unicity-service}

We first model the Unicity Service as an ideal functionality, and later discuss how to implement such a service in a secure and efficient way.

The \emph{Unicity Service} $\unisrv$ is modeled as a state machine with state $R$, which is a key-value store (dictionary), where both keys and values are of type $\bits{|k|}$. Initially, $R = \emptyset$. We will write $R[k] = \bot$ if there are no pairs $(k, v)$ stored in $R$.

Every input request $Q$ is a tuple $(\pubkey, \sthash, \txhash, \sigma)$, where:
\begin{itemize}
\item $\pubkey$ of type $\bits{p}$ is a public key: the public key of the current owner of a token, i.e. the owner before the transaction with the hash $\txhash$ is executed;
\item $\sthash$ of type $\bits{k}$ is a ``state hash'', a value linking subsequent token states;
\item $\txhash$ of type $\bits{k}$ is a transaction data hash (defined later);
\item $\sigma$ of type $\bits{s}$ is a digital signature of the transaction.
\end{itemize}

\noindent The request $Q=(\pubkey,\sthash,\txhash,\sigma)$ is processed by $\unisrv$ with the state $R$ as follows:
\begin{enumerate}
\item If $R[H(\pubkey, \sthash)] = \bot$ and $\sigver(\pubkey, H(\sthash, \txhash), \sigma) = 1$ then
\[
R \gets R \cup \{(H(\pubkey, \sthash), \txhash)\} \enspace,
\]
i.e. the new value of $R$ is defined by setting $R[H(\pubkey, \sthash)] \gets \txhash$ and leaving the rest of the contents of $R$ unchanged.
\item A proof $\pinc$ of the statement $R[H(\pubkey, \sthash)] = v$ (inclusion proof) is returned.
\end{enumerate}

\noindent It is easy to see that if $R_0 = \emptyset$ is the initial state, $Q_1, Q_2, \ldots, Q_n$ is any sequence of queries, and $R_i$ is the state after the request $Q_i$ then:
\begin{itemize}
\item $R_0 \subseteq R_1 \subseteq \ldots \subseteq R_n$, i.e. the elements are never removed.
\item The state $R_n$ is a \emph{partial function}, i.e.
$\{(k, v), (k, v')\} \subseteq R$ implies $v = v'$.
\end{itemize}

We say that a key $k$ is \emph{blocked} if $R[k] \neq \bot$.

\subsection{Verification Function}

We assume that the inclusion proofs $\pinc$ can be verified by any party using a verification function $\univer$ so that:
\begin{itemize}
\item If $\univer(k, v; \pinc) = 1$ then $R[k] = v$ in the current state $R$ of $\unisrv$. Hence, as $R$ is a partial function, for every $k, v, v', \pinc, \pinc'$, the following implication holds:
\begin{equation}\label{eq:eqtx}
\univer(k, v; \pinc) = \univer(k, v'; \pinc') = 1 \quad \Rightarrow \quad v =v' \enspace.
\end{equation}
\item If $R[H(\pubkey, \sthash)] = \txhash$ after a request $\pinc \gets \unisrv(\pubkey, \sthash, \txhash, \sigma)$ to the Unicity Service, then $\univer(H(\pubkey, \sthash), \txhash, \pinc) = 1$.
\end{itemize}

\subsection{Transactions with a Token}

Every token has a \emph{state hash} $\sthash$ and an owner $A$ represented by a public key $\pubkey$. The state hash $\sthash$ is initialized by the \emph{mint transaction} of the token (Sec.~\ref{sec:mint-transaction}).
We will call the pair $(\pubkey, \sthash)$ the \emph{state} of the token.
Every (unsigned) transaction with the token is a pair
$T = (\sthash, D)$, where:
\begin{enumerate}
\item $\sthash$ is the state hash linking the token ledger,
\item $D$ (transaction data) contains the following fields:
\begin{itemize}
\item $\pubkey'$: the public key of the next owner,
\item $x$: a uniformly chosen random string $x\gets\{0,1\}^\ell$,
\item $\auxd'$: other data for the next state.
\end{itemize}
\item The next state hash $\sthash'$ is computed by $\sthash'\gets H(\sthash,x)$. The pair $(\pubkey', \sthash')$ defines the next state of the token after executing the transaction $T$.
\end{enumerate}
The main idea of the state is that the transaction $T$ with a token is possible only if its current state $(\pubkey,\sthash)$ is \emph{not spent}, i.e. $R[H(\pubkey,\sthash)]=\bot$. When executed, $T$ \emph{spends} the state $(\pubkey,\sthash)$ by sending a request $Q=(\pubkey,\sthash,\txhash,\sigma)$ to $\unisrv$, i.e. $R[H(\pubkey,\sthash)]=\txhash\neq\bot$ after execution and hence, no other transactions in the same state are possible.

The next state $(\pubkey',\sthash')$ should be a non-spent state, i.e. $R[H(\pubkey',\sthash')]=\bot$ for the next transactions with the same token being possible. This is guaranteed by the one-wayness and collision-resistance of the hash function $H$. As $x$ is chosen randomly and is not visible by $\unisrv$ and moreover, it is protected by a perfectly hiding commitment scheme, it is not possible for $\unisrv$ to associate the current state and the next state of the token.
\medskip

\noindent\textbf{Certifying a transaction} $T = (\sthash, D)$ involves the following steps:
\begin{enumerate}
\item $(\txhash,d) \gets \commit(H(D))$ is computed using a perfectly hiding commitment scheme $(\setup,\commit,\open)$.
The commitment $\txhash$ is called the \emph{transaction data hash}.
\item The hash value $h_T = H(\sthash, \txhash)$ is computed.
\item A digital signature $\sigma \gets \sig(\prikey, h_T)$ is created with the private counterpart $\prikey$ of $\pubkey$, i.e. $\mathsf{V}(\pubkey, h_T, \sigma) = 1$.
\item The request $Q = (\pubkey,\sthash,\txhash,\sigma)$ is created.
\item $\unisrv$ is called to obtain $\pi\gets \unisrv(Q)$.
\item The certified transaction $(T,\sigma,\txhash,d,\pi)$ is formed.
\end{enumerate}

\noindent\textbf{Verifying a certified transaction} A certified transaction $(T,\sigma,\txhash,d,\pi)$ is verified in the state $(\pubkey,h)$ by the following algorithm:\medskip

\noindent $\certver(T,\sigma,\txhash,d,\pi;\pubkey,h)$:
If at least one of the following checks fail, return 0, otherwise return 1:
\begin{enumerate}
\item $T.\sthash=h$;
\item $\open(\txhash,d)=H(T.D)$;
\item $\sigver(\pubkey, H(\sthash,\txhash),\sigma)=1$;
\item $\univer(H(\pubkey,T.\sthash),\txhash,\pi)=1$.
\end{enumerate}

\begin{definition}[certification in a state]\label{de:certstate}
A tuple $(T,\sigma,\txhash,d,\pi)$ is said to be \emph{certified in state} $(\pubkey,h)$ iff $\certver(T,\sigma,\txhash,d,\pi;\pubkey,h)=1$.
\end{definition}

\subsection{Mint Transaction}\label{sec:mint-transaction}

Mint transaction is the first transaction with every token. Mint transaction assigns a unique \emph{Token Identifier} $\mathsf{id}$ and some more application-specific data fields, like a \emph{Mint Justification}, packed into the auxiliary data $\auxd$.
Minting uses the following public system-specific constants:
\begin{itemize}
\item $\mathsf{MINT\_SUFFIX}$ -- a fixed domain separator
\item $\pubkey_\mathsf{mint}$ -- minting public key
\item $\prikey_\mathsf{mint}$ -- minting private key
\end{itemize}
Note that $\prikey_\mathsf{mint}$ is also public and is needed only for having a unified interface with $\unisrv$.

A certified mint transaction is $(T_0, \sigma_0, \pi_0)$, where $T_0 = (\sthash, D_\mathsf{mint})$, $\sthash = H(\mathsf{id}, \mathsf{MINT\_SUFFIX})$ where $D_\mathsf{mint}$ contains the following fields:
\begin{itemize}
    \item $\pubkey'$: the public key of the first owner;
    \item $\mathsf{id}$: the token identifier;
    \item $\auxd'$: other data of the first state.
\end{itemize}
\medskip

\noindent\textbf{Certifying a mint transaction} $T = (\sthash, D_\mathsf{mint})$ involves the following steps:
\begin{enumerate}
	\item $\txhash \gets H(D_\mathsf{mint})$ (perfectly hiding commitment is unnecessary for mint)
	\item $h_T \gets H(\sthash, \txhash)$
	\item $\sigma \gets \sig(\prikey_\mathsf{mint}, h_T)$, i.e. create a digital signature the private key $\prikey_\mathsf{mint}$.
	\item $Q \gets (\pubkey_\mathsf{mint},\sthash,\txhash,\sigma)$, i.e. a request is created.
	\item $\pi\gets \unisrv(Q)$, i.e. $\unisrv$ is called to obtain an inclusion proof.
	\item Output $(T,\sigma,\pi)$ as a certified mint transaction
\end{enumerate}

\noindent\textbf{Verifying a certified mint transaction} $(T,\sigma,\pi)$ involves the following checks:
\begin{enumerate}
	\item $T.\sthash= H(T.D_\mathsf{mint}.\mathsf{id}, \mathsf{MINT\_SUFFIX})$
	\item $\sigver(\pubkey_\mathsf{mint}, H(\sthash,\txhash),\sigma)=1$, where $\txhash = H(T.D_\mathsf{mint})$
	\item $\univer(H(\pubkey_\mathsf{mint},\sthash),\txhash,\pi)=1$.
\end{enumerate}

\noindent Application-specific checks (e.g., validation of the mint authorization based on the enclosed mint justification) follow.

\subsection{Token Ledger}

A \emph{token ledger} is a sequence
\[
(T_0, \sigma_0, \pi_0; \pubkey^0, h^0_\mathsf{st}), (T_1, \sigma_1,\txhash^1,d_1,\pi_1; \pubkey^1, h^1_\mathsf{st}), \ldots, (T_n, \sigma_n, \txhash^n, d_n,\pi_n; \pubkey^n,h^n_\mathsf{st})
\]
where:
\begin{enumerate}
\item $(T_0, \sigma_0, \pi_0)$ is a certified mint transaction
\item $\pubkey^0=T_0.D_\mathsf{mint}.\pubkey'$
\item $\sthash^{0}=H(T_0.D_\mathsf{mint}.\mathsf{id},\mathsf{MINT\_SUFFIX})$
\item For every index $i=1,\ldots, n$:
\begin{enumerate}
\item[3.3.] $(T_{i}, \sigma_{i}, \txhash^i,d_i, \pi_{i})$ is a certified transaction in the state $(\pubkey^{i-1}, \sthash^{i-1})$
\item[3.1.] $\pubkey^i = T_{i}.D.\pubkey'$
\item[3.2.] $h^{i}_\mathsf{st}=H(h^{i-1}_\mathsf{st},T_{i}.D.x)$
\end{enumerate}
\end{enumerate}

\section{Security}\label{sec:security}
Consider a token with the state $S = (\pubkey, \sthash)$. The transfer protocol ensures the following properties:
\begin{itemize}
\item \emph{No blocking}:
Only the owner of the private key of $\pubkey$ can block the state $S = (\pubkey, \sthash)$ if it was not blocked before.
\item \emph{No double-spending}:
Only one certified transaction can be created in the state $S$.
\end{itemize}

\noindent In this section, we present security proofs for both the no blocking and the no double-spending properties. Security against blocking does not depend on the choice of the commitment scheme and security against double spending assumes computational binding of the commitment scheme.

Therefore, both proofs are also valid if the commitment scheme is trivial (i.e. $\txhash = H(D)=H(\pubkey',x,\auxd')$) because the trivial commitment scheme is perfectly (and hence also computationally) binding. Later when we prove the privacy properties, we have to assume that the commitment scheme is perfectly hiding.

\subsection{Security against Blocking}\label{sec:blocking}

A blocking adversary $A$ uses two oracles:
\begin{enumerate}
\item $\mathsf{US}$: the Unicity Service,
\item $\mathsf{TS}(\prikey,\cdot)$: the transaction signer that, given as input a transaction $(h,D)$ returns $(\sigma, \txhash, d)$, where
$(\txhash, d)\gets \commit(H(D))$ and
$\sigma \gets \mathsf{S}(\prikey,H(h,\txhash))$.
\end{enumerate}

\noindent\textbf{Blocking scenario} involves the following steps:
\begin{enumerate}
\item $(\pubkey,\prikey)\gets \mathsf{G}$, i.e. a keypair is generated;
\item $h_\mathsf{st}\gets A^{\mathsf{US},\mathsf{TS}(\prikey,\cdot)}(\pubkey)$, i.e. $A$ outputs a hash value;
\item $A$ is successful if
$R[H(\pubkey,h_\mathsf{st})]\neq\bot$ after the scenario and no queries of the form $(\sigma,\txhash,d)\gets\mathsf{TS}(\prikey;h_\mathsf{st},D))$ were made.
The success $\epsilon$ of $A$ is the probability that the attack is successful
\end{enumerate}

\noindent Note that if such a query was made, then the request $Q=(\pubkey,\sthash, \txhash,\sigma)$ to $\mathsf{US}$ will trivially ensure $R[H(\pubkey,h_\mathsf{st})]\neq\bot$, and hence this is excluded by the security condition. \medskip

\begin{definition}[blocking security]
The Unicity Service is said to be $S$-secure against blocking if it has $S$ as a security profile in the blocking scenario.
\end{definition}

\noindent\textbf{Analysis}: The adversary $A$ can be successful in the following cases:
\begin{itemize}
\item[a)] A request $Q=(\pubkey', h'_\mathsf{st}, h_\mathsf{tx}, \sigma)$ with $(\pubkey',h'_\mathsf{st})\neq (\pubkey,h_\mathsf{st})$ to $\mathsf{US}$ enforces $R[H(\pubkey, h_\mathsf{st})]\neq\bot$, which means that  $H(\pubkey,h_\mathsf{st})=H(\pubkey', h'_\mathsf{st})$ and hence, a collision for $H$ was found.
\item[b)] A request $Q=(\pubkey, h_\mathsf{st}, h_\mathsf{tx}, \sigma)$ to $\mathsf{US}$ enforces $R[H(\pubkey, h_\mathsf{st})]\neq\bot$, which implies $\mathsf{V}(\pubkey, H(h_\mathsf{st},h_\mathsf{tx}), \sigma)=1$ from the description of $\mathsf{US}$. Then we have two possibilities:
\begin{itemize}
\item[b1)] A request $(\sigma',\txhash',d)\gets\mathsf{TS}(\prikey;\sthash',D)$ with $\sthash'\neq h_\mathsf{st}$ was made such that
$H(\sthash',\txhash')=H(\sthash,\txhash)$, which means that a collision for $H$ was found.
\item[b2)] If no requests $(\sigma',\txhash',d)\gets\mathsf{TS}(\prikey;\sthash',D)$ were made with $H(\sthash',\txhash')=H(\sthash,\txhash)$ then this means that $A$ was able to create the signature $\sigma$ without ``help'' from the $\mathsf{TS}(\prikey;\cdot)$ oracle, and hence $A$ was able to create an existential forgery against the signature scheme.
\end{itemize}
\end{itemize}

\begin{theorem}
If the signature scheme is $S$-secure EF-CMA and the hash function is $S$-secure collision-resistant, then the Unicity service is
$S_\mathsf{block}$-secure against blocking, where $S_\mathsf{block}(\epsilon) =
\frac{S(\epsilon/2)}{\max\{t_\mathsf{ver}, t_\mathsf{sig}\}+1} - \frac{t_\mathsf{gen}}{\max\{t_\mathsf{ver}, t_\mathsf{sig}\}+1}$
and
$t_\mathsf{gen}$, $t_\mathsf{sig}$, $t_\mathsf{ver}$ are the
key generation time, signing time, and signature verification time, respectively.
\end{theorem}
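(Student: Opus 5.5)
We follow the case analysis a), b1), b2) above and build two reductions: a collision finder $A_\mathsf{col}$ for $H$ and an EF-CMA forger $A_\mathsf{sig}$. Each runs the blocking adversary $A$ once and simulates its oracles $\mathsf{US}$ and $\mathsf{TS}(\prikey,\cdot)$. We will show that whenever $A$ succeeds, at least one of them succeeds. This gives $\epsilon\le\epsilon_\mathsf{col}+\epsilon_\mathsf{sig}$, and the simplified profile of type~(\ref{eq:losebound}) with $m=2$ then yields the bound via $S(\epsilon/2)$.

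\textbf{Simulation.} $A_\mathsf{sig}$ receives $\pubkey$ from the EF-CMA challenger. It answers $\mathsf{TS}(\prikey;h,D)$ by computing $(\txhash,d)\gets\commit(H(D))$ and querying its own signing oracle on $H(h,\txhash)$. It answers $\mathsf{US}$ by keeping the dictionary $R$ itself and running $\sigver$ on each request. $A_\mathsf{col}$ has no challenge key, so it first runs $(\pubkey,\prikey)\gets\keygen$ (this is the $t_\mathsf{gen}$ term) and then simulates both oracles honestly. Both reductions also record every hash input/output pair they compute: the keys $H(\pubkey',h'_\mathsf{st})$ of all $\mathsf{US}$ requests, and the messages $H(\sthash',\txhash')$ of all $\mathsf{TS}$ calls. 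Each step of $A$ costs at most one signing or one verification plus one unit of bookkeeping. This gives the overhead $\tau(t)=(\max\{t_\mathsf{ver},t_\mathsf{sig}\}+1)t+t_\mathsf{gen}$, whose inverse is exactly the expression in $S_\mathsf{block}$ when applied to $S(\epsilon/2)$. We absorb hashing and commitment time into the unit step cost, or into $t_\mathsf{sig}$; this is somewhat loose but matches the stated form.

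\textbf{Extraction.} When $A$ outputs $h_\mathsf{st}$ with $R[H(\pubkey,h_\mathsf{st})]\neq\bot$, look at the first $\mathsf{US}$ request that set this entry. If its key pair differs from $(\pubkey,h_\mathsf{st})$, we are in case a) and $A_\mathsf{col}$ outputs the collision. Otherwise the request is $(\pubkey,h_\mathsf{st},\txhash,\sigma)$ with $\sigver(\pubkey,H(h_\mathsf{st},\txhash),\sigma)=1$. If some $\mathsf{TS}$ call on $(\sthash',D)$ produced $\txhash'$ with $H(\sthash',\txhash')=H(h_\mathsf{st},\txhash)$, then $\sthash'\neq h_\mathsf{st}$ by the success condition, so $(\sthash',\txhash')\neq(h_\mathsf{st},\txhash)$ is a collision; this is case b1) and goes to $A_\mathsf{col}$. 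Otherwise the message $H(h_\mathsf{st},\txhash)$ was never signed by the oracle, so $(H(h_\mathsf{st},\txhash),\sigma)$ is a valid existential forgery; this is case b2) and goes to $A_\mathsf{sig}$.

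\textbf{Main obstacle.} The subtle point is that both reductions must behave identically to the real scenario up to the moment of success, so that the events a), b1) and b2) partition the success event with respect to the same distribution. This holds because each oracle simulation is perfect. We then set $\epsilon_1=\prob[\text{a or b1}]$ and $\epsilon_2=\prob[\text{b2}]$. With a common time bound $\tau(t)$ and the minimum of the two profiles, both equal to $S$, one of $\epsilon_1,\epsilon_2$ is at least $\epsilon/2$. Hence $\tau(t)\ge S(\epsilon/2)$, and solving for $t$ gives $S_\mathsf{block}$.
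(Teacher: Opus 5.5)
Your proposal is correct and follows the paper's proof essentially step for step. It uses the same collision finder, which generates its own keypair and so accounts for the $t_\mathsf{gen}$ term, the same EF-CMA forger, the same case split a)/b1)/b2), and the same overhead $\tau(t)=(\max\{t_\mathsf{ver},t_\mathsf{sig}\}+1)t+t_\mathsf{gen}$ fed into equation~(\ref{eq:losebound}) with $m=2$. The only difference is cosmetic: you bound both reductions by this common $\tau$ directly, whereas the paper computes a separate $\tau_\mathsf{ex}(t)=(t_\mathsf{ver}+1)t+t_\mathsf{hash}$ for the forger and assumes $t_\mathsf{hash}\le t_\mathsf{gen}$ so that $\tau_\mathsf{coll}$ dominates it.
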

\begin{proof}
Let $A$ be a $t$-time blocking adversary that succeeds with probability $\epsilon$. We construct a collision-finder $A_\mathsf{coll}$ and an
existential forger $A_\mathsf{ex}$ as follows:
\begin{itemize}
\item $A_\mathsf{coll}$ proceeds as follows:
 \begin{enumerate}
   \item $(\pubkey,\prikey)\gets \mathsf{G}$.
   \item Simulates $h_\mathsf{st}\gets A^{\mathsf{US},\mathsf{TS}(\prikey;\cdot)}(\pubkey)$ and records all the oracle queries.
   \item If $A^{\mathsf{US},\mathsf{TS}(\prikey;\cdot)}(\pubkey)$ was successful and either the case a) or b1) occurs, $A_\mathsf{coll}$ outputs the collision that is guaranteed in this case.
 \end{enumerate}
The oracles are simulated as follows:
 \begin{itemize}
   \item $\mathsf{US}$-queries: $A_\mathsf{coll}$ maintains its own version of $R$.
   \item $\mathsf{TS}(\prikey;\cdot)$-queries: $A_\mathsf{coll}$ uses the private key $\prikey$.
\end{itemize}
The computational time overhead function for the construction of $A_\mathsf{coll}$ is $\tau_\mathsf{coll}(t) = (\max\{t_\mathsf{ver}, t_\mathsf{sig}\}+1)\cdot t + t_\mathsf{gen}$, where
$t_\mathsf{ver}$ is the signature verification time (for $\unisrv$-queries),
$t_\mathsf{sig}$ is the signature creation time (for $\mathsf{TS}(\prikey;\cdot)$-queries), and $t_\mathsf{gen}$ is the key generation time.

\item $A_\mathsf{ex}^{\mathsf{S(\prikey;\cdot)}}(\pubkey)$ proceeds as follows:
 \begin{enumerate}
  \item Simulates $h_\mathsf{st}\gets A^{\mathsf{US},\mathsf{TS}(\prikey;\cdot)}(\pubkey)$ and records all the oracle queries.
  \item If $A^{\mathsf{US},\mathsf{TS}(\prikey;\cdot)}(\pubkey)$ was successful and b2) occurs and $Q=(\pubkey, h_\mathsf{st}, h_\mathsf{tx}, \sigma)$ was the request that enforces $R[H(\pubkey, h_\mathsf{st})]\neq\bot$ then:
  \begin{enumerate}
    \item[3.] $m\gets H(h_\mathsf{st}, h_\mathsf{tx})$.
    \item[4.] Output $(m,\sigma)$.
  \end{enumerate}
 \end{enumerate}
The oracles are simulated as follows:
 \begin{itemize}
  \item $\mathsf{US}$-queries are simulated so that $A_\mathsf{coll}$ maintains its own version of $R$.
  \item $\mathsf{TS}(\prikey;\cdot)$-queries are simulated by using calls to $\mathsf{S}(\prikey;\cdot)$.
 \end{itemize}
 As the request $Q$ was accepted by $\unisrv$, we have $\sigver(\pubkey,m,\sigma)=1$. Note that in the case b2) the request $\mathsf{S}(\prikey; m)$ was never made and hence, $A_\mathsf{ex}^{\mathsf{S(\prikey;\cdot)}}(\pubkey)$ is successful as an existential forger in the EF-CMA scenario. The computational time overhead function for the construction of $A_\mathsf{ex}$ is $\tau_\mathsf{ex}(t) = (t_\mathsf{ver}+1)\cdot t + t_\mathsf{hash}$, where
$t_\mathsf{ver}$ is the signature verification time (for $\unisrv$ queries) and
$t_\mathsf{hash}$ is the hash computation time (for output).
\end{itemize}
If $A$ succeeds, then either $A_\mathsf{coll}$ or $A_\mathsf{ex}$ succeeds and hence $\epsilon \le \epsilon_\mathsf{coll} + \epsilon_\mathsf{ex}$.
Assuming that $t_\mathsf{hash}\le t_\mathsf{gen}$, the inequality
$\tau_\mathsf{ex}(t)\le \tau_\mathsf{coll}(t)$ holds and hence by equation (\ref{eq:losebound})
\[
S_\mathsf{block}(\epsilon) = \tau^{-1}_\mathsf{coll}(S(\epsilon/2)) =
\frac{S(\epsilon/2)}{\max\{t_\mathsf{ver}, t_\mathsf{sig}\}+1} - \frac{t_\mathsf{gen}}{\max\{t_\mathsf{ver}, t_\mathsf{sig}\}+1}
\]
is a security profile of the Unicity Service against blocking.
\end{proof}

\subsection{Security against Double-Spending}\label{sec:double-spending}

A double-spending adversary uses $\unisrv$ as an oracle. \medskip

\noindent\textbf{Double-spending scenario} involves the following steps:
\begin{enumerate}
\item $(T,\sigma,\txhash,d,\pinc),(T',\sigma',\txhash',d',\pinc'),(\pubkey,h)\gets A^\unisrv$.
\item The attack is successful iff $T\neq T'$ and
\begin{equation}\label{eq:dscond}
\certver(T,\sigma,\txhash,d,\pinc;\pubkey,h)=\certver(T',\sigma',\txhash',d',\pinc';\pubkey,h)=1 \enspace.
\end{equation}
\end{enumerate}

\begin{definition}[Double-spending security]
The Unicity Service is said to be $S$-secure against double-spending if it has $S$ as a security profile in the double-spending scenario.
\end{definition}

\noindent\textbf{Analysis}: If the adversary is successful, then
from (\ref{eq:dscond}) and the definition of $\certver$ it follows that
$T.\sthash = T'.\sthash=h$ and:
\begin{eqnarray*}
\univer(H(\pubkey, h), \txhash; \pinc) = \univer(H(\pubkey, h), \txhash'; \pinc') = 1\enspace,
\end{eqnarray*}
which implies $\txhash=\txhash'$ by equation~(\ref{eq:eqtx}).
From Def.~\ref{de:certstate} it also follows that
$\open(\txhash,d)=H(T.D)$ and $\open(\txhash,d')=\open(\txhash',d')=H(T'.D)$.
From $(h,T.D)=(T.\sthash,T.D)=T\neq T'=(T'.\sthash,T'.D)=(h,T'.D)$ it follows that $T.D\neq T'.D$. Hence, we have two cases:
\begin{itemize}
\item[a)] $H(T.D)=H(T'.D)$, which
means that a collision has been found for $H$.
\item[b)] $H(T.D)\neq H(T'.D)$, which implies $\open(\txhash,d)=H(T.D)\neq H(T'.D)=\open(\txhash,d')$ and hence, the commitment $\txhash$ has been opened in two different ways.
\end{itemize}

\begin{theorem}
If $H$ is $S$-secure collision-resistant and the commitment scheme is $S$-secure computationally binding, then the Unicity service is $S_\mathsf{double}$-secure against double-spending, where
$S_\mathsf{double}(\epsilon) = \frac{S(\epsilon/2)}{t_\mathsf{ver}+1}$
and $t_\mathsf{ver}$ is the signature verification time.
\end{theorem}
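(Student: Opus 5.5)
The plan is to follow the pattern of the blocking theorem: build two reductions from a $t$-time double-spending adversary $A$ with success $\epsilon$. One is a collision-finder $A_\mathsf{coll}$ for $H$. The other is a binding-breaker $A_\mathsf{bind}$ for the commitment scheme. The Analysis preceding the theorem shows that every successful run of $A$ falls into case a) or case b). This gives $\epsilon\le\epsilon_\mathsf{coll}+\epsilon_\mathsf{bind}$, and the profile then follows from equation (\ref{eq:losebound}) with $m=2$.

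\textbf{Construction of both reductions.} Both $A_\mathsf{coll}$ and $A_\mathsf{bind}$ run $A^{\unisrv}$ once and simulate $\unisrv$ themselves.
\begin{itemize}
\item They keep a local dictionary $R$.
\item On each query $Q=(\pubkey,\sthash,\txhash,\sigma)$ they check $R[H(\pubkey,\sthash)]=\bot$ and $\sigver(\pubkey,H(\sthash,\txhash),\sigma)=1$.
\item If both checks pass, they insert the pair into $R$ and return an inclusion proof.
\end{itemize}
No private key is involved, so only verification time $t_\mathsf{ver}$ per step is charged. This gives an overhead $\tau(t)=(t_\mathsf{ver}+1)\cdot t$, whose inverse is $\tau^{-1}(s)=s/(t_\mathsf{ver}+1)$, matching the claimed $S_\mathsf{double}$.

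When $A$ outputs $(T,\sigma,\txhash,d,\pinc),(T',\sigma',\txhash',d',\pinc'),(\pubkey,h)$, each reduction first checks the success condition (\ref{eq:dscond}) and $T\neq T'$. As in the Analysis, equation (\ref{eq:eqtx}) then gives $\txhash=\txhash'$ and $T.D\neq T'.D$.
\begin{itemize}
\item If $H(T.D)=H(T'.D)$, then $A_\mathsf{coll}$ outputs $(T.D,T'.D)$, which is a collision.
\item Otherwise $A_\mathsf{bind}$ outputs $(\txhash,d,d')$. We have $\open(\txhash,d)=H(T.D)\neq H(T'.D)=\open(\txhash,d')$, which breaks binding.
\end{itemize}

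\textbf{Main obstacle.} The delicate step is justifying that equation (\ref{eq:eqtx}) holds with respect to the \emph{simulated} service state. Property (\ref{eq:eqtx}) is stated as an assumption on $\univer$ relative to the actual $R$. I will argue that the simulated $R$ is distributed identically to the real one and is still a partial function, so the implication transfers.

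A second point is that the final checks cost $O(t_\mathsf{ver}+t_\mathsf{hash})$ extra time. I will absorb this into the $(t_\mathsf{ver}+1)t$ term, using the fact that the running time of $A$ includes reading its output. If that fails, I will state an additive constant, as in the blocking theorem.
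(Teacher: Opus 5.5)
Your proposal is correct and is essentially the paper's proof: the same collision-finder outputting $(T.D,T'.D)$ and the same double-opener outputting $(\txhash,d,d')$, each simulating $\unisrv$ with overhead $(t_\mathsf{ver}+1)\cdot t$, combined via the case split a)/b) and equation~(\ref{eq:losebound}) with $m=2$. One simplification: the paper's reductions output these values unconditionally, since a wrong output simply fails, so you can drop your end-of-run success checks and the worry about absorbing their cost or adding a constant; the paper also treats the simulated service as a perfect simulation of the ideal $\unisrv$ and does not comment further on (\ref{eq:eqtx}).
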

\begin{proof}
Let $A$ be a $t$-time double-spending adversary that succeeds with probability $\epsilon$. We construct a collision-finder $A_\mathsf{coll}$ for the hash function and a double-opening adversary $A_\mathsf{com}$ for the commitment scheme as follows:
\begin{itemize}
\item $A_\mathsf{coll}$ proceeds as follows:
\begin{enumerate}
\item Simulate $(T,\sigma,\txhash,d,\pinc),(T',\sigma',\txhash',d',\pinc'),(\pubkey,h)\gets A^\unisrv$ by maintaining its own version of $\unisrv$.
\item Output the pair $(T.D,T'.D)$.
\end{enumerate}
The computational overhead function of $A_\mathsf{coll}$ is $\tau_\mathsf{coll}(t) = (t_\mathsf{ver}+1) \cdot t$, because simulating a $\unisrv$ query requires one signature verification and the number of calls is limited by the running time $t$ of $A$.
\item $A_\mathsf{com}$ proceeds as follows:
\begin{enumerate}
\item Simulate $(T,\sigma,\txhash,d,\pinc),(T',\sigma',\txhash',d',\pinc'),(\pubkey,h)\gets A^\unisrv$ by maintaining its own version of $\unisrv$.
\item Output the triple $(\txhash,d,d')$.
\end{enumerate}
The computational overhead function of $A_\mathsf{com}$ is the same as
that of $A_\mathsf{coll}$, i.e. $\tau_\mathsf{com}(t)= \tau_\mathsf{coll}(t)=(t_\mathsf{ver}+1) \cdot t$.
\end{itemize}
 If $A$ succeeds, then in case a) the collision finder $A_\mathsf{coll}$ succeeds, and in case b) the double-opener $A_\mathsf{com}$ succeeds. Hence, $\epsilon \le \epsilon_\mathsf{coll} + \epsilon_\mathsf{com}$, where $\epsilon_\mathsf{coll}$ is the success probability of $A_\mathsf{coll}$ and
$\epsilon_\mathsf{com}$ is the success probability of $A_\mathsf{com}$.
Therefore, by equation (\ref{eq:losebound}), the function $S_\mathsf{double}$ defined by
\[
S_\mathsf{double}(\epsilon) = \tau^{-1}_\mathsf{coll}(S(\epsilon/2)) =
\frac{S(\epsilon/2)}{t_\mathsf{ver}+1}
\]
is a security profile of the Unicity service against double-spending.
\end{proof}

\subsection{Insecure Modifications of the Unicity Service}

\subsubsection{State Hash not Signed}

Consider the following modification of $\mathsf{US}$ that, given a request $Q=(\pubkey, h_\mathsf{st}, h_T, \sigma)$ proceeds as follows:
\begin{enumerate}
\item If $R[H(\pubkey, h_\mathsf{st})]=\bot$ and $\sigver(\pubkey, h_T, \sigma)=1$ then $R[H(\pubkey, h_\mathsf{st})]\gets h_T$.
\item Return a proof $\pi$ of the statement $R[H(\pubkey, h_\mathsf{st})]=h_T$.
\end{enumerate}

\noindent Assume that a user $A$ owns a token in state $(\pubkey,\sthash)$.
A malicious user that knows any pair  $(h,\sigma)$ such that $\sigver(\pubkey,h,\sigma)=1$
can now lock $A$-s token by sending malicious request $Q=(\pubkey,\sthash, h, \sigma)$ to $\mathsf{US}$.
Other users may indeed know such pairs if they have received tokens from $A$ (and hence, having seen transactions $(T,\sigma)$ signed by $A$).
Hence, such a $\unisrv$ is insecure against blocking.

\subsubsection{$R$ as a Set}\label{rasaset}

Consider the following modification of $\mathsf{US}$ where $R$ is just a set and a request $Q=(\pubkey, \sthash, \txhash, \sigma)$ is processed as follows:
\begin{enumerate}
\item If $H(\pubkey, \sthash)\not\in R$ and $\mathsf{V}(\pubkey, H(\sthash, \txhash), \sigma)=1$ then $R\gets R \cup \{H(\pubkey, \sthash)\}$.
\item Return a proof $\pinc$ of the statement $H(\pubkey, \sthash)\in R$.
\end{enumerate}

\noindent The verification function $\univer$ just ignores the second argument, i.e.  $\univer(k,v,\pi)=1$ holds iff $k\in R$ in the current state of $\unisrv$. \medskip

\noindent A user $A$ who owns a token in state $(\pubkey,\sthash)$ (such that $H(\pubkey,\sthash)\not\in R$) can now proceed as follows:
\begin{enumerate}
\item $A$ creates two signed transactions $(T_1,\sigma_1)$ and $(T_2,\sigma_2)$ with
$T_1=(\sthash,D_1)$ and $T_2=(\sthash,D_2)$ with $T_1.D.\pubkey'\neq T_2.D.\pubkey'$, i.e. $T_1$ and $T_2$ transfer the same token to two different public keys.
\item Let $(\txhash^1,d_1)\gets \commit(H(D_1))$ and $(\txhash^2,d_2)\gets \commit(H(D_2))$.
\item $A$ calls $\pinc\gets\unisrv(Q)$, where $Q=(\pubkey,\sthash,\txhash^1,\sigma_1)$ and $\pinc$ is a proof of the statement $H(\pubkey,\sthash)\in R$. After that, $H(\pubkey,\sthash)\in R$ holds.
\item Also, both $(T_1,\sigma_1,\txhash^1,d_1, \pinc)$ and $(T_2,\sigma_2,\txhash^2,d_2, \pinc)$ are certified transactions in $(\pubkey,\sthash)$ because, as $H(\pubkey,\sthash)\in R$:
\[
\univer(H(\pubkey,\sthash),\txhash^1,\pinc)=
\univer(H(\pubkey,\sthash),\txhash^2,\pinc)=1\enspace.
\]
\end{enumerate}

\noindent Therefore, such a $\unisrv$ is insecure against double-spending.

\section{Service Side Privacy}\label{sec:privacy}

Unicity service $\unisrv$ obtains information about the transactions $T=(\sthash,D)$ with tokens via the queries $Q=(\pubkey, \sthash, \txhash,\sigma)$. We want to ensure that $\unisrv$ does not learn too much about the contents and context of transactions, for example, which transaction belongs to which token.

Assume that a token is currently in the state $(\pubkey,\sthash)$ and the next transaction with the token is $T=(\sthash, D)$, where $D=(\pubkey',x,\auxd')$. To certify $T$, the query
$Q=(\pubkey,\sthash,\txhash,\sigma)$ is sent to $\unisrv$ where $\txhash=\commitc(H(D))$ and $\sigma = \sig(\prikey; H(\sthash,\txhash))$.
Assume that $\unisrv$ stores the query $Q$.
In the future, the next transaction $T'=(\sthash',D')$ will be executed with the same token and the query
$Q'=(\pubkey',\sthash',\txhash',\sigma')$ with $\sthash'=H(\sthash,x)$ will be received by $\unisrv$.

We do not want $\unisrv$ to be able to associate $Q$ and $Q'$ as two consecutive transactions with the same token. Such association is possible if $\unisrv$ somehow obtains the random $x$ included in the transaction $T$, because $\unisrv$ can then check that $H(\sthash,x)=\sthash'=f_{\sthash}(x)$. There are several ways how to find $x$:
\begin{itemize}
\item Invert the function $f_{\sthash}(\cdot)=H(\sthash,\cdot)$, i.e. find $x'$ such that $f_{\sthash}(x')=\sthash'$ and hope that $x'=x$. To prevent that, we may assume that the hash function $H$ is $(k,\ell)$-one-way.
\item Find $x$ based on the commitment $\txhash=\commitc(D)$. To prevent that, we may assume that the commitment scheme is computationally hiding.
\item Combine both techniques, i.e. invert $f_{\sthash}(x)$ with additional information about $x$ obtained from $\txhash$. To prevent that, we assume that the commitment scheme in use is \emph{perfectly hiding}. We will give a proof later in this section under some reasonable assumptions.
\end{itemize}

\noindent Note that if $\ell$ is large, then $H(\sthash,x)$ with $x\gets \{0,1\}^\ell$ may give very little information about the previous state hash $\sthash$.
For example, an extreme case is that if $\ell=k$ and the function $H(h,\cdot)\colon \{0,1\}^k\rightarrow \{0,1\}^k$ happens to be one-to-one for every $h$ (which most likely never happens for practical hash functions), then in fact $H(\sthash,x)$ gives no information on $\sthash$  because the equation $H(h,x)=\sthash'$ can be (uniquely) solved for every state hash $h$ that $\unisrv$ has stored or memorized. In practice, there is no need to choose a very large $\ell$ as practical security is possible if $\ell$ is much smaller than $k$.

\subsection{Security against Association}\label{sec:association}

The Association adversary $A=(A_1,A_2)$ is two-stage. \medskip

\noindent\textbf{Association scenario} involves the following steps:
\begin{enumerate}
\item $(\sthash, \pubkey', \auxd', a)\gets A_1$.
\item $x\gets \{0,1\}^\ell$.
\item $\sthash'\gets H(\sthash,x)$.
\item $\txhash\gets\commitc(H(\pubkey',x,\auxd')))$.
\item $x'\gets A_2(a; \sthash',\txhash)$.
\item The attack is successful iff $\sthash\in\{0,1\}^k$, $x'\in\{0,1\}^\ell$, and $H(\sthash,x')=\sthash'$. The success $\epsilon$ of $A$ is the probability that the attack is successful.
\end{enumerate}

\begin{definition}[association security]
The Unicity Service is said to be $S$-secure against association if it has $S$ as a security profile in the association scenario.
\end{definition}

\begin{theorem}
If the hash function is $S$-secure $(k,\ell)$-one-way and the commitment scheme is perfectly hiding, then the Unicity Service is $S_\mathsf{assoc}$-secure against association, where $S_\mathsf{assoc}(\epsilon)= S(\epsilon) - t_\mathsf{sm} - t_\mathsf{hash} - t_\mathsf{com}$, where
$t_\mathsf{sm}$, $t_\mathsf{hash}$, $t_\mathsf{com}$ are the random sampling time, the hashing time, and the commitment computation time, respectively.
\end{theorem}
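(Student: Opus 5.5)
I would prove this by a direct reduction to the $(k,\ell)$-one-wayness of $H$, following the same pattern as the lemma on perfectly hiding commitments and one-wayness. Let $A=(A_1,A_2)$ be a $t$-time association adversary with success $\epsilon$. First, modify the association scenario so that the commitment in step 4 is computed on an independent dummy value: sample $x''\gets\{0,1\}^\ell$ and set $\txhash\gets\commitc(H(\pubkey',x'',\auxd'))$. I claim the success probability of $A$ is unchanged.

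The justification for this step comes from Lemma~\ref{le:outputindependence}. Condition on the first-stage output $(\sthash,\pubkey',\auxd',a)$, which is fixed before $x$ is sampled. Apply the lemma with message $m=x$ and the deterministic function $g(x)=H(\pubkey',x,\auxd')$. It says that $x$ and $\commitc(g(x))$ are independent. Hence $(\sthash',\txhash)=(H(\sthash,x),\commitc(g(x)))$ has the same joint distribution as $(H(\sthash,x),\commitc(g(x'')))$: in both cases the commitment is independent of $x$, and by perfect hiding its marginal distribution does not depend on the committed message. Since $A_2$'s success event depends only on $(a,\sthash,\sthash',x')$, its success probability is the same in both scenarios. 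Averaging over the outputs of $A_1$ gives equality overall.

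This conditioning step is where I expect the main care to be needed. $\pubkey'$ and $\auxd'$ are chosen adversarially, so the lemma must be applied pointwise for each fixed first-stage output rather than once globally. I also have to note that the success predicate involves $x$ only through $\sthash'$, so equality of the joint distribution of $(\sthash',\txhash)$ given $(\sthash,a)$ is all that is needed.

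Next, I build a one-wayness adversary $B=(B_1,B_2)$ for $H$ in the $(k,\ell)$-one-wayness scenario:
\begin{itemize}
\item $B_1$ runs $A_1$ to get $(\sthash,\pubkey',\auxd',a)$ and outputs $(\sthash,(a,\pubkey',\auxd'))$.
\item $B_2$, on input $((a,\pubkey',\auxd'),y)$ with $y=H(\sthash,x)$, samples $x''\gets\{0,1\}^\ell$, computes $\txhash\gets\commitc(H(\pubkey',x'',\auxd'))$, and returns $A_2(a;y,\txhash)$.
\end{itemize}
$B$ succeeds exactly when $A$ succeeds in the modified scenario. The success conditions $\sthash\in\{0,1\}^k$, $x'\in\{0,1\}^\ell$ and $H(\sthash,x')=\sthash'$ coincide, so $B$ has success $\epsilon$.

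Finally, $B$'s running time is at most $t+t_\mathsf{sm}+t_\mathsf{hash}+t_\mathsf{com}$: one sampling, one hash of the dummy data, and one commitment. The $S$-security of $H$ gives $t+t_\mathsf{sm}+t_\mathsf{hash}+t_\mathsf{com}\ge S(\epsilon)$. This yields $S_\mathsf{assoc}(\epsilon)=S(\epsilon)-t_\mathsf{sm}-t_\mathsf{hash}-t_\mathsf{com}$, corresponding to the overhead function $\tau(t)=t+t_\mathsf{sm}+t_\mathsf{hash}+t_\mathsf{com}$.
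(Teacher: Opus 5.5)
Your proposal is correct and follows the paper's proof essentially step for step. It uses the same hybrid, which commits to an independent dummy nonce $x''$ and is justified for each fixed first-stage output via Lemma~\ref{le:outputindependence} and perfect hiding. It also uses the same wrapper adversary against $(k,\ell)$-one-wayness and the same overhead $\tau(t)=t+t_\mathsf{sm}+t_\mathsf{hash}+t_\mathsf{com}$.
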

\begin{proof}
Let $A=(A_1,A_2)$ be a $t$-time adversary that succeeds in the association scenario with probability $\epsilon$. Consider the following modified attack scenario:
\begin{enumerate}
\item $(\sthash, \pubkey', \auxd', a)\gets A_1$.
\item $x\gets \{0,1\}^\ell$.
\item $\sthash'\gets H(\sthash,x)$.
\item $x''\gets \{0,1\}^\ell$.
\item $\txhash'\gets\commitc(H(\pubkey',x'',\auxd')))$.
\item $x'\gets A_2(a; \sthash',\txhash')$.
\item The attack is successful iff $\sthash\in\{0,1\}^k$, $x'\in\{0,1\}^\ell$, and $H(\sthash,x')=\sthash'$.
\end{enumerate}

\noindent For any fixed value of $L=(\sthash, \pubkey', \auxd', a)$, due to perfect hiding, commitments $\txhash=\commitc(H(\pubkey',x,\auxd'))$ and $\txhash'=\commitc(H(\pubkey',x'',\auxd'))$ have equal probability distributions. Moreover, by Lemma~\ref{le:outputindependence} (with $g(x)=H(\pubkey',x,\auxd')$), the random variables $x$ and $\txhash=\commitc(H(\pubkey',x,\auxd'))$ are independent.

Since $x''$ and $x$ are independent, the commitment $\txhash'=\commitc(H(\pubkey',x'',\auxd'))$ is independent of both $x$ and $\sthash'=H(\sthash,x)$. Therefore, the joint distributions of $(\sthash',\txhash)$ and $(\sthash',\txhash')$ are equal, and hence $A$ succeeds in the modified scenario with probability $\epsilon$. We construct an adversary $A'=(A'_1,A'_2)$ as follows:
\begin{itemize}
\item $A'_1$ proceeds as follows:
\begin{enumerate}
\item $(\sthash, \pubkey', \auxd', a)\gets A_1$;
\item return $(\sthash, a')$, where $a'=(\pubkey', \auxd', a)$.
\end{enumerate}
\item $A'_2(a';y)$ with $a'=(\pubkey', \auxd', a)$ proceeds as follows:
\begin{enumerate}
\item $x''\gets \{0,1\}^\ell$;
\item $\txhash'\gets\commitc(H(\pubkey',x'',\auxd')))$;
\item $x'\gets A_2(a; y,\txhash')$;
\item return $x'$.
\end{enumerate}
\end{itemize}

\noindent The computational time overhead function of $A'$ is $\tau(t) = t + t_\mathsf{sm} + t_\mathsf{hash} + t_\mathsf{com}$ and hence, by equation (\ref{eq:losebound}), the function $S_\mathsf{assoc}$ defined by
\[
S_\mathsf{assoc}(\epsilon) = \tau^{-1}(S(\epsilon)) = S(\epsilon) - t_\mathsf{sm} - t_\mathsf{hash} - t_\mathsf{com}
\]
is a security profile of the Unicity service against association.
\end{proof}

\subsection{Discussion}

The security proofs against blocking and double spending are also valid in the case when the commitment scheme is omitted from the construction and the transaction data hash is computed by $\txhash = H(\pubkey',x,\auxd')$.
At the same time, there seem to be no obvious ways of proving that such a simplified scheme is secure against association.

Still, it may be a plausible practical assumption that given $\sthash$ and the next state hash $\sthash'=H(\sthash,x)$ and $\txhash = H(\pubkey',x,\auxd')$ it is infeasible to find $x'$ so that $H(\sthash,x')=\sthash'$. At least for example, if $H$ is \textsf{SHA-256}, there seem to be no obvious ways how to do it if $x$ has sufficiently many random bits.

\subsection{Next Transaction Timing Privacy}

While the Unicity infrastructure prevents the Unicity Service from linking consecutive transactions with the same token, a different privacy issue emerges in direct peer-to-peer transfers.

\paragraph{The Privacy Leak}

Consider a transaction where party $A$ (the sender) transfers a token to party $B$ (the recipient). During this transfer, $A$ possesses the following information:
\begin{itemize}
\item The recipient's public key $\pubkey'$
\item The random nonce $x \gets \{0,1\}^\ell$ used in the transaction data
\item The current state hash $\sthash$ of the token
\end{itemize}

\noindent From this information, $A$ can compute the next state hash:
\[
\sthash' = H(\sthash, x) \enspace.
\]

\noindent When $B$ subsequently spends the token by executing a transaction, the Unicity Service records the mapping $R[H(\pubkey', \sthash')] \gets \txhash'$ for some transaction hash $\txhash'$. Since the registry $R$ is publicly accessible (anyone can query whether $R[k] = \bot$ for any key $k$), party $A$ can fetch the key $H(\pubkey', \sthash')$ to detect when $B$ spends the token.

This represents a timing privacy leak: although $A$ cannot determine the recipient of $B$'s subsequent transaction nor the transaction details, $A$ learns the \emph{moment} when $B$ chooses to spend the token. In contexts where transaction timing correlates with sensitive information (e.g., spending patterns revealing business activities), this leakage may be undesirable.

\paragraph{Impact} The practical significance of the leak is limited, as:
\begin{itemize}
\item The leak reveals only timing information, not transaction content or recipient identity
\item It requires the sender to actively and continuously monitor the registry
\item It applies only to the next direct transfer
\end{itemize}

\noindent Nevertheless, we present two mitigation strategies for contexts where even timing privacy is required.

\subsection{Solution 1: Timing Obfuscation}

The simplest mitigation strategy is for party $B$ to introduce uncertainty about the actual spending moment through a dummy transaction, e.g. to himself. Specifically:

\begin{itemize}
\item Upon receiving a token, $B$ may probabilistically execute a transaction to themselves
\item This creates a registry entry $R[H(\pubkey', \sthash')] \gets \txhash'$ at a time chosen by $B$
\item The token's state then transitions to $(\pubkey'', \sthash'')$ where $\pubkey'' = \pubkey'$ and $\sthash'' = H(\sthash', x')$ for a fresh random $x' \gets \{0,1\}^\ell$
\item Party $A$ may observe the registry update but cannot distinguish whether $B$ has genuinely spent the token or executed a dummy self-transfer
\end{itemize}

\noindent By randomizing the decision to make such self-transfers, $B$ can effectively hide the timing of the actual spending transaction. The trade-off is the cost of additional transactions to the Unicity Service and the increased transaction history length.

\subsection{Solution 2: Masked Address Protocol}

An interactive protocol that prevents $A$ from computing the observable registry key prevents such leak. The recipient $B$ computes a masking key $y$ which blinds the recipient's identity and provides the randomness needed for the next state hash. There is no need for $x$ from the base protocol.

\paragraph{Protocol}

\begin{enumerate}
\item \textbf{Recipient generates masked address}: Party $B$ samples a random mask $y \gets \{0,1\}^\ell$ and computes:
\[
\mpubkey' \gets H(\pubkey', y) \enspace.
\]

\item \textbf{Address exchange}: $B$ sends $\mpubkey'$ to $A$.

\item \textbf{Transaction creation}: $A$ creates transaction data $D = (\mpubkey', \auxd')$ (note: no $x$ field), computes the commitment $(\txhash, d) \gets \commit(H(D))$, creates signature $\sigma \gets \sig(\prikey, H(\sthash, \txhash))$, and obtains inclusion proof $\pinc \gets \unisrv(\pubkey, \sthash, \txhash, \sigma)$.

\item \textbf{Transaction transmission}: $A$ sends $(T, \sigma, \txhash, d, \pinc)$ to $B$. Note that $y$ need not be transmitted since $B$ already possesses it.

\item \textbf{State hash computation}: Upon receiving the certified transaction, $B$ computes the next state hash as:
\[
\sthash' \gets H(\sthash, y) \enspace.
\]

\item \textbf{Token ledger verification}: When $B$ presents the token to a third party or spends it, the ledger entry $(T, \sigma, \txhash, d, \pinc; \pubkey, \sthash)$ must be verified along with the tuple $(\pubkey', y)$ where:
\begin{itemize}
\item $H(\pubkey', y) = T.D.\mpubkey'$ (mask consistency)
\item $\sthash' = H(\sthash, y)$ (state hash derivation)
\end{itemize}

\item \textbf{Spending the token}: $B$ spends from state $(\pubkey', \sthash')$ by creating a new certification request $Q' = (\pubkey', \sthash', \txhash'', \sigma')$ to the Unicity Service.
\end{enumerate}

\paragraph{Security Against Double-Spending}

The simplified protocol maintains the fundamental double-spending prevention property. We must verify that $B$ cannot create multiple valid next states from a single certified transaction.

\begin{theorem}[No Parallel Registry Entries]
If $H$ is $S$-secure collision-resistant, then recipient $B$ cannot create two distinct valid next states from a single certified transaction.
\end{theorem}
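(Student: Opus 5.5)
We will prove the statement as a direct reduction to collision-resistance of $H$. First we make precise what ``two distinct valid next states'' means. A next state derived from a certified transaction $(T,\sigma,\txhash,d,\pinc)$ in state $(\pubkey,\sthash)$ is a pair $(\pubkey',\sthash')$ accompanied by a witness $(\pubkey',y)$ that passes the token ledger checks, namely $H(\pubkey',y)=T.D.\mpubkey'$ and $\sthash'=H(\sthash,y)$. The attack scenario is then as follows. The adversary $B$, with access to $\unisrv$, outputs a single certified transaction (checked by $\certver$ in some state $(\pubkey,\sthash)$) together with two witnesses $(\pubkey'_1,y_1)$ and $(\pubkey'_2,y_2)$. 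The attack succeeds iff both witnesses verify against the same $T.D.\mpubkey'$ and $T.\sthash$, and $(\pubkey'_1,H(\sthash,y_1))\neq(\pubkey'_2,H(\sthash,y_2))$. As in the previous proofs, the goal is a security profile of the form $\tau^{-1}(S(\epsilon))$, with $\tau$ linear and accounting for simulating $\unisrv$ (one signature verification per query) and a few hash evaluations.

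The key step is the case analysis. Both witnesses satisfy mask consistency, so $H(\pubkey'_1,y_1)=T.D.\mpubkey'=H(\pubkey'_2,y_2)$. Suppose $(\pubkey'_1,y_1)=(\pubkey'_2,y_2)$. Then $\pubkey'_1=\pubkey'_2$, and $H(\sthash,y_1)=H(\sthash,y_2)$ because $H$ is a deterministic function. Hence the two next states coincide, contradicting success. Therefore $(\pubkey'_1,y_1)\neq(\pubkey'_2,y_2)$, and together with equal hash values this gives a collision for $H$. The collision-finder $A_\mathsf{coll}$ runs $B$, simulating $\unisrv$ with its own copy of $R$, and outputs the pair of inputs $(\pubkey'_1,y_1)$ and $(\pubkey'_2,y_2)$. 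It succeeds whenever $B$ does, so $\epsilon\le\epsilon_\mathsf{coll}$. We then get the profile $S'(\epsilon)=\frac{S(\epsilon)}{t_\mathsf{ver}+1}$, or the same with an additive hash-time term, by equation (\ref{eq:losebound}) with $m=1$.

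We could also note that the state hash component alone cannot diverge without a collision: $\sthash'_1\neq\sthash'_2$ forces $y_1\neq y_2$, which again yields the mask collision above. We will also mention that uniqueness of the certified transaction itself, that is, only one $T$ per $(\pubkey,\sthash)$, is already covered by the double-spending theorem. The present statement only concerns branching after certification.

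The main obstacle is not technical but definitional. We must fix a formal scenario for ``valid next state'' and ensure that the encoding of $H$ on pairs is injective, so that $(\pubkey'_1,y_1)\neq(\pubkey'_2,y_2)$ really means the two inputs to $H$ differ as bit strings. We will assume, as the paper implicitly does throughout (e.g.\ for $H(\pubkey,\sthash)$), that tuples are encoded unambiguously with fixed lengths $p$ and $\ell$. With this assumption the reduction is immediate.
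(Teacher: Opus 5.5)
Your proof is correct and is essentially the paper's argument. Both note that two valid witnesses must satisfy $H(\pubkey'_1,y_1)=T.D.\mpubkey'=H(\pubkey'_2,y_2)$, rule out equal inputs because the next states would then coincide, and have the collision-finder output the two preimages. Your version is slightly more general: the paper fixes a single $\pubkey'$ and lets only $\sthash'$ vary, while your success condition also covers branching to a different public key, and you derive an explicit profile $S(\epsilon)/(t_\mathsf{ver}+1)$ where the paper only says ``with overwhelming probability''.
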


\begin{proof}
Assume $B$ receives a certified transaction $(T, \sigma, \txhash, d, \pinc)$ in state $(\pubkey, \sthash)$, where $T.D.\mpubkey' = H(\pubkey', y)$ for $B$'s public key $\pubkey'$ and mask $y$.

Suppose $B$ attempts to create two distinct next states $(\pubkey', \sthash'_1)$ and $(\pubkey', \sthash'_2)$ with $\sthash'_1 \neq \sthash'_2$. For both states to be valid continuations from the certified transaction, there must exist masks $y_1, y_2$ such that:
\begin{align*}
H(\pubkey', y_1) &= T.D.\mpubkey' = H(\pubkey', y_2) \\
\sthash'_1 &= H(\sthash, y_1) \\
\sthash'_2 &= H(\sthash, y_2)
\end{align*}

From the first equation, if $H$ is collision-resistant, then with overwhelming probability $y_1 = y_2$, which implies $\sthash'_1 = H(\sthash, y_1) = H(\sthash, y_2) = \sthash'_2$, contradicting the assumption that $\sthash'_1 \neq \sthash'_2$.

More formally, if $B$ could create such distinct states, then $B$ could be used to construct a collision-finding adversary $A_{\mathsf{coll}}$ that outputs $((\pubkey', y_1), (\pubkey', y_2))$ with $y_1 \neq y_2$ but $H(\pubkey', y_1) = H(\pubkey', y_2)$, contradicting the collision-resistance of $H$.

Therefore, the certified transaction from $A$ uniquely determines $B$'s next state as $(\pubkey', H(\sthash, y))$ where $y$ is the unique preimage (up to collisions) of $T.D.\mpubkey'$ under the function $H(\pubkey', \cdot)$.
\end{proof}

\paragraph{Security Against Blocking}

The masked protocol inherits the blocking resistance of the standard protocol:

\begin{theorem}[No Blocking in Masked Address Protocol]
The masked protocol is $S_{\mathsf{block}}$-secure against blocking with the same security profile as the standard protocol.
\end{theorem}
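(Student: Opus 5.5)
The plan is to show that, from the Unicity Service's point of view, the masked protocol is the standard protocol with a different transaction data $D=(\mpubkey',\auxd')$. The blocking scenario, and the reduction in the earlier blocking theorem, never look inside $D$. Concretely, the service still processes requests $Q=(\pubkey,\sthash,\txhash,\sigma)$ in the same way. It checks $R[H(\pubkey,\sthash)]=\bot$ and $\sigver(\pubkey,H(\sthash,\txhash),\sigma)=1$, and the key stored is again $H(\pubkey,\sthash)$. The only changes are the format of $D$ and the way the next state hash $\sthash'=H(\sthash,y)$ is derived. Neither appears in the request that the service verifies. I would therefore state the blocking scenario for the masked protocol verbatim as before. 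The only difference is that the transaction-signer oracle $\mathsf{TS}(\prikey,\cdot)$ now accepts inputs $(h,D)$ with $D$ of the masked form and returns $(\sigma,\txhash,d)$ with $(\txhash,d)\gets\commit(H(D))$ and $\sigma\gets\sig(\prikey,H(h,\txhash))$.

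Next I would rerun the case analysis a), b1), b2) word for word. Suppose some request enforced $R[H(\pubkey,\sthash)]\neq\bot$. In case a) the request was for a different pair $(\pubkey'',\sthash'')$, which gives a collision $H(\pubkey,\sthash)=H(\pubkey'',\sthash'')$. Otherwise the request was $(\pubkey,\sthash,\txhash,\sigma)$ with a valid signature on $m=H(\sthash,\txhash)$. In case b1) some $\mathsf{TS}$ query on a different state hash produced the same signed message, which again is a collision. In case b2) no $\mathsf{TS}$ query ever caused $\sig(\prikey;m)$ to be called, so $(m,\sigma)$ is an existential forgery. The adversaries $A_\mathsf{coll}$ and $A_\mathsf{ex}$ are constructed exactly as before. $A_\mathsf{coll}$ generates its own keypair and simulates both oracles. $A_\mathsf{ex}$ answers $\mathsf{TS}$ queries through its signing oracle, computing the commitment itself. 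Both keep their own copy of $R$.

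Finally I would check the time accounting. A $\mathsf{TS}$ query still costs one commitment, one hash and one signature, and a $\unisrv$ query still costs one verification. The masked address is computed by the adversary, not the simulator, so it adds nothing. The overheads $\tau_\mathsf{coll},\tau_\mathsf{ex}$ are thus unchanged (under the same convention that per-query commit and hash costs are absorbed into the unit overhead). Combined with $\epsilon\le\epsilon_\mathsf{coll}+\epsilon_\mathsf{ex}$ and equation~(\ref{eq:losebound}), this yields the same $S_\mathsf{block}$.

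The main obstacle is making sure the new fields do not create a blocking path that bypasses the service. One concern is whether an attacker who knows $\mpubkey'$, or who chooses $y$ adversarially, could force a key $H(\pubkey',\sthash')$ into $R$ for the recipient. I would argue that blocking is defined purely by entries in $R$, and those arise only through accepted requests. Blocking a state $(\pubkey',\sthash')$ with $\sthash'=H(\sthash,y)$ therefore again requires a signature under $\pubkey'$ or a collision. The mask only changes which states exist, not how they get blocked.
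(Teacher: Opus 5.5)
Your proposal is correct and takes the same approach as the paper: the Unicity Service and the blocking scenario never inspect the internal format of $D$, so the case analysis a), b1), b2) and the reductions $A_\mathsf{coll}$, $A_\mathsf{ex}$ from Section~\ref{sec:blocking} carry over unchanged and give the same $S_\mathsf{block}$. You are more explicit than the paper, which only says the reduction ``proceeds identically''; in particular, you check that the time overheads are unchanged and that an adversarially chosen mask $y$ opens no blocking path other than through accepted requests.
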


\begin{proof}
The blocking security analysis from Section~\ref{sec:blocking} applies directly. The only modification is that transaction data now has the form $D = (\mpubkey', \auxd')$ instead of $D = (\pubkey', x, \auxd')$.

An adversary attempting to block state $(\pubkey', \sthash')$ must create a valid certification request $Q = (\pubkey', \sthash', \txhash, \sigma)$ where $\sigver(\pubkey', H(\sthash', \txhash), \sigma) = 1$. This requires either:
\begin{itemize}
\item Breaking the signature scheme (existential forgery), or
\item Finding a collision in $H$ to reuse a previous signature
\end{itemize}

Neither attack is facilitated by the removal of $x$ from the transaction structure. The security reduction proceeds identically to Theorem~1 in Section~\ref{sec:blocking}.
\end{proof}

\paragraph{Timing Privacy}

The masked protocol provides next transaction timing privacy:

\begin{theorem}[Timing Privacy in Masked Address Protocol]
If $H$ is $(k, \ell)$-one-way, party $A$ cannot feasibly determine when party $B$ spends the token.
\end{theorem}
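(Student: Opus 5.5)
The plan is a reduction to $(k,\ell)$-one-wayness in the style of the association theorem. First I need to pin down what $A$ sees and what would let $A$ detect spending. $A$ sees the current state hash $\sthash$, the masked address $\mpubkey'=H(\pubkey',y)$, the transaction data $D=(\mpubkey',\auxd')$, the opening $d$ and the proof $\pinc$. Since $A$ also knows the public identity of $B$, I will assume $A$ knows $\pubkey'$. The registry key created when $B$ spends is $H(\pubkey',\sthash')$ with $\sthash'=H(\sthash,y)$. So monitoring the registry at the right key amounts to computing $\sthash'$, or at least the key $H(\pubkey',\sthash')$. I will therefore formalize timing privacy as a scenario with a two-stage adversary $A=(A_1,A_2)$:
\begin{enumerate}
\item $A_1$ outputs $(\sthash,\pubkey',\auxd',a)$.
\item $y\gets\{0,1\}^\ell$ is sampled.
\item $A_2(a;H(\pubkey',y))$ outputs either a candidate $y'$ with $H(\pubkey',y')=H(\pubkey',y)$, or directly the key $H(\pubkey',H(\sthash,y))$.
\end{enumerate}

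The main step is to handle the second kind of output by splitting on whether $A$ effectively recovers a preimage. If $A_2$ outputs $y'$ with $H(\pubkey',y')=\mpubkey'$, then $A'=(A'_1,A'_2)$ with $A'_1$ outputting $h=\pubkey'$ (padded or hashed to $k$ bits) directly breaks the $(k,\ell)$-one-wayness of $H$ at prefix $\pubkey'$. The time overhead is only copying, so $S_\mathsf{timing}(\epsilon)\approx S(\epsilon)$.

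The harder case is an adversary that outputs the key $H(\pubkey',H(\sthash,y))$, or merely recognizes it, without ever producing a preimage of $\mpubkey'$. One-wayness alone does not rule this out. Both $H(\pubkey',y)$ and $H(\sthash,y)$ are functions of the same secret $y$, and one-wayness of each one separately says nothing about correlation between them. This is the obstacle. My first attempt is to argue that any method of computing $\sthash'$ from $\mpubkey'$ yields, via recognition, a way to test candidate $y$ values, and hence is an inversion procedure. That argument is only heuristic. The fallback is a random-oracle-style assumption, or an explicit independence assumption, that $H(\sthash,y)$ is unpredictable given $H(\pubkey',y)$. Under that assumption the success probability of guessing the key is bounded by the one-wayness advantage plus a collision term, the collision term coming from keys $H(\pubkey',\cdot)$ colliding, which the proof of the No Parallel Registry Entries theorem already handles.

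Finally, the commitment $\txhash$ and the signature reveal nothing extra about $y$. $\txhash$ commits to $H(D)$, and $D$ contains only $\mpubkey'$ and $\auxd'$. The signature is on $H(\sthash,\txhash)$. Both are efficiently simulatable from $(\mpubkey',\auxd')$ and do not involve $y$ at all, so $A'_2$ can simulate them itself. There is no need for Lemma~\ref{le:outputindependence} here, unlike in the association theorem, and the overhead adds only $t_\mathsf{com}+t_\mathsf{sig}+t_\mathsf{hash}$. Without the extra unpredictability assumption, I will state the theorem with the caveat that the result covers adversaries that detect spending by recomputing the registry key through recovering $y$.
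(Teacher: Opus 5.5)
Your proposal does not prove the theorem as stated. The only case you actually reduce to $(k,\ell)$-one-wayness is an adversary that outputs a preimage of $\mpubkey'$, which is exactly the paper's option (1). As you say yourself, you leave open an adversary that computes or recognizes the registry key $H(\pubkey',H(\sthash,y))$ without ever recovering $y$.

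The paper does not close this case either. Its proof asserts that computing $\sthash'$ requires either extracting $y$ or guessing and verifying it. It then dismisses guessing because $A$ supposedly cannot learn $\pubkey'$ from $\mpubkey'$ ``by one-wayness''. But $(k,\ell)$-one-wayness only concerns recovering the second argument when the first is known, so it gives no hiding of $\pubkey'$. In your model $A$ knows $\pubkey'$ anyway. Your diagnosis is correct: the paper's case split silently skips an adversary exploiting correlation between $H(\pubkey',y)$ and $H(\sthash,y)$.

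In fact the missing case cannot be handled from the stated hypothesis, so your extra assumption is necessary, not merely convenient. Here is a counterexample.
\begin{itemize}
\item Take public keys to be $k$-bit strings. Your case-1 reduction needs this too: padding or hashing $\pubkey'$ inside the reduction changes the function being inverted, since $H(\mathrm{pad}(\pubkey'),\cdot)$ is not $H(\pubkey',\cdot)$.
\item Let $\ell<k$ and let $g\colon\{0,1\}^\ell\to\{0,1\}^k$ be one-way.
\item Set $H(h,x)=g(x)\oplus h$ for $|x|=\ell$ and $H(h,z)=z\oplus h$ for $|z|=k$.
\end{itemize}
Inverting $H(h,\cdot)$ on $\ell$-bit inputs for a known $h$ is exactly inverting $g$, so $H$ is $(k,\ell)$-one-way. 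Yet $\mpubkey'=g(y)\oplus\pubkey'$ and $\sthash'=g(y)\oplus\sthash$. The registry key is therefore $H(\pubkey',\sthash')=\sthash'\oplus\pubkey'=\mpubkey'\oplus\sthash$, which $A$ computes without $y$ and even without $\pubkey'$. This $H$ is not collision-resistant, but collision resistance is not a hypothesis of the theorem.

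So the most that can honestly be claimed is one of two things. Either your caveated statement, restricted to adversaries that recover a preimage. Or the unrestricted statement under a random-oracle or correlated-input unpredictability assumption on $H$, as you propose.
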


\begin{proof}
For $A$ to monitor when $B$ spends the token, $A$ must compute the registry key $k' = H(\pubkey', \sthash')$ that will be set upon $B$'s certification request. This requires computing $\sthash' = H(\sthash, y)$.

Party $A$ knows $\sthash$ and $\mpubkey' = H(\pubkey', y)$ but not $y$. To compute $\sthash'$, $A$ must:
\begin{enumerate}
\item Extract $y$ from $\mpubkey'$ and $\pubkey'$ (which may not be known to $A$), or
\item Guess $y$ and verify the guess by checking if $H(\sthash, y)$ matches some observed registry entry
\end{enumerate}

Option (1) requires inverting $H(\pubkey', \cdot)$, contradicting the one-wayness of $H$. Option (2) fails because $A$ cannot determine $\pubkey'$ from $\mpubkey'$ (again by one-wayness) and thus cannot construct the registry key $H(\pubkey', H(\sthash, y))$ to verify any guess.
\end{proof}

\paragraph{Implementation Considerations}

The masked protocol requires $B$ to reliably store the mask $y$ along with the private key $\prikey'$ and state hash $\sthash'$ until making the next transaction. Loss of $y$ renders the token unspendable, as $B$ cannot prove that $\sthash' = H(\sthash, y)$ is the legitimate next state derived from the previous owner's certified transaction.

\subsection{Solution 3: Non-Interactive Masked Protocol}\label{sec:non-interactive-masked-protocol}

The interactive masked protocol requires the recipient to generate and communicate $\mpubkey' = H(\pubkey', y)$ before the sender can construct the transaction. We now present a non-interactive variant that eliminates this communication round while preserving all security properties.

\subsubsection{Mask Derivation via Verifiable Random Function}

We observe that a Verifiable Random Function (VRF) allows the holder of a private key to produce a unique, verifiable pseudorandom value derived from an input. This property enables verifiable, deterministic mask derivation without revealing the private key.

\paragraph{VRF Construction}

The recipient $B$ computes a \emph{VRF proof} and \emph{output} on the incoming state hash:
\[
(y, \pi) \gets \mathsf{VRF.Prove}(\prikey', \sthash)
\]

The value $y$ serves as the mask. As the VRF has the uniqueness property, $y$ is uniquely determined by $(\prikey', \sthash)$.

\paragraph{Complete Protocol}

\medskip
\noindent\textbf{Transaction $A \to B$}:

\begin{enumerate}
\item Transaction data: $D = (\pubkey', \auxd')$
\item Commitment: $(\txhash, d) \gets \commit(H(D))$
\item Signature: $\sigma \gets \sig(\prikey, H(\sthash, \txhash))$
\item Request to $\unisrv$: $Q = (\pubkey, \sthash, \txhash, \sigma)$
\item Obtain inclusion proof $\pinc$
\item Send to $B$: $(T, \sigma, \txhash, d, \pinc)$ where $T = (\sthash, D)$
\end{enumerate}

\medskip
\noindent\textbf{$B$ computes next state}:

\begin{enumerate}
\item VRF computation: $(y, \pi) \gets \mathsf{VRF.Prove}(\prikey', \sthash)$
\item Next state hash: $\sthash' \gets H(\sthash, y)$
\end{enumerate}

\medskip
\noindent\textbf{Transaction $B \to C$}:

\begin{enumerate}
\item Transaction data: $D' = (\pubkey'', \auxd'')$
\item Commitment: $(\txhash', d') \gets \commit(H(D'))$
\item Signature: $\sigma' \gets \sig(\prikey', H(\sthash', \txhash'))$
\item Request to $\unisrv$: $Q' = (\pubkey', \sthash', \txhash', \sigma')$
\item Obtain inclusion proof $\pinc'$
\item Send to $C$: $(T', \sigma', \txhash', d', \pinc'; \pi)$ along with ledger $\mathcal{L}$
\end{enumerate}

\paragraph{Recipient $C$'s Verification}

Given the certified transaction, binding signature $\sigma_{\mathsf{bind}}$, and ledger $\mathcal{L}$:

\begin{enumerate}
\item \textbf{Verify VRF}:
\[
\mathsf{VRF.Verify}(\pubkey', \sthash, y, \pi) \stackrel{?}{=} 1
\]
where $y$ is the claimed mask provided (or derived from $\pi$).

\item \textbf{Verify state hash derivation}:
\[
\sthash' \stackrel{?}{=} H(\sthash, y)
\]

\item \textbf{Verify certified transaction}:
\[
\certver(T', \sigma', \txhash', d', \pinc'; \pubkey', \sthash') \stackrel{?}{=} 1
\]

\item \textbf{Verify ledger}: Recursively verify all previous transactions in $\mathcal{L}$
\end{enumerate}

\paragraph{Security Analysis}

\begin{theorem}[No Double-Spending]
If the signature scheme is deterministic and unforgeable, recipient $B$ cannot create two distinct valid next states from a single certified transaction.
\end{theorem}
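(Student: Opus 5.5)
The plan mirrors the proof of the No Parallel Registry Entries theorem, with VRF uniqueness (the ``deterministic and unforgeable signature'' hypothesis is in effect a VRF given by a unique signature) replacing collision-resistance of $H(\pubkey',\cdot)$.

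\textbf{Step 1: fix the setting.} $B$ receives a certified transaction $(T,\sigma,\txhash,d,\pinc)$ in state $(\pubkey,\sthash)$ with $T.D.\pubkey'=\pubkey'$. By the double-spending theorem of Section~\ref{sec:double-spending}, $T$ is the only transaction certified in $(\pubkey,\sthash)$, except with the probability bounded there. So $\pubkey'$ and $\sthash$ are fixed by the certified transaction and by the earlier ledger.

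\textbf{Step 2: write down what two distinct next states require.} Suppose $B$ outputs two distinct next states $(\pubkey',\sthash'_1)\neq(\pubkey',\sthash'_2)$ that both pass $C$'s verification. Then $B$ must supply pairs $(y_1,\pi_1)$ and $(y_2,\pi_2)$ with
\begin{align*}
\mathsf{VRF.Verify}(\pubkey',\sthash,y_i,\pi_i)&=1,\\
\sthash'_i&=H(\sthash,y_i), \qquad i=1,2.
\end{align*}
Since $\sthash'_1\neq\sthash'_2$ and $H$ is a function, this forces $y_1\neq y_2$.

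\textbf{Step 3: reduce to VRF uniqueness.} I would instantiate the VRF as $y=H(\sigma)$ with $\pi=\sigma=\sig(\prikey',\sthash)$, and verification checking $\sigver(\pubkey',\sthash,\pi)=1$ and $y=H(\pi)$. From $y_1\neq y_2$ we get $\pi_1\neq\pi_2$: two distinct valid signatures on the same message $\sthash$ under $\pubkey'$. This is where determinism is used. I would read ``deterministic'' in the strong sense that for every $\pubkey'$ in the key space and every message $m$, at most one $\sigma$ satisfies $\sigver(\pubkey',m,\sigma)=1$. Under that reading the two signatures are a contradiction outright, and the theorem follows without any probability loss.

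\textbf{Main obstacle: the hypothesis is too weak as stated.} A deterministic signing algorithm does not by itself rule out a second accepting signature. For example, ECDSA is malleable, since $(r,s)$ and $(r,-s)$ both verify. EF-CMA unforgeability also does not forbid $B$, who holds $\prikey'$, from producing two valid signatures on a message it legitimately signed. Moreover, $B$ chooses its own key, so it could pick a maliciously generated $\pubkey'$. I would therefore make the uniqueness assumption explicit: verification-level uniqueness holding for all public keys, including adversarially chosen ones, or equivalently the full uniqueness property of the VRF. The theorem then follows directly from Steps 2 and 3.

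Unforgeability is not needed for this double-spending property. It only guarantees that no one other than $B$ can compute $y$ (pseudorandomness), which matters for the timing-privacy claim rather than this one.
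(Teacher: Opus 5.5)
Your argument is correct and is essentially the paper's proof. Two distinct next states force two distinct masks $y_1\neq y_2$ with valid proofs for the same input $\sthash$ under $\pubkey'$, which contradicts VRF uniqueness, so $\sthash'=H(\sthash,y)$ is unique. Your caveat about the hypothesis is also sound: the paper's proof likewise relies on VRF uniqueness rather than on determinism plus unforgeability, and the paper itself later concedes that deterministic schemes such as EdDSA and Schnorr are not unique and are therefore unsafe for this protocol.
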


\begin{proof}
Given a certified transaction in state $(\pubkey, \sthash)$ transferring to $\pubkey'$, the VRF output $y$ is uniquely determined by $\sthash$ and the private key $\prikey'$ corresponding to $\pubkey'$.

For a secure VRF, there exists exactly one valid output $y$ for input $\sthash$ and key $\prikey'$. Therefore $\sthash' = H(\sthash, y)$ is unique.

To produce a different valid state $\sthash'' \neq \sthash'$, party $B$ would need to produce a valid proof $\pi'$ for a different output $y'$, which contradicts the uniqueness property of the VRF.
\end{proof}

\begin{theorem}[Timing Privacy]
If the VRF is pseudorandom, party $A$ cannot determine when $B$ spends the token.
\end{theorem}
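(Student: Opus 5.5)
To prove that $A$ cannot determine when $B$ spends, I would reduce to VRF pseudorandomness through a hybrid argument. The proof follows the pattern of the timing-privacy proof for the interactive masked protocol, with the one-wayness of $H(\pubkey',\cdot)$ replaced by the pseudorandomness of $\mathsf{VRF}$.

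First I would make the goal precise. $A$ knows $\pubkey'$ (now sent in the clear in $D=(\pubkey',\auxd')$), $\sthash$, $\txhash$, $d$, and $\pinc$. $A$ also observes the registry $R$, either as oracle access or through the keys that $\unisrv$ reveals. A \emph{timing attack} succeeds if $A$ outputs the key $k'=H(\pubkey',\sthash')$ with $\sthash'=H(\sthash,y)$, or equivalently if $A$ distinguishes a query to $\unisrv$ under key $k'$ from a query under an independent random key. This mirrors the association scenario of Sec.~\ref{sec:association}: monitoring the registry is useful only if $A$ can compute $k'$ or recognise it among the registered keys. Since $\pubkey'$ is public here, the whole burden falls on hiding $y$.

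The key steps, in order, are as follows.

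\textbf{Real experiment.} $y$ is the genuine output of $\mathsf{VRF.Prove}(\prikey',\sthash)$, and $A$ also receives any proofs $\pi$ that $B$ released for earlier transactions, which I model as VRF evaluation queries on inputs other than $\sthash$.

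\textbf{Hybrid experiment.} $y$ is replaced by a uniformly random $y^\ast\gets\{0,1\}^\ell$.

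\textbf{Reduction.} Any difference in $A$'s success between the two experiments gives a VRF distinguisher $D$. $D$ takes $\pubkey'$ from the VRF challenger and simulates $\unisrv$ and the other parties itself, as in the blocking and double-spending proofs. It answers prove-queries on states other than $\sthash$ through its VRF oracle, submits $\sthash$ as the challenge input, and uses the challenge value as $y$ to compute $\sthash'$ and $k'$ when simulating $B$'s spending query. The overhead is linear, $\tau(t)=(t_\mathsf{ver}+1)t+O(t_\mathsf{hash})$, so a profile of the form of equation~(\ref{eq:losebound}) follows.

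\textbf{Hybrid bound.} In the hybrid, $\sthash'=H(\sthash,y^\ast)$ with $y^\ast$ uniform and independent of everything $A$ sees, since $\txhash$ commits only to $H(\pubkey',\auxd')$ and by Lemma~\ref{le:outputindependence} carries no information about $y^\ast$. Finding $k'$ therefore requires recovering some $y'$ with $H(\sthash,y')=\sthash'$, which is exactly $(k,\ell)$-one-wayness. Alternatively, I would model $H$ so that $k'$ is unpredictable and bound the guessing probability by $q\cdot 2^{-\ell}$ plus the one-wayness advantage.

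The main obstacle is the modelling step. The theorem assumes only VRF pseudorandomness, yet the hybrid still needs some property of $H$ to conclude that $k'$ cannot be found from $\sthash$ and the observed registry keys. I would add $(k,\ell)$-one-wayness of $H$ explicitly, so that the final bound is the sum of the VRF advantage and the one-wayness advantage.

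A second subtlety is that later, when $B$ transfers to $C$, the proof $\pi$ and $y$ are revealed in the ledger. The theorem should therefore be stated only for the time window before $B$'s spend, which is all that timing privacy needs: after the spend the timing is already fixed. I would also check that the challenge input $\sthash$ is never queried to the prove oracle, which holds because each state hash is spent at most once by the double-spending theorem.
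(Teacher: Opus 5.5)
Your hybrid and your reduction to VRF pseudorandomness formalise the paper's own argument. The paper simply asserts that computing the registry key $H(\pubkey',\sthash')$ requires $\sthash'=H(\sthash,y)$, hence $y$, and that predicting $y$ contradicts pseudorandomness. You are right that this silently needs a property of $H$.

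The gap is that the property you add, $(k,\ell)$-one-wayness, does not close the hybrid step. It constrains only the maps $x\mapsto H(h,x)$ with $h\in\bits{k}$ and $x\in\bits{\ell}$. The key $A$ is after is an output of the outer map $z\mapsto H(\pubkey',z)$, and here $\pubkey'$ is sent in the clear in $D=(\pubkey',\auxd')$. Take an $H$ that ignores its second argument on (public key, state hash) inputs in $\bits{p}\times\bits{k}$, but is a good hash on $\bits{k}\times\bits{\ell}$, with the two types domain-separated. It is still $(k,\ell)$-one-way. Yet $A$ computes $k'$ from $\pubkey'$ alone while the VRF remains secure. So ``VRF advantage plus one-wayness advantage'' is not a valid bound.

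Relatedly, the step ``finding $k'$ requires recovering $y'$ with $H(\sthash,y')=\sthash'$'' misdescribes the hybrid. $A$ is never given $\sthash'$, so there is no image to invert. The real question is whether $k'$ is \emph{predictable}.

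What closes the step is collision resistance of $H$, which the paper already assumes for blocking and double-spending security. Work in your computational formulation: $A$ must output or query $k'$, and the paper's registry only answers whether $R[k]=\bot$ for keys $A$ supplies. In the hybrid, $A$'s view is independent of $y^\ast$. So $q$ candidate keys hit $k'$ with probability at most $q\max_v p_v$, where $p_v=\Pr[H(\pubkey',H(\sthash,y^\ast))=v]$. Moreover $\max_v p_v^2\le\sum_v p_v^2\le 2^{-\ell}+\epsilon_{\mathsf{coll}}$. The reason is that two independent $y_1\neq y_2$ giving the same key yield an $H$-collision, either in the inner call $H(\sthash,\cdot)$ or in the outer call $H(\pubkey',\cdot)$. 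Averaging over $(\sthash,\pubkey')$ gives $\epsilon\le\epsilon_{\mathsf{VRF}}+q\sqrt{2^{-\ell}+\epsilon_{\mathsf{coll}}}$.

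Finally, your ``or equivalently'' decisional formulation (recognising $k'$ among registered keys) is not equivalent to the computational one. It is not implied even by collision resistance plus one-wayness. For example, $H(p,z)=\mathrm{trunc}(p)\,\|\,H'(p,z)$ inherits both properties from $H'$, yet $A$ spots $k'$ by its prefix $\mathrm{trunc}(\pubkey')$. That variant needs $H$ modelled as a random oracle, which is also the setting where your $q\cdot 2^{-\ell}$ guessing bound is actually justified.
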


\begin{proof}
To compute the registry key $H(\pubkey', \sthash')$ that $\unisrv$ will record when $B$ spends, party $A$ must compute $\sthash' = H(\sthash, y)$ where $y$ is the VRF output.

Party $A$ knows $\pubkey'$ and $\sthash$. However, computing $y$ requires $\prikey'$.

Since $A$ does not possess $\prikey'$, predicting $y$ contradicts the pseudorandomness property of the VRF.
\end{proof}

\begin{theorem}[Service-Side Privacy]
The Unicity Service cannot link consecutive token states.
\end{theorem}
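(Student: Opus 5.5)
We will prove the statement by reducing it to a variant of the association scenario of Section~\ref{sec:association}, now in the non-interactive VRF setting. The service sees two requests, $Q=(\pubkey,\sthash,\txhash,\sigma)$ with $\txhash=\commitc(H(\pubkey',\auxd'))$, and later $Q'=(\pubkey',\sthash',\txhash',\sigma')$ with $\sthash'=H(\sthash,y)$ and $(y,\pi)=\mathsf{VRF.Prove}(\prikey',\sthash)$. Linking the two means deciding, or computing as in the association scenario, that $\sthash'$ comes from $\sthash$ under $\pubkey'$. Since the transaction data no longer contains $x$, we formalize linkability as an adversary that outputs $\sthash$ and then, given the view, outputs $y'$ with $H(\sthash,y')=\sthash'$.

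\textbf{Step 1 (hide the recipient).} Unlike the base protocol, $\pubkey'$ now appears in the clear in $Q'$. We therefore first argue that $\txhash$ carries no information about $\pubkey'$ or $y$. By Lemma~\ref{le:outputindependence}, with $g(\pubkey',\auxd')=H(\pubkey',\auxd')$, the commitment $\txhash$ is independent of $(\pubkey',\auxd')$. The reduction can thus replace $\txhash$ with a commitment to fresh dummy data, exactly as in the proof of the association theorem, at cost $t_\mathsf{com}+t_\mathsf{hash}$. The same substitution argument shows that $\sigma$, being a function of $\prikey$ and the already-simulated $\txhash$, adds nothing.

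\textbf{Step 2 (replace $y$ by uniform).} This is a hybrid step. In the real world $y$ is the VRF output on input $\sthash$ under $\prikey'$. In the ideal world $y\gets\{0,1\}^\ell$ is uniform. A distinguisher $D$ for VRF pseudorandomness obtains $\pubkey'$ and a challenge $y^\ast$ on input $\sthash$, simulates the service view and the linking adversary, and outputs 1 iff the adversary's $y'$ satisfies $H(\sthash,y')=H(\sthash,y^\ast)$. The gap in success is then bounded by the VRF advantage. One subtlety must be handled: the adversary (the service) chooses $\sthash$ adaptively, and $B$ must never have issued a VRF proof for that input visible to the service. Proofs $\pi$ travel only to $C$, not to $\unisrv$, so the challenge input is fresh.

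\textbf{Step 3 (one-wayness).} In the ideal hybrid, $y$ is uniform and independent of everything except $\sthash'=H(\sthash,y)$. A linking adversary is therefore exactly a $(k,\ell)$-one-wayness adversary: $A_1$ outputs $\sthash$, and $A_2$ receives $H(\sthash,y)$ together with simulated $\pubkey'$, $\txhash$ and $\sigma$. We then combine the bounds via equation~(\ref{eq:lower}) with two terms (VRF, one-wayness), giving a profile of the form $\tau^{-1}(S_\mathsf{min}(\epsilon/2))$ with linear overhead $\tau(t)=t+t_\mathsf{sm}+t_\mathsf{hash}+t_\mathsf{com}+t_\mathsf{sig}+t_\mathsf{gen}$.

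The main obstacle is Step~2. The statement is informal, so the distinguishing criterion ("link") has to be fixed so that VRF pseudorandomness applies. The difficulty is that the service also sees $\pubkey'$ and may know candidate $\sthash$ values, which means it can test guesses only through $H(\sthash,\cdot)$. If the precise notion turns out to be a decisional one (distinguishing a true successor from a random state hash), we would instead make $D$ output the linker's bit directly. In the ideal world $\sthash'$ would then be $H(\sthash,u)$ for uniform $u$, and we would additionally need $H(\sthash,\cdot)$ on uniform inputs to look independent of $\sthash$. That is an extra assumption beyond $(k,\ell)$-one-wayness, which we would state explicitly.
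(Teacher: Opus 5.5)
Your route is the paper's route made precise. The paper's proof is the informal version of your Steps~2 and~3: without $\prikey'$ the service cannot compute the VRF output $y$, and without $y$ it cannot relate $\sthash$ to $\sthash'$, by preimage resistance of $H$. Your computational reading of ``link'' is also the one the paper uses.

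The step that fails as written is the VRF hybrid in Step~2. The service's view includes $Q'=(\pubkey',\sthash',\txhash',\sigma')$ with $\sigma'=\sig(\prikey',H(\sthash',\txhash'))$, and the protocol uses the \emph{same} $\prikey'$ for $\mathsf{VRF.Prove}$ and for signing. Your distinguisher $D$ gets only $\pubkey'$, the prove oracle and the challenge $y^\ast$. It therefore cannot produce $\sigma'$ for $\sthash'=H(\sthash,y^\ast)$, and so cannot ``simulate the service view.''

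Your lists of simulated values ($\pubkey'$, $\txhash$, $\sigma$) silently drop $\sigma'$ and $\txhash'$. The $t_\mathsf{gen}+t_\mathsf{sig}$ terms in your overhead presuppose that the reduction holds $\prikey'$. That is true in Step~3, where the one-wayness reduction can generate the key pair itself because $y$ is independent of it, but false in Step~2. This is not bookkeeping. The separate security definitions of the VRF and of the signature scheme say nothing about pseudorandomness of VRF outputs once signatures under the shared key are also revealed. The paper's one-paragraph proof has the same blind spot.

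To close the gap you need an extra ingredient. One option is a joint-security assumption: the VRF stays pseudorandom against a distinguisher that also has a $\sig(\prikey',\cdot)$ oracle. The oracle form is the right one, since in this protocol $\prikey'$ is $B$'s persistent key and the service sees all of $B$'s spending signatures. A second option is to use independent signing and VRF keys. A third is the unique-signature instantiation, with the transaction signature being that same scheme. Then $D$ obtains $\sigma'$ as the proof part of $\mathsf{VRF.Prove}$ on input $H(\sthash',\txhash')$. This query is legitimate as long as that input differs from the challenge $\sthash$, an event you would have to bound.

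A minor point: Step~1 is unnecessary for your notion. The transaction data $(\pubkey',\auxd')$ contains no $y$, so the reduction can compute $\txhash=\commitc(H(\pubkey',\auxd'))$ honestly. Perfect hiding only matters for the decisional notion you mention at the end. There it prevents the service from matching the committed recipient key against the owner key that appears in the clear in $Q'$.
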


\begin{proof}
The service $\unisrv$ observes requests $Q = (\pubkey, \sthash, \txhash, \sigma)$ and $Q' = (\pubkey', \sthash', \txhash', \sigma')$. To link these as consecutive states of the same token, $\unisrv$ must verify that $\sthash' = H(\sthash, y)$ for some valid $y$.

Computing the VRF output $y$ requires $\prikey'$. The service knows $\sthash$ but not $\prikey'$, and therefore cannot compute $y$.

Without $y$, the service cannot verify any relationship between $\sthash$ and $\sthash'$ due to the preimage resistance of $H$ and pseudorandomness of $y$.
\end{proof}

\paragraph{Implementation Considerations}

This protocol requires a VRF. A practical instantiation of a VRF is a \emph{Unique Signature Scheme}, where for every valid key pair $(pk, sk)$ and message $m$, there exists exactly one signature $\sigma$ such that $\sigver(pk, m, \sigma)=1$.

In this case:
\begin{enumerate}
\item $\mathsf{VRF.Prove}(sk, m)$ outputs $\pi = \sigma = \mathsf{Sign}(sk, m)$ and $y = H(\sigma)$.
\item $\mathsf{VRF.Verify}(pk, m, y, \pi)$ checks $\sigver(pk, m, \pi) \stackrel{?}{=} 1$ and $y \stackrel{?}{=} H(\pi)$.
\end{enumerate}

\noindent Recommended instantiations include:
\begin{itemize}
\item \textbf{BLS Signatures} (Boneh-Lynn-Shacham): BLS signatures are inherently unique.
\item \textbf{RSA-FDH} (Full Domain Hash): RSA signatures with a full-domain hash are unique.
\item Specially designed VRF like EC-VRF (RFC 9381).
\end{itemize}

\noindent Note that deterministic signature shemes like EdDSA and Schnorr are \textbf{not} unique signature schemes as-is and thus are unsafe for this protocol.

\subsection{Combined Protocol with Unlinkable Keys}

We now present the complete protocol combining \emph{unlinkable transaction keys} and \emph{persistent recipient identities} (Section~\ref{sec:non-interactive-protocol}) with the binding signature mechanism (Section~\ref{sec:non-interactive-masked-protocol}) for verifiable mask derivation.

\subsubsection{Key Derivation}

For a recipient with persistent keypair $(d, P = [d]G)$ and sender-chosen ephemeral value $r \in \mathbb{Z}_q$:

\begin{enumerate}
\item Sender computes ephemeral point: $R \gets [r]G$
\item Sender computes shared secret: $S \gets [r]P$
\item Blinding factor: $s \gets H_{\mathsf{key}}(S, R, \sthash)$
\item Transaction public key: $P_{\mathsf{tx}} \gets P + [s]G$
\item Recipient computes same shared secret: $S = [d]R$
\item Recipient derives private key: $d_{\mathsf{tx}} \gets d + s \mod q$
\end{enumerate}

Note that $[d_{\mathsf{tx}}]G = [d + s]G = [d]G + [s]G = P + [s]G = P_{\mathsf{tx}}$, so the derived keys are consistent.

\subsubsection{Transaction Protocol}

\paragraph{Transaction $A \to B$}

Party $A$ holds a token in state $(P_A, \sthash)$  (note the ECDSA specific notation, instead of $\pubkey_A$ we write $P_A$) and transfers to party $B$ with persistent public key $P_B$.

\begin{enumerate}
\item \textbf{Key derivation}:
\begin{itemize}
\item $A$ samples $r \gets \mathbb{Z}_q$ and computes $R \gets [r]G$
\item $A$ computes shared secret $S \gets [r]P_B$
\item $A$ computes blinding factor $s \gets H_{\mathsf{key}}(S, R, \sthash)$
\item $A$ computes transaction key $P_{\mathsf{tx}} \gets P_B + [s]G$
\end{itemize}

\item \textbf{Transaction data}: $D = (P_{\mathsf{tx}}, R, \auxd')$

\item \textbf{Commitment}: $(\txhash, d) \gets \commit(H(D))$

\item \textbf{Signature}: $\sigma \gets \sig(d_A, H(\sthash, \txhash))$

\item \textbf{Certification}: $Q = (P_A, \sthash, \txhash, \sigma)$, obtain $\pinc \gets \unisrv(Q)$

\item \textbf{Send to $B$}: $(T, \sigma, \txhash, d, \pinc)$ where $T = (\sthash, D)$
\end{enumerate}

\paragraph{$B$ Processes Incoming Transaction}

\begin{enumerate}
\item \textbf{Validate ephemeral point}: Verify $R \neq \mathcal{O}$ (point at infinity)

\item \textbf{Derive keys}:
\begin{itemize}
\item Compute shared secret $S \gets [d_B]R$
\item Compute blinding factor $s \gets H_{\mathsf{key}}(S, R, \sthash)$
\item Verify $P_B + [s]G = P_{\mathsf{tx}}$ (from transaction data)
\item Compute transaction private key $d_{\mathsf{tx}} \gets d_B + s \mod q$
\end{itemize}

\item \textbf{VRF computation}: $(y, \pi) \gets \mathsf{VRF.Prove}(d_{\mathsf{tx}}, \sthash)$

\item \textbf{Next state}: $\sthash' \gets H(\sthash, y)$

\item \textbf{Store}: $(d_{\mathsf{tx}}, \sthash', \pi)$ for spending
\end{enumerate}

\paragraph{Transaction $B \to C$}

Party $B$ spends the token to party $C$ with persistent public key $P_C$.

\begin{enumerate}
\item \textbf{Key derivation for $C$}:
\begin{itemize}
\item $B$ samples $r' \gets \mathbb{Z}_q$ and computes $R' \gets [r']G$
\item $B$ computes $S' \gets [r']P_C$
\item $B$ computes $s' \gets H_{\mathsf{key}}(S', R', \sthash')$
\item $B$ computes $P'_{\mathsf{tx}} \gets P_C + [s']G$
\end{itemize}

\item \textbf{Transaction data}: $D' = (P'_{\mathsf{tx}}, R', \auxd'')$

\item \textbf{Commitment}: $(\txhash', d') \gets \commit(H(D'))$

\item \textbf{Signature}: $\sigma' \gets \sig(d_{\mathsf{tx}}, H(\sthash', \txhash'))$

\item \textbf{Certification}: $Q' = (P_{\mathsf{tx}}, \sthash', \txhash', \sigma')$, obtain $\pinc' \gets \unisrv(Q')$

\item \textbf{Send to $C$}: $(T', \sigma', \txhash', d', \pinc'; \pi)$ along with ledger $\mathcal{L}$
\end{enumerate}

\subsubsection{Transaction Verification}

Given the certified transaction, binding signature $\sigma_{\mathsf{bind}}$, and ledger $\mathcal{L}$:

\begin{enumerate}
\item \textbf{Verify VRF}:
\[
\mathsf{VRF.Verify}(P_{\mathsf{tx}}, \sthash_{\mathsf{prev}}, y, \pi) \stackrel{?}{=} 1
\]
where $P_{\mathsf{tx}}$ is the transaction public key from the previous transaction's data.

\item \textbf{Verify state hash derivation}:
\[
\sthash' \stackrel{?}{=} H(\sthash_{\mathsf{prev}}, y)
\]

\item \textbf{Verify certified transaction}:
\[
\certver(T', \sigma', \txhash', d', \pinc'; P_{\mathsf{tx}}, \sthash') \stackrel{?}{=} 1
\]

\item \textbf{Verify ledger}: Recursively verify all previous transactions in $\mathcal{L}$
\end{enumerate}

\section{User Side Privacy}\label{sec:wallet-privacy}

In the previous sections, there is the assumption that the recipient generates a fresh keypair for every transaction. This may be impractical in some applications, especially where the recipient's secure storage is limited. We present two solutions that allow for the generation of unlinkable public keys while maintaining only a single persistent private key.

As a requirement, we avoid the persistent state on the client side that must be retained between transactions.

\subsection{A General Solution using Pseudo-Random Functions}

In this subsection, we define \emph{multi public key (MPK) signature schemes} where a user has a single private key but many public keys, each capable of verifying signatures created with the same private key. This is achieved by separating key generation into two phases:
\begin{enumerate}
\item Private key generation.
\item Public key generation that depends on a user-chosen index $\iota$.
\end{enumerate}
MPK signature schemes have to satisfy two security properties:
\begin{enumerate}
\item \emph{Existential unforgeability}, which is similar to the standard EF-CMA condition for the ordinary signature schemes.
\item \emph{Unlinkability} of keys, i.e. the one-time public keys and the signatures verifiable with one-time keys contain no useful information about whether they correspond to the same private key or not.
\end{enumerate}
We will show that a wide class of generic signature schemes $\mathsf{S}=(G,S,V)$ can
be transformed to MPK signature schemes $\mathsf{S}^F$ by using a pseudo random function family $F$. This wide class contains, for example all Schnorr type signature schemes (like EdDSA) and DSA type signature schemes like ECDSA (used in Bitcoin). In principle, such a transformation also applies to RSA but is much less efficient.
We will prove that whenever a generic scheme is EF-CMA secure and $F$ is a secure pseudo-random function family, then the MPK signature scheme $\mathsf{S}^F$ is both existentially unforgeable and unlinkable.

\subsubsection{Definition and Motivation}

\begin{definition}[MPK signature scheme]
A multi public key (MPK) signature scheme is a tuple $(G_s,G_p,S,V)$ such that:
\begin{itemize}
\item $k\gets G_s$ generates the private key
\item $X_\iota\gets G_p(k,\iota)$ generates a (one-time) public key for an index $\iota$
\item $\sigma\gets S(k;\iota,m)$ creates a signature for a message $m$
\item $V(X,m,\sigma)$ verifies a signature $\sigma$ on a message $m$ with public key $X$
\end{itemize}
so that the following verification identity holds for every key $k$, for every message $m$, and for every index $\iota$:
\[
V(G_p(k,\iota),m,S(k;\iota,m))=1\enspace.
\]
\end{definition}
MPK signature schemes may be used in payment scenarios, where a payee has the private key $k$ of a MPK signature scheme
$(G_s,G_p,S,V)$:
\begin{enumerate}
\item The payee first sends an invoice to the payer, where the invoice contains the index $\iota$ as a \emph{reference number} as well as the specific public key $X_\iota =G_p(k,\iota)$. The reference number may depend on the goods or services for which the invoice was sent.
\item The payer forms a transaction $T$ that among the other necessary information contains the reference number $\iota$, signs the transaction with a signature $\sigma$, and sends the signed transaction $(T,\sigma)$ to the payment system (e.g. Bitcoin blockchain).
\item The payee may then check that the signed transaction $(T,\sigma)$ was received by the payment system (for example, is in a sufficiently old Bitcoin block) and that
$T.X_\iota=G_p(k,T.\iota)$. By the explicit reference number $T.\iota$, the payee is also able to check for which goods or services the incoming payment was made.
\end{enumerate}
If the MPK signature scheme is unlinkable, it is not later possible to determine whether any two signed payment transactions $(T,\sigma), (T',\sigma')$
are associated with the same payee or the same payer.

\subsubsection{Security Definitions}

For the existential unforgeability, we use a formal security condition that is similar to the standard EF-CMA, but where the adversary can itself initiate public key generation and choose the corresponding indices $\iota$ arbitrarily by itself.
The attack scenario involves an oracle $\mathcal{O}^{\tilde{\mathsf{S}}}_k$ that
holds a secret key $k$, has a set $I$ (which is initially empty) as its state, and answers two types of queries:
\begin{itemize}
\item $\mathcal{O}^{\tilde{\mathsf{S}}}_k(\mathsf{gen};\iota)$ -- a key generation query that is answered as follows. If $\iota\in I$ the  oracle returns $\bot$ (the index was already used). Otherwise, the oracle sets $I\gets I\cup \{\iota\}$ (marks $\iota$ as "used") and returns $X_\iota = G_p(k,\iota)$.
\item $\mathcal{O}^{\tilde{\mathsf{S}}}_k(\mathsf{sig};\iota,m)$ -- a signing query that is answered as follows. If $\iota\not\in I$ the oracle returns $\bot$ (the key $X_\iota$ is not yet generated). Otherwise, the oracle returns $\sigma=S(k;\iota,m)$.
\end{itemize}

\begin{definition}[EF-CKMA]
A MPK signature scheme $\tilde{\mathsf{S}}=(G_s,G_p,S,V)$ is $s$-secure against existential forgeries under
chosen key-message attack (EF-CKMA) if it has $s$ as a security profile $s$ in the following attack scenario:
\begin{enumerate}
\item $k\gets G_s$
\item $(X,\iota,m,\sigma)\gets A^{\mathcal{O}^{\tilde{\mathsf{S}}}_k}$
\item The attack is successful iff:
\begin{enumerate}
\item $V(X;m,\sigma)=1$
\item The query $X\gets \mathcal{O}^{\tilde{\mathsf{S}}}_k(\mathsf{gen};\iota)$ was made by $A$
\item The query $\mathcal{O}^{\tilde{\mathsf{S}}}_k(\mathsf{sig};\iota,m)$ was never made by $A$
\end{enumerate}
The success $\epsilon$ of $A$ is the probability that $A$ is successful.
\end{enumerate}
\end{definition}
For defining the unlinkability, we introduce an additional oracle $\mathcal{O}^{\tilde{\mathsf{S}}}$ that is similar to the first oracle, but handles
queries as follows:
\begin{itemize}
\item $\mathcal{O}^{\tilde{\mathsf{S}}}(\mathsf{gen};\iota)$ -- a key generation query that is answered as follows. If $\iota\in I$, the oracle returns $\bot$ (the index was already used). Otherwise, the oracle sets $I\gets I\cup \{\iota\}$ (marks $\iota$ as "used"), generates $k_\iota\gets G_s$, saves $k_\iota$, and returns $X_\iota = G_p(k_\iota,\iota)$.
\item $\mathcal{O}^{\tilde{\mathsf{S}}}(\mathsf{sig};\iota,m)$ -- a signing query that is answered as follows. If $\iota\not\in I$ the oracle returns $\bot$ (the key $X_\iota$ is not yet generated). Otherwise, the oracle returns $\sigma=S(k_\iota;\iota,m)$.
\end{itemize}
\begin{definition}[Unlikability]\label{def:unlinkability}
A MPK signature scheme $\tilde{\mathsf{S}}=(G_s,G_p,S,V)$ is $s$-secure unlinkable if it has $s$ as a security profile in the following scenario:
\begin{enumerate}
\item $k\gets G_s$
\item $b_1\gets D^{\mathcal{O}^{\tilde{\mathsf{S}}}_k}$
\item $b_0\gets D^{\mathcal{O}^{\tilde{\mathsf{S}}}}$
\item The success of $D$ is
$\epsilon=|\mathsf{Pr}[b_1=1]-\mathsf{Pr}[b_0=1]|$.
\end{enumerate}
\end{definition}
The idea of the security condition is that a distinguisher $D$ cannot make difference whether the secret key is the same $k$ for all key-generation queries (the oracle $\mathcal{O}^{\tilde{\mathsf{S}}}_k$)
or is generated independently for every (adversary-chosen) index (the oracle $\mathcal{O}^{\tilde{\mathsf{S}}}$).

The oracles are defined so that the queries
$\mathcal{O}^{\tilde{\mathsf{S}}}_k(\mathsf{gen};\iota)$ cannot be made twice, i.e. $\bot$ is returned for a repeated query. This restriction is essential because otherwise $D$ can easily distinguish the oracles: $\mathcal{O}^{\tilde{\mathsf{S}}}_k(\mathsf{gen};\iota)$ will always be the same, but the public keys generated by calling $\mathcal{O}^{\tilde{\mathsf{S}}}(\mathsf{gen};\iota)$ twice can be different.

\subsubsection{EF-CKMA for Ordinary Signature Schemes}\label{sec:efckma-ordinary}
Before defining a general construction of MPK signature schemes from ordinary ones, we need to define EF-CKMA security for ordinary signature schemes. For that, we define an additional oracle $\mathcal{O}^{\mathsf{S}}$ that handles the
queries as follows:
\begin{itemize}
\item $\mathcal{O}^{\mathsf{S}}(\mathsf{gen};\iota)$ -- a key generation query that is answered as follows. If $\iota\in I$, the  oracle returns $\bot$ (the index was already used). Otherwise, the oracle sets $I\gets I\cup \{\iota\}$ (marks $\iota$ as "used"), generates $(x_\iota,X_\iota)\gets G$, saves $x_\iota$, and returns $X_\iota$.
\item $\mathcal{O}^{\mathsf{S}}(\mathsf{sig};\iota,m)$ -- a signing query that is answered as follows. If $\iota\not\in I$ the oracle returns $\bot$ (the key $X_\iota$ is not yet generated). Otherwise, the oracle returns $\sigma=S(x_\iota;m)$.
\end{itemize}

\begin{definition}[EF-CKMA for ordinary signatures]\label{def:efckma-ordinary}
A signature scheme $\mathsf{S}=(G,S,V)$ is $s$-secure EF-CKMA if it has $s$ as a security profile in the following attack scenario:
\begin{enumerate}
\item $(X,\iota,m,\sigma)\gets A^{\mathcal{O}^{\mathsf{S}}}$
\item The attack is successful iff:
\begin{enumerate}
\item $V(X;m,\sigma)=1$
\item The query $X\gets \mathcal{O}^{\mathsf{S}}(\mathsf{gen};\iota)$ was made by $A$.
\item The query $\mathcal{O}^{\mathsf{S}}(\mathsf{sig};\iota,m)$ was never made by $A$.
\end{enumerate}
\end{enumerate}
\end{definition}
EF-CKMA differs from traditional EF-CMA because the adversary is able to initiate the use of new keys and may create an existential forgery with any of the keys. The indices are used just for labeling the generated keys.

It turns out that EF-CKMA is not a conceptually stronger notion of security because any EF-CMA signature scheme is also EF-CKMA, though with some security loss.

\begin{theorem}[EF-CMA implies EF-CKMA]
Every signature scheme $\mathsf{S}=(G,S,V)$ that is $s$-secure EF-CMA is $s'$-secure EF-CKMA, where $s'$ is a function for which the implication $\tau(t)\ge s(\epsilon/t) \; \Rightarrow \; t\ge s'(\epsilon)$ holds for every $t$ and $\tau$ is defined by $\tau(t)=(1+\max\{t_\mathsf{gen},t_\mathsf{sig}\})\cdot t + t_\mathsf{sm}$.
\end{theorem}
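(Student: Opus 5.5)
The reduction I would use is the standard guessing (hybrid) argument. Let $A$ be a $t$-time EF-CKMA adversary against $\mathsf{S}$ with success $\epsilon$. Since every oracle query costs at least one step, $A$ makes at most $t$ key-generation queries, say with indices $\iota_1,\iota_2,\ldots,\iota_q$ in the order they are made, $q\le t$. I would build an EF-CMA forger $A'^{\mathsf{S}(\prikey;\cdot)}(\pubkey)$ that embeds its challenge key $\pubkey$ into one randomly chosen key-generation slot.

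\textbf{Construction of $A'$.} First $A'$ samples $j\gets\{1,\ldots,t\}$; this is the sampling cost $t_\mathsf{sm}$. It then runs $A$ and simulates $\mathcal{O}^{\mathsf{S}}$ as follows.
\begin{itemize}
\item On the $i$-th fresh $\mathsf{gen}$ query with $i\neq j$, it runs $(x_{\iota_i},X_{\iota_i})\gets G$ itself and stores $x_{\iota_i}$.
\item On the $j$-th fresh $\mathsf{gen}$ query, it returns $X_{\iota_j}=\pubkey$.
\item Repeated indices are answered with $\bot$, exactly as the real oracle does.
\item A $\mathsf{sig}$ query $(\iota,m)$ with $\iota\neq\iota_j$ is answered by $S(x_\iota;m)$ using the stored key.
\item A $\mathsf{sig}$ query $(\iota_j,m)$ is forwarded to the EF-CMA signing oracle $\mathsf{S}(\prikey;m)$.
\end{itemize}
When $A$ outputs $(X,\iota,m,\sigma)$, the forger $A'$ outputs $(m,\sigma)$.

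Each simulated query costs at most $\max\{t_\mathsf{gen},t_\mathsf{sig}\}$ extra steps, so $A'$ runs in time at most $(1+\max\{t_\mathsf{gen},t_\mathsf{sig}\})\cdot t+t_\mathsf{sm}=\tau(t)$.

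\textbf{Success analysis.} The view of $A$ is identically distributed to the real EF-CKMA game. The challenge $\pubkey$ is generated by $G$ exactly like the other keys, and all answers are computed by the same algorithms. The index $j$ is independent of this view.

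Suppose $A$ succeeds with output $(X,\iota,m,\sigma)$. Then $X=X_\iota$ was returned by some $\mathsf{gen}$ query, $V(X,m,\sigma)=1$, and $(\iota,m)$ was never signed. In the event $\iota=\iota_j$ we have $X=\pubkey$, and $\mathsf{S}(\prikey;m)$ was never queried, because the only queries to $\mathsf{S}$ come from $\mathsf{sig}$ queries on $\iota_j$. So $A'$ produces a valid EF-CMA forgery in that event.

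The event $\iota=\iota_j$ has conditional probability at least $1/t$, since $\iota$ is the $i$-th generated index for some $i\le q\le t$ and $j$ is uniform and independent. Hence $\epsilon'\ge\epsilon/t$.

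\textbf{Conclusion.} Because $\mathsf{S}$ is $s$-secure EF-CMA and security profiles are non-decreasing, $\tau(t)\ge s(\epsilon')\ge s(\epsilon/t)$. By the defining implication of $s'$, this gives $t\ge s'(\epsilon)$, so $s'$ is an EF-CKMA security profile.

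\textbf{Main obstacle.} The delicate step is that the guessing loss depends on $t$ itself, so the bound does not fit the form of equation~(\ref{eq:losebound}). This is why $s'$ is specified only implicitly through the implication. I would handle it by bounding the number of queries by $t$ and using monotonicity of $s$, as above. A secondary point is the ambiguity when $X$ happens to coincide with another generated key. I would rely on the success condition being tied to the index $\iota$ that $A$ outputs, which makes the identification $X=\pubkey$ exact whenever $\iota=\iota_j$.
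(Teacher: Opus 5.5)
Your proposal is correct and follows essentially the same route as the paper's proof. Both embed the EF-CMA challenge key at a uniformly guessed $\mathsf{gen}$ slot $t_0\gets\{1,\ldots,t\}$, forward signing queries for that index to the EF-CMA oracle, and lose a factor of $t$ in success at overhead $\tau(t)$; your version is also somewhat more careful than the paper's, since it counts only fresh $\mathsf{gen}$ queries, answers repeated indices with $\bot$, and derives $\epsilon'\ge\epsilon/t$ via monotonicity of $s$ rather than asserting equality.
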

\begin{proof}
Given a $t$-time adversary $A$ that has success $\epsilon$ in the EF-CKMA scenario, we construct an EF-CMA adversary $B$ that has success $\epsilon/t$ and running time $\tau(t)$, which means that $\tau(t)\ge s(\epsilon/t)$. The adversary $B^{S(x;\cdot)}(X)$ (where $(x,X)$ is a private-public key pair) proceeds as follows:
\begin{enumerate}
\item Generates a random number $t_0\gets \{1,\ldots,t\}$
\item Simulates $(\iota,X,m,\sigma)\gets A^{\mathcal{O}^\mathsf{S}}$ by answering the $\mathcal{O}^\mathsf{S}$-queries as follows:
\begin{itemize}
\item $\mathcal{O}^\mathsf{S}(\mathsf{gen};\iota)$ -- if this is the $t_0$-th query, set $\iota_0\gets \iota$ and return $X$. Otherwise, compute $(X_\iota,x_\iota)\gets G$ and return $X_\iota$.
\item $\mathcal{O}^\mathsf{S}(\mathsf{sig};\iota,m)$ -- If $\iota_0$ is defined and $\iota=\iota_0$, then query the oracle $\sigma\gets S(x;m)$ and return $\sigma$. Otherwise, if $x_\iota$ is not defined, return $\bot$; and if $x_\iota$ is defined, compute $\sigma \gets S(x_\iota;m)$ and return $\sigma$.
\end{itemize}
\item Returns $(m,\sigma)$.
\end{enumerate}
With probability $\epsilon/t$ the adversary $(\iota,X,m,\sigma)\gets A^{\mathcal{O}^\mathsf{S}}$ succeeds in the simulation so that $\iota=\iota_0$ which means that $B^{S(x;\cdot)}(X)$ succeeds with the same probability in the EF-CMA game. The computational overhead function can be chosen to be $\tau(t)=(1+\max\{t_\mathsf{gen},t_\mathsf{sig}\})\cdot t + t_\mathsf{sm}$.
\end{proof}

\medskip\medskip
\noindent To illustrate the security loss, assume that $s(\epsilon)=s_\mathsf{cma}\cdot \epsilon$ is a linear function. From $\tau(t)\ge s(\epsilon/t)$ it follows that:
\[
(t_\mathsf{sm}+\max\{t_\mathsf{gen},t_\mathsf{sig}\})\cdot t \ge
(1+\max\{t_\mathsf{gen},t_\mathsf{sig}\})\cdot t + t_\mathsf{sm} \ge s_\mathsf{cma}\cdot \epsilon/t
\]
Hence, $t\ge \sqrt{\frac{s_\mathsf{cma}\cdot \epsilon}{t_\mathsf{sm}+\max\{t_\mathsf{gen},t_\mathsf{sig}\}}}$ and hence it is sufficient to take \[s'(\epsilon)=\sqrt{\frac{s(\epsilon)}{t_\mathsf{sm}+\max\{t_\mathsf{gen},t_\mathsf{sig}\}}}\enspace.
\]

\subsubsection{Generic Construction and Security Proofs}

In this section, we define a wide class of \emph{MPK-friendly} signature schemes that can be transformed into MPK signature schemes.

\begin{definition}[MPK-friendly signature scheme]
A signature scheme is MPK-friendly if the key-generation function $G$ is in the form:
\begin{enumerate}
\item $x\gets \mathcal{X}$
\item $X\gets f(x)$
\item Output $(X,x)$
\end{enumerate}
where $f$ is a deterministic function.
\end{definition}
For example, ECDSA and EdDSA are MPK-friendly, because their keys are generated by the rule: $x\gets \mathbb{Z}_q$, $X=x\mathbf{g}$, where $\mathbf{g}$ is a generator of a cyclic group $\mathbb{G}$ with order $q$, i.e. in that case, $\mathcal{X}=\mathbb{Z}_q$ and $f\colon \mathbb{Z}_q\rightarrow \mathbb{G}$ is defined by  $f(x)=x\mathbf{g}$.

\begin{definition}[Transformed scheme]

Let $\mathsf{S}=(G,S,V)$ be a MPK-friendly signature scheme, and $F\colon K\times \mathbb{I} \rightarrow \mathcal{X}$ be a pseudo-random function family.
The transformed scheme $\mathsf{S}^F=(G_s,G_p,S',V)$ is defined as follows:
\begin{itemize}
\item $G_s$ outputs a uniformly random $k\gets K$
\item $G_p(k,\iota)=f(F_k(\iota))$ for every $k\in K$ and $\iota\in \mathbb{I}$
\item $S'(k;\iota,m)=S(F_k(\iota); m)$ for every $k\in K$, $\iota\in \mathbb{I}$, and for every message $m$
\end{itemize}
\end{definition}
It is easy to see that the transformed scheme is a MPK signature scheme. Note also that in the scenario $k \gets K, x_\iota \gets F_k(\iota), X_\iota\gets G_p(\iota, x_\iota)$
the distribution of $(X_\iota,x_\iota)$ is the same as in the scenario
$(X_\iota,x_\iota)\gets G$.

\begin{theorem}
If $\mathsf{S}=(G,S,V)$ is a MPK-friendly signature scheme and $F$ is an $s$-secure PRF, then the transformed scheme $\tilde{S}=\mathsf{S}^F=(G_s,G_p,S',V)$ is $s'$-secure unlinkable, where $s'(\epsilon) = \frac{s(\epsilon)}{1+\max\{t_\mathsf{gen},t_\mathsf{sig}\})}$.
\end{theorem}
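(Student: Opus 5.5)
The plan is a direct reduction from the unlinkability distinguisher to a PRF distinguisher. Let $D$ be a $t$-time distinguisher with success $\epsilon$ in the scenario of Definition~\ref{def:unlinkability}. I will build a PRF distinguisher $D'$ that has oracle access to some function $\Phi\colon\mathbb{I}\rightarrow\mathcal{X}$. This oracle is either $F(k;\cdot)$ with $k\gets K$, or a truly random function simulated by lazy sampling. $D'$ runs $D$ and answers its queries as follows, keeping the set $I$ of used indices.

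\begin{itemize}
\item On $\mathsf{gen};\iota$: if $\iota\in I$, return $\bot$. Otherwise add $\iota$ to $I$, query $x_\iota\gets\Phi(\iota)$, store it, and return $f(x_\iota)$.
\item On $\mathsf{sig};\iota,m$: return $\bot$ if $\iota\notin I$. Otherwise return $S(x_\iota;m)$.
\end{itemize}

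Finally $D'$ outputs whatever bit $D$ outputs.

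The first thing to check is the real case. When $\Phi=F(k;\cdot)$, each answer equals $G_p(k,\iota)=f(F_k(\iota))$ and each signature equals $S'(k;\iota,m)=S(F_k(\iota);m)$. So the simulation is exactly the oracle $\mathcal{O}^{\tilde{\mathsf{S}}}_k$ with $k\gets G_s$. This holds because $G_s$ samples $k$ uniformly from $K$, just as the PRF game does, so the output distribution of $D'$ equals that of $b_1$.

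The second check is the random case, and this is the step needing care. When $\Phi$ is a random function, the values $\Phi(\iota)$ on distinct indices are independent and uniform on $\mathcal{X}$. The oracle forbids repeated $\mathsf{gen}$ queries, so each used index gets its own fresh $x_\iota$. The pair $(f(x_\iota),x_\iota)$ then has exactly the distribution of $G$, since $\mathsf{S}$ is MPK-friendly (this is the remark after the definition of the transformed scheme). The remaining point is that $\mathcal{O}^{\tilde{\mathsf{S}}}$ samples a fresh $k_\iota\gets G_s$ and uses $G_p(k_\iota,\iota)=f(F_{k_\iota}(\iota))$ and $S(F_{k_\iota}(\iota);m)$. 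For a fixed $\iota$ and uniform $k_\iota$, the value $F_{k_\iota}(\iota)$ need not be uniform on $\mathcal{X}$ in general. I will handle this by reading the ideal oracle through the same remark, that is, by treating $F_{k_\iota}(\iota)$ for a fresh $k_\iota$ as distributed like $x\gets\mathcal{X}$, which is precisely the assumption the paper states. Should that be too strong, the fallback is a second hybrid that replaces per-index PRF keys by uniform values; this hybrid costs only equality of one-point marginals, which is the same assumption. Since the keys $k_\iota$ are independent across indices, the transcript distribution matches that of the lazily sampled random function, so $D'$'s output equals $b_0$.

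Putting the two cases together, $D'$ has PRF advantage exactly $\epsilon$. Each of $D$'s at most $t$ queries costs $D'$ at most one application of $f$ or one signing operation, plus one unit of bookkeeping. This gives the overhead $\tau(t)=(1+\max\{t_\mathsf{gen},t_\mathsf{sig}\})\cdot t$, where applying $f$ is bounded by $t_\mathsf{gen}$. The lazy-sampling cost sits on the challenger's side. Hence $\tau(t)\ge s(\epsilon)$, which gives $t\ge s(\epsilon)/(1+\max\{t_\mathsf{gen},t_\mathsf{sig}\})=s'(\epsilon)$.
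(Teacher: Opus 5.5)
Your proposal is correct and is essentially the paper's proof: the same PRF distinguisher answers $\mathsf{gen}$ with $f(\varphi(\iota))$ and $\mathsf{sig}$ with $S(\varphi(\iota);m)$, argues the same two-case equivalence with the two unlinkability oracles, and obtains the same overhead $\tau(t)=(1+\max\{t_\mathsf{gen},t_\mathsf{sig}\})\cdot t$. In the random-function case both arguments need the remark after the definition of the transformed scheme, namely that $F_{k}(\iota)$ is exactly uniform on $\mathcal{X}$ for uniform $k$; you state this explicitly, whereas the paper glosses over it by writing $\Phi(\iota)=k_\iota$, and your ``fallback'' hybrid does not actually remove the assumption, since without it an extra additive loss (growing with the number of $\mathsf{gen}$ queries) appears and the stated $s'$ no longer follows.
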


\begin{proof}
Let $D$ be a $t$-time distinguisher that has success $\epsilon$ in the unlinkability scenario. We construct a distinguisher $\underline{D}$ for $F$ as follows. Given access to an oracle $\varphi(\cdot)$ the distinguisher $\underline{D}^{\varphi(\cdot)}$
proceeds as follows:
\begin{enumerate}
\item Simulates $b\gets D^{\mathcal{O}^{\varphi}}$, where the oracle calls are simulated as follows:
\begin{itemize}
\item $\mathcal{O}^{\varphi}(\mathsf{gen};\iota)$ -- If $\iota\in I$ then return $\bot$. Otherwise, set $I\gets I \cup \{\iota\}$ and return $f(\varphi(\iota))$
\item $\mathcal{O}^{\varphi}(\mathsf{sig};\iota,m)$ -- If $\iota\not\in I$ then return $\bot$. Otherwise, return $S(\varphi(\iota);m)$
\end{itemize}
\item Returns $b$
\end{enumerate}

If $\varphi(\cdot)=F_k(\cdot)$, then the oracle $\mathcal{O}^\varphi$ is equivalent to  $\mathcal{O}^{\tilde{\mathsf{S}}}_k$, because then:
\begin{itemize}
\item $\mathcal{O}^\varphi(\mathsf{gen};\iota) = \mathcal{O}^{\tilde{\mathsf{S}}}_k(\mathsf{gen};\iota)$ as $f(\varphi(\iota)) = f(F_k(\iota))=G_p(k,\iota)$, and
\item
$\mathcal{O}^\varphi(\mathsf{sig};\iota,m) = \mathcal{O}^{\tilde{\mathsf{S}}}_k(\mathsf{sig};\iota,m)$
due to $S(\varphi(\iota);m)=S(F_k(\iota);m)=S'(k;\iota,m)$
\end{itemize}
If $\varphi(\cdot)=\Phi(\cdot)$, then the oracle $\mathcal{O}^\varphi$ is equivalent to  $\mathcal{O}^{\tilde{\mathsf{S}}}$, because then:
\begin{itemize}
\item $\mathcal{O}^\varphi(\mathsf{gen};\iota) = \mathcal{O}^{\tilde{\mathsf{S}}}(\mathsf{gen};\iota)$
because
due to lazy sampling technique, computing $f(\varphi(\iota)) = f(\Phi(\iota))$
is equivalent to $k_\iota\gets K\equiv k_\iota\gets G_s$ and returning
$X_\iota = G_p(k_\iota,\iota)$
\item
$\mathcal{O}^\varphi(\mathsf{sig};\iota,m) = \mathcal{O}^{\tilde{\mathsf{S}}}_k(\mathsf{sig};\iota,m)$
due to $S(\varphi(\iota);m)=S(\Phi(\iota);m)=S'(k_\iota;\iota,m)$ in case $k_\iota$ was a saved value during lazy sampling and hence $\Phi(\iota)=k_\iota$.
\end{itemize}
Therefore, the unlinkability scenario with $D$ is equivalent to the PRF scenario with $\underline{D}$ and
hence, the success of $\underline{D}$ in the PRF-scenario is $\epsilon$. As the computational time overhead function is $\tau(t) = (1+\max\{t_\mathsf{gen},t_\mathsf{sig}\})\cdot t$, the statement of the theorem follows.
\end{proof}

\begin{theorem}
If $\mathsf{S}=(G,S,V)$ is a MPK-friendly signature scheme that is $s$-secure EF-CKMA and its transformed scheme $\tilde{S}=\mathsf{S}^F=(G_s,G_p,S',V)$ is $s$-secure unlinkable, then $\tilde{S}$ is $s'$-secure EF-CKMA, where $s'(\epsilon) = s(\epsilon/2)$.
\end{theorem}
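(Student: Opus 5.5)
The approach is a hybrid argument that uses the unlinkability oracles as an intermediate step. Let $A$ be a $t$-time EF-CKMA adversary against $\tilde{\mathsf{S}}$ with success $\epsilon$. Run $A$ in two worlds.

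\begin{itemize}
\item \emph{Real world:} $A$ interacts with $\mathcal{O}^{\tilde{\mathsf{S}}}_k$, with success $\epsilon$.
\item \emph{Idealized world:} $A$ interacts with $\mathcal{O}^{\tilde{\mathsf{S}}}$, where each index gets an independent key $k_\iota\gets G_s$, with success $\epsilon_0$.
\end{itemize}

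We then bound $\epsilon\le|\epsilon-\epsilon_0|+\epsilon_0$. The first term is handled by a distinguisher for unlinkability, and the second by a reduction to EF-CKMA of the ordinary scheme $\mathsf{S}$. This gives $\epsilon\le\epsilon_1+\epsilon_2$, and the profile $s(\epsilon/2)$ then follows as in equation~(\ref{eq:losebound}), provided both reductions have negligible overhead.

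\textbf{Step 1 (distinguisher).} Define $D$ that runs $A$ and forwards all of $A$'s $\mathsf{gen}$ and $\mathsf{sig}$ queries to its own oracle, recording them. When $A$ outputs $(X,\iota,m,\sigma)$, $D$ checks the three success conditions: $V(X;m,\sigma)=1$, the query $X\gets\mathsf{gen}(\iota)$ was made, and $\mathsf{sig}(\iota,m)$ was never made. It outputs $b=1$ exactly when all three hold. Then $\Pr[b_1=1]=\epsilon$ and $\Pr[b_0=1]=\epsilon_0$, so $D$ has success $\epsilon_1=|\epsilon-\epsilon_0|$. Its overhead is one verification plus bookkeeping, so $t_D\approx t+t_\mathsf{ver}$. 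Since the statement claims $s(\epsilon/2)$ with no time-overhead adjustment, I will treat this small additive overhead as absorbed. If a referee objects, one writes $\tau(t)=t+t_\mathsf{ver}+O(t)$ explicitly.

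\textbf{Step 2 (reduction to EF-CKMA of $\mathsf{S}$).} The key observation is that $\mathcal{O}^{\tilde{\mathsf{S}}}$ for the transformed scheme is distributionally identical to the ordinary oracle $\mathcal{O}^{\mathsf{S}}$ of Section~\ref{sec:efckma-ordinary}. To see this, a $\mathsf{gen}(\iota)$ query samples $k_\iota\gets K$ and returns $f(F_{k_\iota}(\iota))$. Signing returns $S(F_{k_\iota}(\iota);m)$.

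Here I expect the main obstacle: this matches $(x_\iota,X_\iota)\gets G$ only if $F_{k}(\iota)$ is uniform on $\mathcal{X}$ for uniform $k$ and fixed $\iota$. The paper asserts exactly this after the definition of the transformed scheme, namely that the distribution of $(X_\iota,x_\iota)$ equals that from $G$. I will invoke that remark. If it is needed as an assumption on $F$, I will state it explicitly; for a PRF it can alternatively be argued up to a PRF-distinguishing term.

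Granting this, an adversary $A'$ against $\mathsf{S}$ runs $A$ with $\mathcal{O}^{\mathsf{S}}$ forwarded directly and outputs $A$'s forgery unchanged. The winning conditions translate verbatim, so $A'$ succeeds with probability $\epsilon_2=\epsilon_0$ with essentially no overhead.

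\textbf{Step 3 (combine).} From $\epsilon\le\epsilon_1+\epsilon_2$, we have $t\ge s(\epsilon_1)$ and $t\ge s(\epsilon_2)$. Since one of $\epsilon_1,\epsilon_2$ is at least $\epsilon/2$ and $s$ is non-decreasing, $t\ge s(\epsilon/2)$. Hence $s'(\epsilon)=s(\epsilon/2)$.
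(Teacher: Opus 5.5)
Your proposal is correct and follows the paper's proof essentially step for step. It uses the same distinguisher $D$ that outputs 1 iff $A$ succeeds, and the same observation that $\mathcal{O}^{\tilde{\mathsf{S}}}$ coincides with the ordinary oracle $\mathcal{O}^{\mathsf{S}}$, so that forwarding $A$ yields an EF-CKMA forger for $\mathsf{S}$ with success $\epsilon_0$. It then combines the two via the same split $\epsilon\le|\epsilon-\epsilon_0|+\epsilon_0$. The two caveats you raise are genuine, and the paper's proof passes over both in the same way: the verification and bookkeeping overhead of $D$ is dismissed as zero overhead, and the exact uniformity of $F_k(\iota)$ on $\mathcal{X}$ for uniform $k$ is taken from the remark after the definition of the transformed scheme.
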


\begin{proof}
Let $A^\mathcal{O}$ be a $t$-time oracle adversary that succeeds with probability $\epsilon$ in the EF-CKMA scenario. We construct a distinguisher $D^\mathcal{O}$ for unlinkability and an EF-CKMA adversary $A^\mathcal{O}_\mathsf{ckma}$ for $\mathsf{S}$ as follows:
\begin{itemize}
\item $D^\mathcal{O}$ simulates $(X,\iota,m,\sigma)\gets A^{\mathcal{O}}$ and returns 1 iff $A$ is successful.
\item $A_\mathsf{ckma}^\mathcal{O}$ also simulates $(X,\iota,m,\sigma)\gets A^{\mathcal{O}}$ and returns $(X,\iota,m,\sigma)$.
\end{itemize}
In the EF-CKMA scenario, $k\gets K \equiv k\gets G_s$ is picked uniformly at random and $(X,\iota,m,\sigma)\gets A^{\mathcal{O}^{\tilde{\mathsf{S}}}_k}$ produces existential forgery with probability $\epsilon$. Hence, $\mathsf{Pr}[1\gets D^{\mathcal{O}^{\tilde{\mathsf{S}}}_k}]=\epsilon$. Let $\epsilon'$ be the probability that $(X,\iota,m,\sigma)\gets A^{\mathcal{O}^{\tilde{\mathcal{S}}}}$
produces existential forgery. Hence,
$\mathsf{Pr}[1\gets D^{\mathcal{O}^{\tilde{\mathsf{S}}}}]=\epsilon'$
and the success of $D$ in the unlinkability scenario is $\epsilon_\mathsf{unl}=|\epsilon-\epsilon'|$.
Note also that the oracles $\mathcal{O}^{\tilde{\mathsf{S}}}$ (from Def.~\ref{def:unlinkability}) and $\mathcal{O}^{\mathsf{S}}$ (from Def.~\ref{def:efckma-ordinary}) are equivalent because:
\begin{itemize}
\item $\mathcal{O}^{\tilde{\mathsf{S}}}(\mathsf{gen};\iota) = \mathcal{O}^{\mathsf{S}}(\mathsf{gen};\iota)$ as the scenario
[$k_\iota \gets G_s, X_\iota \gets G_p(k_\iota)$, Return $X_\iota$]
is equivalent to the scenario
[$(x_\iota, X_\iota)\gets G$, Return $X_\iota$].
\item $\mathcal{O}^{\tilde{\mathsf{S}}}(\mathsf{sig};\iota,m) = \mathcal{O}^{\mathsf{S}}(\mathsf{sig};\iota,m)$ as $S'(k_\iota;\iota,m)=S(F_{k_\iota}(\iota);m)=S(x_\iota;m)$.
\end{itemize}
Hence, $A_\mathsf{ckma}^{\mathcal{O}^\mathsf{S}}$
succeeds in the EF-CKMA scenario with probability $\epsilon_\mathsf{ckma}=\epsilon'$.
Therefore, $\epsilon \le \epsilon_\mathsf{unl}+ \epsilon_\mathsf{kcma}$. As the construction of $D$ and $A_\mathsf{ckma}$ involves no computational time overhead, we conclude that
$\tilde{\mathsf{S}}$ is $s'$-secure EF-CKMA with $s'(\epsilon) = s(\epsilon/2)$.
\end{proof}

\subsubsection{Remarks on Practical Implementation}

The main tool in the construction is the pseudo-random function family $F$ and hence the security of the scheme directly depends on the security of $F$ as a PRF.

For the ECDSA and EdDSA with groups the order $q$ of which is much less than $2^{512}$ we may apply the HMAC construction \cite{BeCK96} as follows:
\[
F_k(\iota) = \mathsf{Int}(\mathsf{HMAC}_{\mathsf{sha512}} (k;\iota)) \mod q
\]
where $\mathsf{Int}\colon \{0,1\}^{512}\rightarrow \mathbb{Z}_{2^{512}}$ converts bitstrings to non-negative integers, $k\gets K=\{0,1\}^{m}$, and $\iota\in\mathbb{I}=\{0,1\}^{64}$.
If
$q\ll 2^{512}$, then the output distribution
of $F_k(\cdot)$ is almost as indistinguishable
from the uniform distribution $U_q$ on $\mathbb{Z}_{q}$ as the output distribution of $\mathsf{HMAC}_{\mathsf{sha512}} (k;\cdot)$ from the uniform distribution $U_{2^{512}}$ on $\{0,1\}^{512}$
because the statistical distance between
$U_q$ and the distribution  $\mathsf{Int}(U_{2^{512}})\!\!\mod q$ has an upper bound $\frac{q}{2^{513}}$.

In \cite{Bell15}, Bellare proved that HMAC is a PRF if the compression function (of the hash function) is a PRF, which is a good practical security guarantee since no efficient attacks are known against the PRF-security of the compression function.

\subsection{A Solution for ECDSA}

Unlike the rest of the paper which is not signature scheme specific, we are assuming the ECDSA signature scheme\cite{ecdsa} in the following subsections, and use the ECDSA, DH specific notation.

\subsubsection{Interactive Protocol}

The recipient holds a persistent keypair $(d, P)$ where $P = d \cdot G$ with generator $G$ and order $n$. For each transaction, the recipient generates an ephemeral blinding factor $r \gets \mathbb{Z}_n$ and derives a transaction-specific public key $P' = (d + r) \cdot G$. The protocol is shown in Figure~\ref{fig:protocol1}.

\begin{figure}[!htb]
\centering
\fbox{
\begin{minipage}{0.9\textwidth}

\underline{Setup:} Recipient holds persistent keypair $(d, P)$ where $P = d \cdot G$

\vspace{3mm}
\begin{tabular}{lcl}
\textbf{Sender} & & \textbf{Recipient}\\
\hline
\\[-2mm]
  & &$r \stackrel{\$}{\gets} \mathbb{Z}_n$  \\
  & & $P_{tx} \gets (d + r) \cdot G$  \\
  & & $R \gets r \cdot G$  \\
  & $\xleftarrow{\hspace{1cm} (P_{tx}, R) \hspace{1cm}}$ & \\
$\mathsf{tx} \gets \{P_{tx}, R, \ldots\}$  & &  \\
  & $\xrightarrow{\hspace{1.5cm} \mathsf{tx} \hspace{1.5cm}}$ & \\
  & & $r' \gets \mathcal{H}(d \,||\, R)$ \\
  & & $d_{tx} \gets d + r' \bmod n$ \\
  & & $\sigma \gets \mathsf{Sign}(d_{tx}; \mathsf{tx}_{next})$ \\
\end{tabular}
\end{minipage}
}
\caption{Interactive protocol for generating unlinkable ECDSA public keys.}
\label{fig:protocol1}
\end{figure}

The recipient maintains only the persistent secret $d$ and derives ephemeral signing keys deterministically from data field $R$, included with the transaction. When spending the token, the recipient reconstructs the blinding factor $r' = H(d \,||\, R)$ and derives the private key $d' = d + r' \bmod n$ corresponding to $P'$.

\subsubsection{Non-Interactive Protocol}

For applications requiring persistent public keys as ``addresses'', or non-interactive operation, the parties can create the blinded public key using Diffie-Hellman key exchange. The recipient publishes a persistent public key $P = d \cdot G$. The protocol is shown in Figure~\ref{fig:protocol3}.

The key challenge in non-interactive protocols is protecting against malicious senders who might choose predictable ephemeral keys $r$ or leak them to compromise transaction unlinkability. The secure construction addresses this by binding the blinding factor to both the shared Diffie-Hellman secret and public transaction data.

\begin{figure}[ht]
\centering
\fbox{
\begin{minipage}{0.95\textwidth}

\underline{Setup:}
Recipient has persistent keypair $(d, P = d \cdot G)$ and publishes $P$

\vspace{3mm}
\begin{tabular}{lcl}
\textbf{Sender} & & \textbf{Recipient} \\
\hline
\\[-2mm]
$r \stackrel{\$}{\leftarrow} \mathbb{Z}_n^*$ & & \\
$R \leftarrow r \cdot G$ & & \\
$s \leftarrow \mathcal{H}_1(r \cdot P \,||\, R \,||\, \mathsf{tx}_{prev})$ & & \\
$P_{tx} \leftarrow P + s \cdot G$ & & \\
$\mathsf{tx} \leftarrow \{P_{tx}, R, \dots\}$ & & \\
& $\xrightarrow{\hspace{.6cm} \mathsf{tx} \hspace{.6cm}}$ & \\[3mm]
& & \textbf{if} $R = \mathcal{O}$ \textbf{then abort} \\
& & $s' \leftarrow \mathcal{H}_1(d \cdot R \,||\, R \,||\, \mathsf{tx}_{prev})$ \\
& & \textbf{if} $P + s' \cdot G \neq P_{tx}$ \textbf{then abort} \\
& & $d_{tx} \leftarrow d + s' \bmod n$ \\
& & \textit{Sign next transaction using} $d_{tx}$ \\
\end{tabular}
\end{minipage}
}
\caption{Non-interactive protocol secure against malicious sender.}
\label{fig:protocol3}
\end{figure}

The sender generates an ephemeral keypair $(r, R = r \cdot G)$ and computes a Diffie-Hellman shared secret $r \cdot P$. The blinding factor $s$ is derived by hashing the shared secret together with the ephemeral public key $R$ and the previous transaction identifier $\mathsf{tx}_{prev}$:
$$s = \mathcal{H}_1(r \cdot P \,||\, R \,||\, \mathsf{tx}_{prev})$$

The transaction-specific public key is computed as $P_{tx} = P + s \cdot G$, and both $P_{tx}$ and $R$ are included in the transaction. Upon receiving the transaction, the recipient:
\begin{enumerate}
\item Verifies that $R \neq \mathcal{O}$ (the point at infinity) to prevent trivial attacks
\item Computes the same shared secret $d \cdot R = r \cdot P$ using their persistent private key
\item Derives $s' = \mathcal{H}_1(d \cdot R \,||\, R \,||\, \mathsf{tx}_{prev})$ and verifies that $P + s' \cdot G = P_{tx}$
\item Computes the transaction-specific private key $d_{tx} = d + s' \bmod n$ for signing the next transaction
\end{enumerate}

By including both $R$ and $\mathsf{tx}_{prev}$ in the hash input, the protocol ensures that:
\begin{itemize}
\item Even if the sender chooses a predictable $r$, the blinding factor $s$ depends on the hash function output and remains unpredictable to external observers
\item The recipient can verify that the sender correctly computed $P_{tx}$ without learning $r$
\item Each transaction uses a unique blinding factor (assuming $\mathsf{tx}_{prev}$ is always unique), preventing linkability even if the sender reuses the same $r$ across different transactions
\end{itemize}

We assume that $\mathsf{tx}_{prev}$ is always unique, as a malicious sender reusing the exact $(r, \mathsf{tx}_{prev})$ pair could break unlinkability. In practice, $\mathsf{tx}_{prev}$ can be the hash of the previous transaction or a timestamp with sufficient granularity.

\subsubsection{Desired Privacy Properties:}

\begin{enumerate}
\item \textbf{Transaction Identity Unlinkability:}
For any two transactions $\mathsf{tx}_i, \mathsf{tx}_j$ with $i \neq j$:
$$\text{Dist}[(P_{tx,i}, R_i)] \approx_c \text{Dist}[(U_1, U_2)]$$
where $U_1, U_2 \stackrel{\$}{\leftarrow} \mathbb{G}$ are uniform random group elements.

\textit{Informally:} The ownership-related fields in each transaction appear as uniformly random data. The pair $(P_{tx}, R)$ in any transaction is computationally indistinguishable from two random group elements, making all transactions unlinkable to each other and to any recipient addresses.

\item \textbf{Forward Privacy:}
Given $d_{tx,i} = d_i + \mathcal{H}_1(d_i \cdot R_i \,||\, R_i \,||\, \mathsf{tx}_{i-1})$:
$$\Pr\left[\mathcal{A}(d_{tx,i}, \{\mathsf{tx}_j\}_{j=1}^n) \rightarrow d_i\right] \leq \text{negl}(\lambda)$$

\textit{Informally:} If a transaction-specific private key $d_{tx,i}$ is compromised (e.g., leaked or extracted from a device), the adversary cannot recover the recipient's persistent private key $d_i$. This ensures that past key exposures do not compromise the recipient's persistent identity or future transactions.
\end{enumerate}

\textbf{Remark:} Transaction Identity Unlinkability implies several important corollaries:
\begin{itemize}
\item \emph{Address Unlinkability:} An adversary observing a transaction cannot determine which of two candidate recipients it was sent to, beyond random guessing. This follows immediately since if $(P_{tx}, R)$ is pseudorandom, it reveals no information about the underlying persistent address $P$.

\item \emph{Recipient Unlinkability:} An adversary observing multiple transactions cannot determine which transactions belong to the same recipient. Even when a recipient reuses their persistent public key $P$ across multiple transactions, the resulting transaction-specific keys appear independent and uniformly random.
\end{itemize}

\subsubsection{Cryptographic Assumptions}

For the privacy proofs, we require the following standard cryptographic assumptions in the elliptic curve group $\mathbb{G}$ of prime order $n$ with generator $G$:

\begin{definition}[Decisional Diffie-Hellman (DDH)]
The DDH problem is $S$-hard in $\mathbb{G}$ if it has $S$ as a security profile in the following distinguishing scenario:
\begin{enumerate}
\item Sample $a, b, c \stackrel{\$}{\leftarrow} \mathbb{Z}_n^*$ and $d \stackrel{\$}{\leftarrow} \{0,1\}$
\item If $d = 0$: set $T \leftarrow (a \cdot G, b \cdot G, ab \cdot G)$
\item If $d = 1$: set $T \leftarrow (a \cdot G, b \cdot G, c \cdot G)$
\item $d' \gets \mathcal{A}(T)$
\item The attack is successful iff $d' = d$
\end{enumerate}
The advantage is $\text{Adv}^{\text{DDH}}_{\mathcal{A}} = \left|\Pr[d' = d] - \frac{1}{2}\right|$.
\end{definition}

\begin{definition}[Discrete Logarithm (DL)]
The DL problem is $S$-hard in $\mathbb{G}$ if it has $S$ as a security profile in the following attack scenario:
\begin{enumerate}
\item $x \stackrel{\$}{\leftarrow} \mathbb{Z}_n^*$, $X \leftarrow x \cdot G$
\item $x' \gets \mathcal{A}(X)$
\item The attack is successful iff $x' = x$
\end{enumerate}
\end{definition}

We model $\mathcal{H}_1 \colon \mathbb{G} \times \mathbb{G} \times \mathcal{M} \to \mathbb{Z}_n$ as a hash function with standard collision-resistance and additional properties suitable for key derivation.

\subsubsection{Privacy Proofs}

\begin{theorem}[Transaction Identity Unlinkability]
\label{thm:transaction-unlinkability}
If the DDH problem is $S$-hard in $\mathbb{G}$ and $\mathcal{H}_1$ is modeled as a random oracle, then the protocol in Fig.~\ref{fig:protocol3} satisfies transaction identity unlinkability. Specifically, for any two transactions $\mathsf{tx}_i, \mathsf{tx}_j$ with $i \neq j$:
$$\text{Dist}[(P_{tx,i}, R_i)] \approx_c \text{Dist}[(U_1, U_2)]$$
where $U_1, U_2 \stackrel{\$}{\leftarrow} \mathbb{G}$ are uniformly random, with distinguishing advantage at most $\text{Adv}^{\text{DDH}}_{\mathcal{A}} + \frac{q_H}{n}$ where $q_H$ is the number of hash queries.
\end{theorem}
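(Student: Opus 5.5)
We prove the bound with a short sequence of games. Game~0 is the real distribution: the sender samples $r \gets \mathbb{Z}_n^*$, sets $R = r\cdot G$, $S = r\cdot P$, $s = \mathcal{H}_1(S\,\|\,R\,\|\,\mathsf{tx}_{prev})$, and outputs $(P_{tx}, R) = (P + s\cdot G, R)$. The persistent key $P = d\cdot G$ is public, $d$ is uniform, and $\mathcal{H}_1$ is a random oracle that the distinguisher may query. Game~2 outputs $(U_1, U_2)$ with both components uniform. The first component $R$ is already uniform in $\mathbb{G}\setminus\{\mathcal{O}\}$, since $r$ is uniform in $\mathbb{Z}_n^*$. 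Hence all the work is in showing that $P_{tx}$ looks uniform and independent of $R$. The uniqueness of $\mathsf{tx}_{prev}$, assumed in the paper, guarantees that distinct transactions $i \neq j$ hash distinct inputs. So the argument for one transaction extends to the pair, and the per-transaction analysis is the core.

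\textbf{Step 1 (DDH hop).} Game~1 replaces the shared secret $S = rd\cdot G$ by $c\cdot G$ with $c$ uniform. The reduction receives a DDH triple $(a\cdot G, b\cdot G, Z)$ and plays $P = a\cdot G$, $R = b\cdot G$, $S = Z$. It simulates $\mathcal{H}_1$ by lazy sampling (as in the PRF section), computes $s$ and $P_{tx}$, and outputs the distinguisher's guess. Its success is exactly the DDH advantage. Since the time overhead is linear, this gives the term $\text{Adv}^{\text{DDH}}_{\mathcal{A}}$, stated as an $S$-profile in the usual way via equation~(\ref{eq:losebound}).

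\textbf{Step 2 (random-oracle hop).} In Game~1, $S = c\cdot G$ is uniform and independent of everything the distinguisher sees. Unless the distinguisher queries $\mathcal{H}_1$ on the exact string $S\,\|\,R\,\|\,\mathsf{tx}_{prev}$, the value $s$ is a fresh uniform element of $\mathbb{Z}_n$. Each query hits $S$ with probability at most $1/n$ (up to the $n-1$ vs.\ $n$ normalization), so a union bound gives at most $q_H/n$. Conditioned on no hit, $s$ is uniform and independent of $(P, R)$. Then $P_{tx} = P + s\cdot G$ is a uniform group element independent of $R$, because translating by a uniform element of a cyclic group yields the uniform distribution. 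This is exactly Game~2. Summing the hops gives advantage at most $\text{Adv}^{\text{DDH}}_{\mathcal{A}} + q_H/n$.

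\textbf{Main obstacle.} The delicate step is the random-oracle hop. The reduction of Step~1 must keep oracle answers consistent with the real game while not knowing $\log Z$. Lazy sampling solves this, since the DDH reduction never needs to recognise $S$. A subtler issue is the bad event itself: in Game~0 a query on $rd\cdot G$ cannot be bounded statistically, because that value is computable in principle. This is why the DDH hop must come \emph{before} the hit-probability bound. Finally, small mismatches between $\mathbb{Z}_n^*$ and $\mathbb{Z}_n$ (and $R \neq \mathcal{O}$ versus uniform on all of $\mathbb{G}$) contribute $O(1/n)$ statistical terms. I would fold these into the $q_H/n$ term, assuming $q_H \ge 1$, or note them explicitly.
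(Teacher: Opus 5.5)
Your proposal is correct and follows the same route as the paper. It uses a DDH step that replaces the shared secret $r\cdot P$ by a random group element. It then uses a random-oracle step showing that $s$ is uniform unless the exact input $S\,\|\,R\,\|\,\mathsf{tx}_{prev}$ is queried (probability at most $q_H/n$), and concludes that $P_{tx}=P+s\cdot G$ is uniform by translation.

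The differences are presentational. You split the paper's single merged hop into a DDH hop followed by a bad-event hop. This makes explicit why the $q_H/n$ bound must be computed after the DDH switch, since in the real game a query on $rd\cdot G$ cannot be bounded statistically. You also track the $O(1/n)$ discrepancies from $\mathbb{Z}_n^*$ versus $\mathbb{Z}_n$ and from $R\neq\mathcal{O}$, which the paper silently drops.
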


\begin{proof}
For transaction $\mathsf{tx}_i$, the sender chooses $r_i \stackrel{\$}{\leftarrow} \mathbb{Z}_n^*$ and computes:
\begin{align*}
R_i &= r_i \cdot G \\
s_i &= \mathcal{H}_1(r_i \cdot P_i \,||\, R_i \,||\, \mathsf{tx}_{i-1}) \\
P_{tx,i} &= P_i + s_i \cdot G
\end{align*}

We use a game-based argument to show that $(P_{tx,i}, R_i)$ is indistinguishable from $(U_1, U_2)$ where $U_1, U_2 \stackrel{\$}{\leftarrow} \mathbb{G}$.

\textbf{Game 0:} The real transaction generation process as described above.

\textbf{Game 1:} Replace $s_i = \mathcal{H}_1(r_i \cdot P_i \,||\, R_i \,||\, \mathsf{tx}_{i-1})$ with a uniformly random $s_i \stackrel{\$}{\leftarrow} \mathbb{Z}_n^*$.

\noindent\textit{Claim:} $|\Pr[\text{Game 0}] - \Pr[\text{Game 1}]| \leq \text{Adv}^{\text{DDH}}_{\mathcal{A}} + \frac{q_H}{n}$.

Consider the tuple $(P_i, R_i, r_i \cdot P_i) = (d_i \cdot G, r_i \cdot G, r_i d_i \cdot G)$. Under the DDH assumption, this is computationally indistinguishable from $(d_i \cdot G, r_i \cdot G, c \cdot G)$ where $c \stackrel{\$}{\leftarrow} \mathbb{Z}_n^*$. Since $\mathcal{H}_1$ is modeled as a random oracle and $\mathsf{tx}_{i-1}$ is unique (by assumption), if the adversary does not query $\mathcal{H}_1(r_i \cdot P_i \,||\, R_i \,||\, \mathsf{tx}_{i-1})$, the value $s_i$ is uniformly random. The probability of guessing the correct input is at most $\frac{q_H}{n}$.

\textbf{Game 1 Analysis:} Since $r_i$ is chosen uniformly at random, $R_i = r_i \cdot G$ is uniformly distributed in $\mathbb{G}$. Given that $s_i \stackrel{\$}{\leftarrow} \mathbb{Z}_n^*$ is uniformly random and independent, $P_{tx,i} = P_i + s_i \cdot G$ is uniformly distributed in $\mathbb{G}$, since adding a uniformly random scalar times the generator to any group element yields a uniformly random element.

Therefore, in Game 1, $(P_{tx,i}, R_i)$ has the same distribution as two independent uniformly random group elements $(U_1, U_2)$.

Since this holds for each transaction independently (with fresh randomness $r_i$ for each transaction), the identity data in all transactions is unlinkable.
\end{proof}

\begin{theorem}[Forward Privacy]
\label{thm:forward-privacy}
If the discrete logarithm problem is $S$-hard in $\mathbb{G}$ and $\mathcal{H}_1$ is modeled as a random oracle, then for any PPT adversary $\mathcal{A}$:
$$\Pr\left[\mathcal{A}(d_{tx,i}, \{\mathsf{tx}_j\}_{j=1}^n) \rightarrow d_i\right] \leq \text{Adv}^{\text{DL}}_{\mathcal{A}} + \frac{q_H}{n}$$
where $d_{tx,i} = d_i + s_i \bmod n$ and $s_i = \mathcal{H}_1(d_i \cdot R_i \,||\, R_i \,||\, \mathsf{tx}_{i-1})$.
\end{theorem}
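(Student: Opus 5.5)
The plan is a two-game argument in the random oracle model, ending in a reduction to the discrete logarithm problem. Throughout, we assume the persistent key $P_i = d_i\cdot G$ and all transactions are public. Given $d_{tx,i}$, recovering $d_i$ is the same as recovering $s_i = d_{tx,i} - d_i \bmod n$. So the claim reduces to showing that the adversary cannot learn the hash value $s_i$ unless it either computes $d_i$ or queries $\mathcal{H}_1$ at the secret point $d_i\cdot R_i$.

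\textbf{Game 0} is the real experiment. The adversary receives $d_{tx,i}$ and $\{\mathsf{tx}_j\}_{j=1}^n$, which include $P_i$, $R_i$ and $P_{tx,i}$. It also has oracle access to $\mathcal{H}_1$.

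\textbf{Game 1} answers $\mathcal{H}_1$ by lazy sampling, as in the PRF section. It fixes $s_i \gets \mathbb{Z}_n$ in advance as the answer at the point $(d_i\cdot R_i \,\|\, R_i \,\|\, \mathsf{tx}_{i-1})$. Because $\mathsf{tx}_{i-1}$ is unique, this point is queried by no other transaction's derivation. Hence the games differ only in the event $\mathsf{Q}$ that the adversary itself queries this point. Everything visible to the adversary is then determined by $d_i$ and an independent uniform $s_i$:
\[
d_{tx,i} = d_i + s_i,\qquad P_{tx,i} = P_i + s_i\cdot G .
\]

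The DL reduction $B$ receives a challenge $X = x\cdot G$ and embeds it as $P_i := X$. It picks $d_{tx,i} \gets \mathbb{Z}_n$ uniformly and defines $s_i := d_{tx,i} - x$ implicitly. It sets $P_{tx,i} := d_{tx,i}\cdot G$, which equals $X + s_i\cdot G$ as required. The remaining transactions $\mathsf{tx}_j$ for $j\neq i$ are generated with fresh ephemeral keys. When the sender's randomness is needed, $B$ computes the relevant shared secret as $r\cdot X$, since it chooses $r$ itself.

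Random oracle queries are answered by lazy sampling. The one exceptional point would have to be answered with $s_i$, which $B$ does not know. Since $B$ chose $r_i$, it can recognise that point: it is a query whose first component equals $r_i\cdot X$ and whose remaining components are $R_i$ and $\mathsf{tx}_{i-1}$. If such a query occurs, $B$ aborts.

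Because $d_{tx,i}$ is uniform and independent of $X$, the simulation of Game 1 is perfect up to event $\mathsf{Q}$. Whenever the adversary outputs $d_i = x$, $B$ outputs it too, which solves the DL instance. This gives
\[
\Pr[\text{success}] \le \text{Adv}^{\text{DL}}_{\mathcal{A}} + \Pr[\mathsf{Q}],
\]
with the running-time overhead of $B$ linear in $t$, as elsewhere in the paper.

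\textbf{Main obstacle: bounding $\Pr[\mathsf{Q}]$.} The target point contains $x\, r_i\cdot G$. In Game 1, $B$ has lazily sampled every answer except the single point it cannot program, so its simulation is consistent. The question is how likely the adversary is to hit that point. Two routes are available:
\begin{itemize}
\item \emph{Information-theoretic guess.} Bound the probability of hitting the point by $q_H/n$, matching the stated bound. This requires arguing that the point is essentially unpredictable to the adversary, using that $x$ is uniform.
\item \emph{Computational bound.} Note that $\mathsf{Q}$ is recognisable by $B$: it knows $r_i$, so it can compute $x\cdot G$-related quantities and check a query's first component against $r_i \cdot X$. Then $\mathsf{Q}$ can be folded into the DL term. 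An adversary querying $x r_i\cdot G$ lets $B$ test candidates but does not directly yield $x$, so this step really needs CDH. Since CDH is implied by DL only via a further reduction, I would fall back on charging it as $q_H/n$, as the statement's bound does.
\end{itemize}
I would try the second route first. I expect the careful treatment of $\mathsf{Q}$ to be the delicate part of the proof.
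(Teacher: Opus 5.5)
Your DL reduction for the case where the special point is never queried is sound. It is in fact more careful than the paper's proof. The paper's Case~1 claims that $d_i$ is information-theoretically hidden when the adversary does not query the point, which overlooks that $P_i=d_i\cdot G$ is in the view. Your embedding $P_i:=X$, $d_{tx,i}\gets\mathbb{Z}_n$, $P_{tx,i}:=d_{tx,i}\cdot G$ correctly charges that case to DL.

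The genuine gap is the bound on $\Pr[\mathsf{Q}]$, and neither of your routes closes it.
\begin{itemize}
\item Route~1 fails because the point $d_i\cdot R_i$ is not information-theoretically unpredictable. It is uniquely determined by $P_i$ and $R_i$, both of which the adversary sees. So uniformity of $x$ gives nothing, and $q_H/n$ is not a valid bound on $\Pr[\mathsf{Q}]$.
\item Route~2 reduces $\mathsf{Q}$ to CDH. Hardness of DL is \emph{not} known to imply hardness of CDH in general prime-order groups; the known implication runs the other way (a DL solver yields a CDH solver). So there is no ``further reduction'' to invoke.
\end{itemize}
The paper's Case~2 breaks at exactly this point too: it asserts that CDH is ``at least as hard as'' DL, which is the wrong direction. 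Following it will not repair your argument.

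In fact the stated DL-only bound is out of reach for any argument of this kind. An adversary with a CDH oracle computes $d_i\cdot R_i$, makes one $\mathcal{H}_1$ query, obtains $s_i$, and outputs $d_{tx,i}-s_i$ with probability $1$. Proving the bound as stated would therefore turn a CDH solver into a DL solver. What your game structure does establish is
\[
\Pr[\text{success}]\le \text{Adv}^{\text{DL}}(B)+q_H\cdot\text{Adv}^{\text{CDH}}(C).
\]
Here $C$ works as follows:
\begin{itemize}
\item It receives a CDH challenge $(X,Y)$ and sets $P_i:=X$, $R_i:=Y$.
\item It samples $d_{tx,i}\gets\mathbb{Z}_n$ and sets $P_{tx,i}:=d_{tx,i}\cdot G$.
\item It answers all $\mathcal{H}_1$ queries by lazy sampling. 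Its simulation is identical to the real game up to the first occurrence of $\mathsf{Q}$.
\item It outputs the first component of a uniformly chosen query.
\end{itemize}
With a DDH oracle (Gap-DH), $C$ can recognise the query and the factor $q_H$ disappears. Since CDH hardness implies DL hardness, the first term can be charged to CDH as well. So the correct hypothesis for this theorem is CDH (or DL in the generic group model), not DL.

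A smaller caveat: your simulation of the remaining transactions works only if the view consists of their public fields $(P_{tx,j},R_j,\ldots)$. Signatures under the same recipient's other derived keys $X+s_j\cdot G$ could be produced neither by $B$ nor by $C$.
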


\begin{proof}
The recipient's transaction-specific private key is:
$$d_{tx,i} = d_i + s_i \bmod n$$
where $s_i = \mathcal{H}_1(d_i \cdot R_i \,||\, R_i \,||\, \mathsf{tx}_{i-1})$.

To recover $d_i$ from $d_{tx,i}$, the adversary must compute $s_i$, since:
$$d_i = d_{tx,i} - s_i \bmod n$$

Computing $s_i$ requires evaluating $\mathcal{H}_1(d_i \cdot R_i \,||\, R_i \,||\, \mathsf{tx}_{i-1})$, which in turn requires computing the shared secret $d_i \cdot R_i = d_i r_i \cdot G$.

The adversary observes:
\begin{itemize}
\item $R_i = r_i \cdot G$ (from the transaction)
\item $P_i = d_i \cdot G$ (possibly inferred or public, though unlinkable by Theorem~\ref{thm:transaction-unlinkability})
\item $d_{tx,i}$ (the leaked transaction key)
\end{itemize}

\textbf{Case 1:} The adversary does not query $\mathcal{H}_1(d_i \cdot R_i \,||\, R_i \,||\, \mathsf{tx}_{i-1})$.

Since $\mathcal{H}_1$ is a random oracle, without querying the correct input, $s_i$ appears uniformly random in $\mathbb{Z}_n$. Given $d_{tx,i} = d_i + s_i \bmod n$ where $s_i$ is unknown and uniform, the value $d_i$ is information-theoretically hidden. The adversary can succeed only by guessing among $q_H$ hash queries, with probability at most $\frac{q_H}{n}$.

\textbf{Case 2:} The adversary queries $\mathcal{H}_1(d_i \cdot R_i \,||\, R_i \,||\, \mathsf{tx}_{i-1})$.

To make this query, the adversary must compute $d_i \cdot R_i = d_i r_i \cdot G$ from $P_i = d_i \cdot G$ and $R_i = r_i \cdot G$. This is precisely the Computational Diffie-Hellman (CDH) problem, which is at least as hard as the discrete logarithm problem. If the adversary can compute $d_i \cdot R_i$, they can use the discrete logarithm oracle to extract $d_i$ from $P_i$ (or $r_i$ from $R_i$), succeeding with probability at most $\text{Adv}^{\text{DL}}_{\mathcal{A}}$.

Combining both cases, the adversary's advantage in recovering $d_i$ is bounded by $\text{Adv}^{\text{DL}}_{\mathcal{A}} + \frac{q_H}{n}$, which is negligible.
\end{proof}

\subsubsection{Remarks on Standard Model}

The proofs above use $\mathcal{H}_1$ as a random oracle. To achieve standard model security, we propose the following adjustment:

\textbf{Adjustment:} Replace the hash function $\mathcal{H}_1$ with a key derivation function (KDF) based on a pseudorandom function (PRF) family $\{f_k\}_{k \in \mathcal{K}}$ combined with randomness extraction. Specifically:
\begin{enumerate}
\item Use a randomness extractor $\text{Ext} \colon \mathbb{G} \to \{0,1\}^\lambda$ based on a strong randomness extractor (e.g., leftmost bits of the $x$-coordinate under suitable encoding).
\item Define $s \leftarrow \text{PRF}(\text{Ext}(r \cdot P), R \,||\, \mathsf{tx}_{prev})$ where PRF is a pseudorandom function.
\end{enumerate}

Under the DDH assumption, the value $r \cdot P$ is pseudorandom (given $r \cdot G$ and $P$), and thus $\text{Ext}(r \cdot P)$ serves as a pseudorandom key for the PRF. The PRF output is then pseudorandom, providing similar guarantees to the random oracle model.

With this adjustment, the proofs can be adapted to the standard model by replacing random oracle arguments with PRF indistinguishability under DDH, at the cost of slightly more complex security reductions and stronger assumptions on the randomness extractor properties.

\pagebreak
\appendix
\section{Detailed Transaction Sequence Diagram}

\begin{figure}[!h]
    \begin{center}
        \begin{sequencediagram}
            \newthread{S}{Sender}
            \newinst[3]{R}{Recipient}
            \newinst[2]{US}{Unicity Service}

            \begin{call}{R}{$(\pubkey', \prikey') \gets \keygen$}{R}{}
            \end{call}

            \begin{messcall}{R}{$\pubkey'$}{S}
            \end{messcall}

            \postlevel
            \postlevel
            \postlevel
            \begin{call}{S}{\shortstack[l]{
                    $x \gets \{0,1\}^\ell$     \\
                    $D = (\pubkey', x, \auxd')$ \\
                    $T = (\sthash, D)$         \\
                    $(\txhash, d) \gets \commit(H(D))$ \\
                    $h_T = H(\sthash, \txhash)$ \\
                    $\sigma \gets \sig(\prikey, h_T)$
                }
                }{S}{}
            \end{call}

            \begin{messcall}{S}{$Q = (\pubkey, \sthash, \txhash, \sigma)$}{US}
                \postlevel
                \begin{call}{US}{\shortstack[l]{
                    $\sigver(\pubkey, h_T, \sigma) \stackrel{?}{=} 1$ \\
                    $R[H(\pubkey, \sthash)] \stackrel{?}{=} \bot$ \\
                    $R[H(\pubkey, \sthash)] \gets \txhash$
                    }}{US}{}
                \end{call}
            \end{messcall}
            \prelevel
            \begin{messcall}{US}{$\pinc$}{S}
            \end{messcall}

            \begin{messcall}{S}{$\mathcal{L} \cup (T, \sigma, \txhash, d, \pinc)$}{R}
                \postlevel
                \postlevel
                \postlevel
                \begin{call}{R}{\shortstack[l]{
                    $T.\sthash \stackrel{?}{=} H(h^\textsf{prev}_\mathsf{st},x_\textsf{prev})$ \\
                    $\open(\txhash, d) \stackrel{?}{=} H(T.D)$ \\
                    $\sigver(\pubkey, H(\sthash, \txhash), \sigma) \stackrel{?}{=} 1$ \\
                    $\univer(H(\pubkey, T.\sthash), \txhash, \pinc) \stackrel{?}{=} 1$ \\
                    Accept transaction if all checks pass
                    }}{R}{}
                \end{call}
            \end{messcall}
        \end{sequencediagram}
        \caption{Detailed Unicity transaction flow.}\label{fi:unicity-transaction-full}
    \end{center}
\end{figure}

\begin{figure}[!h]
    \begin{center}
        \begin{sequencediagram}
            \newthread{S}{Sender $A$}
            \newinst[2.5]{R}{Recipient $B$}
            \newinst[2]{US}{Unicity Service}

            \postlevel
            \begin{call}{R}{\shortstack[l]{
                $y \gets \{0,1\}^\ell$     \\
                $\mpubkey' \gets H(\pubkey', y)$
            }}{R}{}
            \end{call}

            \begin{messcall}{R}{$\mpubkey'$}{S}
            \end{messcall}

            \postlevel
            \postlevel
            \begin{call}{S}{\shortstack[l]{
                    $D = (\mpubkey', \auxd')$ \\
                    $T = (\sthash, D)$ \\
                    $(\txhash, d) \gets \commit(H(D))$ \\
                    $\sigma \gets \sig(\prikey, H(\sthash, \txhash))$
                }}{S}{}
            \end{call}

            \begin{messcall}{S}{$Q = (\pubkey, \sthash, \txhash, \sigma)$}{US}
                \postlevel
                \begin{call}{US}{\shortstack[l]{
                    $\sigver(\pubkey, H(\sthash, \txhash), \sigma) \stackrel{?}{=} 1$ \\
                    $R[H(\pubkey, \sthash)] \stackrel{?}{=} \bot$ \\
                    $R[H(\pubkey, \sthash)] \gets \txhash$
                    }}{US}{}
                \end{call}
            \end{messcall}
            \prelevel
            \begin{messcall}{US}{$\pinc$}{S}
            \end{messcall}

            \begin{messcall}{S}{$(T, \sigma, \txhash, d, \pinc)$}{R}
                \postlevel
                \postlevel
                \begin{call}{R}{\shortstack[l]{
                    $\certver(T, \sigma, \txhash, d, \pinc; \pubkey, \sthash) \stackrel{?}{=} 1$ \\
                    $H(\pubkey', y) \stackrel{?}{=} T.D.\mpubkey'$ \\
                    $\sthash' \gets H(\sthash, y)$
                    }}{R}{}
                \end{call}
            \end{messcall}
        \end{sequencediagram}
        \caption{Masked Address Protocol (Interactive). Recipient generates masked address $\mpubkey' = H(\pubkey', y)$ to prevent sender from inferring next state.}\label{fi:masked-protocol}
    \end{center}
\end{figure}

\begin{figure}[!h]
    \begin{center}
        \begin{sequencediagram}
            \newthread{S}{Sender $A$}
            \newinst[2.5]{R}{Recipient $B$}
            \newinst[2]{US}{Unicity Service}

            \postlevel
            \postlevel
            \postlevel
            \begin{call}{S}{\shortstack[l]{
                    $D = (\pubkey', \auxd')$ \\
                    $T = (\sthash, D)$ \\
                    $(\txhash, d) \gets \commit(H(D))$ \\
                    $\sigma \gets \sig(\prikey, H(\sthash, \txhash))$
                }}{S}{}
            \end{call}

            \begin{messcall}{S}{$Q = (\pubkey, \sthash, \txhash, \sigma)$}{US}
                \postlevel
                \begin{call}{US}{\shortstack[l]{
                    $\sigver(\pubkey, H(\sthash, \txhash), \sigma) \stackrel{?}{=} 1$ \\
                    $R[H(\pubkey, \sthash)] \stackrel{?}{=} \bot$ \\
                    $R[H(\pubkey, \sthash)] \gets \txhash$
                    }}{US}{}
                \end{call}
            \end{messcall}
            \prelevel
            \begin{messcall}{US}{$\pinc$}{S}
            \end{messcall}

            \begin{messcall}{S}{$(T, \sigma, \txhash, d, \pinc)$}{R}
                \postlevel
                \postlevel
                \begin{call}{R}{\shortstack[l]{
                    $\certver(T, \sigma, \txhash, d, \pinc; \pubkey, \sthash) \stackrel{?}{=} 1$ \\
                    $(y, \pi) \gets \mathsf{VRF.Prove}(\prikey', \sthash)$ \\
                    $\sthash' \gets H(\sthash, y)$
                    }}{R}{}
                \end{call}
            \end{messcall}
        \end{sequencediagram}
        \caption{Non-Interactive Masked Protocol. Recipient derives deterministic mask $y$ via VRF without prior interaction.}\label{fi:non-interactive-masked}
    \end{center}
\end{figure}

\begin{figure}[!h]
    \begin{center}
        \begin{sequencediagram}
            \newthread{S}{Sender $A$}
            \newinst[2.5]{R}{Recipient $B$}
            \newinst[2]{US}{Unicity Service}

            \postlevel
            \postlevel
            \postlevel
            \begin{call}{S}{\shortstack[l]{
                    $r \gets \mathbb{Z}_q$, $R \gets [r]G$ \\
                    $S \gets [r]P_B$ \\
                    $s \gets H_{\mathsf{key}}(S, R, \sthash)$ \\
                    $P_{\mathsf{tx}} \gets P_B + [s]G$ \\
                    $D = (P_{\mathsf{tx}}, R, \auxd')$
                }}{S}{}
            \end{call}

            \postlevel
            \begin{call}{S}{\shortstack[l]{
                    $(\txhash, d) \gets \commit(H(D))$ \\
                    $\sigma \gets \sig(d_A, H(\sthash, \txhash))$
                }}{S}{}
            \end{call}

            \begin{messcall}{S}{$Q = (P_A, \sthash, \txhash, \sigma)$}{US}
                \postlevel
                \begin{call}{US}{\shortstack[l]{
                    $\sigver(P_A, H(\sthash, \txhash), \sigma) \stackrel{?}{=} 1$ \\
                    $R[H(P_A, \sthash)] \gets \txhash$
                    }}{US}{}
                \end{call}
            \end{messcall}
            \prelevel
            \begin{messcall}{US}{$\pinc$}{S}
            \end{messcall}

            \begin{messcall}{S}{$(T, \sigma, \txhash, d, \pinc)$}{R}
                \postlevel
                \postlevel
                \begin{call}{R}{\shortstack[l]{
                    $S \gets [d_B]R$ \\
                    $s \gets H_{\mathsf{key}}(S, R, \sthash)$ \\
                    $P_B + [s]G \stackrel{?}{=} P_{\mathsf{tx}}$ \\
                    $d_{\mathsf{tx}} \gets d_B + s$
                    }}{R}{}
                \end{call}
                \postlevel
                \begin{call}{R}{\shortstack[l]{
                    $(y, \pi) \gets \mathsf{VRF.Prove}(d_{\mathsf{tx}}, \sthash)$ \\
                    $\sthash' \gets H(\sthash, y)$
                    }}{R}{}
                \end{call}
            \end{messcall}
        \end{sequencediagram}
        \caption{Combined Protocol with Unlinkable Keys. Recipient's permanent public key is $P_B$. Sender derives unlinkable transaction key $P_{\mathsf{tx}} = P_B + [s]G$ using ECDH. Recipient derives private key and mask via VRF.}\label{fi:combined-protocol}
    \end{center}
\end{figure}

\end{document}